\documentclass[12pt]{article}

\usepackage{rotfloat}
\RequirePackage[english]{babel}
\RequirePackage[shortlabels]{enumitem}
\RequirePackage[mathscr]{euscript}
\RequirePackage[margin=1.25in]{geometry}
\RequirePackage[utf8]{inputenc}
\RequirePackage{titlesec}
\RequirePackage[dvipsnames]{xcolor}
\RequirePackage{adjustbox,amsbsy,amsmath,amssymb,amsthm,anyfontsize,array,bm,caption,commath,chngcntr,dsfont,econometrics,etoolbox,fancyhdr,gensymb,graphicx, hhline, IEEEtrantools,longtable,marginnote,mathrsfs,mathtools,mdframed,musicography,natbib, parskip,pgf,setspace,subfig,tabularx,textcomp,tikz}

\let\savedegree\degree
\let\degree\relax
\RequirePackage{mathabx}
\let\degree\savedegree

\RequirePackage{authblk}

\RequirePackage[bookmarks=true,%
                colorlinks=true,%
                citecolor=ucladarkestblue,%
                linkcolor=darkpastelred,%
                urlcolor=ucladarkestblue,
                pdftoolbar=false,%
                pdfmenubar=true]{hyperref}

\RequirePackage{cleveref}

\renewcommand{\E}{\mathbb{E}}

\newcommand{\andbox}{\hbox{ }\text{ and }\hbox{ }}

\DeclareMathOperator{\Cov}{Cov}

\DeclareMathOperator{\Var}{Var}

\theoremstyle{definition}

\crefname{condition}{condition}{conditions}
\newtheorem*{condition*}{Condition}
\newtheorem{definition}{Definition}
\crefname{definition}{definition}{definitions}
\newtheorem*{definition*}{Definition}
\newtheorem{example}{Example}
\crefname{example}{examples}{examples}
\newtheorem*{example*}{Example}
\newtheorem{remark}{Remark}
\crefname{remark}{remark}{remarks}
\newtheorem*{remark*}{Remark}

\crefname{specification}{specification}{specifications}
\newtheorem*{specification*}{Specification}

\crefname{solution}{solution}{solutions}
\newtheorem*{solution*}{Solution}

\theoremstyle{plain}
\newtheorem{assumption}{Assumption}
\crefname{assumption}{assumption}{assumptions}
\newtheorem*{assumption*}{Assumption}
\newtheorem{corollary}{Corollary}
\crefname{corollary}{corollary}{corollaries}
\newtheorem*{corollary*}{Corollary}
\newtheorem{lem}{Lemma}
\crefname{lem}{lemma}{lemmas}
\newtheorem*{lem*}{Lemma}
\newtheorem{prop}{Proposition}
\crefname{prop}{proposition}{propositions}
\newtheorem*{prop*}{Proposition}
\newtheorem{thm}{Theorem}
\crefname{thm}{theorem}{theorems}
\newtheorem*{thm*}{Theorem}

\newtheoremstyle%
 {bluethm}%
 {}{}%
 {\color{BrickRed}}
 {}%
 {\color{ucladarkblue}\bfseries}%
 {\color{ucladarkblue}.}%
 { }{}
\theoremstyle{bluethm}

\crefname{todo}{todo}{todos}
\newtheorem*{todo*}{Todo}

\numberwithin{assumption}{section}
\numberwithin{corollary}{section}
\numberwithin{definition}{section}
\numberwithin{equation}{section}
\numberwithin{example}{section}
\numberwithin{figure}{section}
\numberwithin{lem}{section}
\numberwithin{prop}{section}
\numberwithin{remark}{section}
\numberwithin{solution}{section}
\numberwithin{specification}{section}
\numberwithin{thm}{section}

\counterwithin*{footnote}{section}

\apptocmd{\sloppy}{\hbadness 10000\relax}{}{}

\definecolor{uclablue}{HTML}{2774AE}
\definecolor{ucladarkblue}{RGB}{0, 85, 135}
\definecolor{ucladarkestblue}{RGB}{0, 59, 92}
\definecolor{uclagold}{RGB}{255, 184, 0}
\definecolor{uclamidnight}{RGB}{0, 39, 61}
\definecolor{pastelred}{HTML}{d62d0e}
\definecolor{darkpastelred}{RGB}{194, 59, 34}
\definecolor{pastelgreen}{RGB}{2,117,36}

\newcommand{\red}[1]{\textcolor{pastelred}{#1}}

\newcommand{\bflarrow}{\vphantom{\big[}\longleftarrow\,}
\newcommand{\flarrow}{\,\longrightarrow}

\title{Inference under First-Order Degeneracy\thanks{Emails: \href{mailto:xinyue.bei@austin.utexas.edu}{xinyue.bei@austin.utexas.edu} and \href{mailto:mnavjeevan@tamu.edu}{mnavjeevan@tamu.edu}. We thank Zheng Fang, Matt Masten, Andres Santos, and seminar participants at the UT Austin Econometrics Writing Seminar for their helpful comments. Mengyuan Jiang provided excellent research assistance.}}
\author[1]{Xinyue Bei}
\author[2]{Manu Navjeevan}
\affil[1]{The University of Texas at Austin}
\affil[2]{Texas A\&M University}

\date{\today}

\allowdisplaybreaks
\begin{document}

\maketitle 

\thispagestyle{empty}
\begin{abstract}
    \noindent
    We study inference in models where a transformation of parameters exhibits first-order degeneracy --- that is, its gradient is zero or close to zero, making the standard delta method invalid. A leading example is causal mediation analysis, where the indirect effect is a product of coefficients and the gradient degenerates near the origin. In these local regions of degeneracy the limiting behaviors of plug-in estimators depend on nuisance parameters that are not consistently estimable. We show that this failure is intrinsic --- around points of degeneracy, both regular and quantile-unbiased estimation are impossible. Despite these restrictions, we develop minimum-distance methods that deliver uniformly valid confidence intervals. We establish sufficient conditions under which standard chi-square critical values remain valid, and propose a simple bootstrap procedure when they are not. We demonstrate favorable power in simulations and in an empirical application linking teacher gender attitudes to student outcomes.
    \vspace{1em}
    
    \text{Keywords:}  Delta Method, Higher Order Asymptotics, Impossibility, Minimum Distance Inference
    
    \vspace{1em}
    
    \text{JEL Codes:} C12, C13, C18
\end{abstract}

\section{Introduction}
\label{sec:Intro}
The delta method is a fundamental tool in econometric analysis for deriving the asymptotic distribution of smooth functions of estimators. In its standard form, the delta method relies on a non-zero gradient of the smooth function with respect to the  primitive parameter. When this condition holds, first-order linearization provides an accurate approximation to the sampling distribution. However, in many empirically relevant scenarios,  the gradient may be zero at certain points in which case higher-order terms become essential.  As a result, the limiting distribution near points of degeneracy typically differs substantially from that obtained under standard regularity. A leading example arises in causal mediation analysis, where the indirect effect is the product of two primitive parameters: the effect of the treatment on the mediator and the
effect of the mediator on the outcome. When these effects are zero, the gradient of the indirect effect degenerates, leading to a nonstandard limiting distribution of the Wald statistic \citep{sobel1982asymptotic}. 

In practice,  the researcher does not know whether the true parameter lies near such points of degeneracy, and thus cannot know which asymptotic approximation is appropriate for inference. This challenge has motivated a number of recent papers on hypothesis testing when the gradient may be degenerate, see, for example, \cite{van_garderen_nearly_2024} for an analysis of the mediation model mentioned above and \citet{Dufour2025WaldTW} in a more general treatment of Wald-type statistics. These papers acknowledge discontinuities in  limiting distributions at the point of degeneracy, and analyze the resulting distortions in Wald-type statistics. Yet, they do not propose a unified asymptotic framework for studying local regions of degeneracy.
Moreover, most existing papers are interested in specific point hypotheses, leaving  open the broader question of how to construct uniformly valid confidence intervals. 

This paper makes two main contributions to the literature. Our first contribution is a formal asymptotic framework  for studying the behavior of statistics in local regions of first-order degeneracy. In our framework, the primitive parameter is modeled as local to the point of degeneracy, in the spirit of weak identification asymptotics, where identifying information is local to zero. Under this setup, the behavior of simple plug-in estimators becomes nonstandard and depends on local parameters that cannot be consistently estimated, formally capturing an observation in the existing literature that the standard delta method does not properly approximate the behavior of plug-in estimators near degenerate points \citep{miller2024testing}. 

Leveraging Le Cam's limit of experiments framework \citep{lecam1970assumptions,le1972limits}, we show that the problem of estimating a smooth function under degeneracy  is asymptotically equivalent to estimating a quadratic form of the shift parameter in a Gaussian shift model. Within this model, we show that equivariant-in-law and quantile-unbiased estimators cannot be obtained.  Translated back to the original estimation problem, these results imply that regular estimation is impossible --- the limiting behavior of \emph{any properly scaled and centered estimator} in local regions of degeneracy depends on nuisance parameters that cannot be consistently estimated. Moreover, there do not exist asymptotically similar confidence intervals for the transformation of interest in local regions of degeneracy.

Our second contribution is to construct confidence intervals that are both uniformly valid and exhibit favorable power in local regions of degeneracy compared to the few existing alternatives. 
The impossibility results mentioned above imply that standard delta-method based confidence intervals may not be uniformly valid in such regions. Indeed, in the context of mediation analysis our simulation study shows that standard Wald-statistic based tests lead to confidence intervals that undercover when the true primitive parameter is near the origin. 
We thus take a different approach and propose confidence intervals based on test inversion with a minimum-distance test statistic. We first show that when the parameter dimension is two, the standard chi-square critical value is uniformly valid under either of two conditions: (i) the curvature of the null curve is not too large, or (ii) the two branches of the null curve are sufficiently close. These conditions hold in the leading mediation example. In more general settings, we propose a bootstrap critical value based on a quadratic approximation of the test statistic, and show that when the true parameter is well separated from the point of degeneracy, the bootstrap  critical value is nearly identical to the efficient one.

We demonstrate the empirical relevance of our results in both simulation study and real world. In simulation study our proposed methods are shown to control size uniformly over the parameter space while standard Wald-based inference can overreject when the true primitive parameter is close to points of degeneracy, a finding also noted by \citet{Dufour2025WaldTW}. We additionally demonstrate favorable power properties of our proposed methods when compared to a method proposed by \citet{AM-2016-Geometric}, which turns out to also be applicable in this setting.\footnote{In their original analysis, \citet{AM-2016-Geometric} were interested in testing non-linear restrictions in the context of weak identification.} These improvements in power are also seen in an application to the data of \citet{AEM-2018}, who consider the effect of teachers' gender attitudes on student outcomes. In revising their mediation analysis, we find that our proposed methods deliver tighter confidence intervals than existing methods for all parameters. Our results support a conclusion by \citet{van_garderen_nearly_2024} that the mediation effect of a one-year exposure to a teacher with traditional negative views is negative, while existing methods cannot rule out an effect of size zero.

The rest of this paper proceeds as follows. This section concludes with a review of the related literature. \Cref{sec:Examples} gives examples of when first-order degeneracy may be a concern. \Cref{sec:Impossibility} formally establishes the impossibility results mentioned above and discusses implications for hypothesis testing. \Cref{sec:Inference} introduces the minimum distance based inference procedures and discusses how uniformly valid critical values may be constructed. \Cref{sec:Simulation,sec:Empirical} contain, respectively, the simulation study and empirical application to the data of \citet{AEM-2018}. \Cref{sec:Conclusion} concludes. Proofs are deferred to \Cref{sec:Imposs-Proofs,sec:Inference-Proofs}.

\subsection{Literature Review}

Our paper is related to previous literature on econometrics and statistics studying inference under degeneracy, statistical impossibility results, and testing non-linear restrictions.

There is a growing literature examining hypothesis tests in which the null includes points of singularity. 
\citet{gaffke1999asymptotic} show that the distribution of the Wald statistic at points of degeneracy is nonstandard, and \citet{gaffke2002asymptotic}  derive its asymptotic distribution under a variety of singular null hypotheses. \citet{drton2016wald} demonstrate the conservativeness of the Wald test at degeneracy points for quadratic forms and for bivariate monomials of arbitrary degree. \citet{dufour2016rank} propose rank-robust regularized Wald-type tests allowing
for singular covariance matrices. \citet{Dufour2025WaldTW} analyze Wald tests for polynomial restrictions with possibly multiple constraints and show that such tests can under-reject, over-reject, or even diverge under the null; see also \citet{Dufour2025WaldTW} for additional references in this area. 
Our paper differs from this literature in two key ways. First, prior work focuses on testing problems where the null itself contains the singularity, while our interest lies in constructing uniformly valid confidence
intervals when the null may be near a singularity. In simulations, we show that the upper bound on the Wald statistic derived in \citet{Dufour2025WaldTW} does not, in general, yield valid confidence intervals. 
Second, instead of using a Wald statistic, we employ a minimum distance–based test statistic, which is bounded in probability by construction. This approach avoids the divergence issues documented in \citet{Dufour2025WaldTW}.

Our paper is also related to the hypothesis testing problem with a curved null. \citet{AM-2016-Geometric} study this problem and show that the distribution of minimum-distance statistics is dominated by a tractable distribution that depends only on the maximal curvature of the null manifold relative to the known variance matrix. Inverting their test leads to uniformly valid confidence intervals. However, when the curvature of the null hypothesis is large, for example, in testing the significance of an indirect effect, their procedure yields critical values that are close to those from projection-based methods. By contrast, our procedure exploits the possibility that the null hypothesis may include multiple manifolds that are close to one another, which in turn reduces the critical value.

Finally, our paper contributes to the econometric literature on statistical impossibility results. In particular, it is related to work by \citet{hirano2012impossibility} who show that regular estimation of directionally, but not fully, differentiable functions is unattainable. Our paper takes a similar approach to that of \citet{hirano2012impossibility} in that we rule out properties of estimators by analyzing a limiting experiment \citep{lecam1970assumptions,le1972limits}. However, the target functional in our limit experiment is distinct from that of \citet{hirano2012impossibility}. This approach of ruling out behaviors by analyzing limit experiments has also been utilized successfully by \citet{kaji2021theory} and \citet{andrewsgmm} in the study of weak identification. The present work is also related to work by \citet{chen2019inference} who show that all standard bootstrap procedures necessarily fail at points of degeneracy, similarly to how \citet{fang2019inference} establish that the bootstrap necessarily fails as an inference procedure for the functionals considered in \citet{hirano2012impossibility}. 

\section{Overview and Examples}
\label{sec:Examples}
Consider a parameter \(\theta \in \Theta \subseteq \SR^d\) and a twice continuously differentiable function \(g: \Theta \to \SR\). We are interested in inference on \(g(\theta)\) in local neighborhoods of a point \(\theta_\star\) for which \(\nabla g(\theta_\star) = 0\).
Below, we give some empirically relevant examples of when such a phenomenon may occur.

\begin{example}[Mediation Analysis]
    \label{ex:mediation}
    Consider a causal mediation analysis with parameter \(\theta = (\theta_1,\theta_2)'\), where \(\theta_1\) represents the effect of a treatment variable on a mediator and \(\theta_2\) represents the effect of the mediator on the outcome. The indirect effect of the treatment on the outcome is then given by \(g(\theta) = \theta_1\theta_2\). At \(\theta_\star = (0,0)'\), we have \(\nabla g(\theta_\star) = 0\), which complicates inference on \(g(\theta)\) in local regions of \(\theta_\star\). As a result, recent works have proposed tests for the specific null-alternate pair, \(H_0: g(\theta) = 0\) against \(H_1: g(\theta) \neq 0\), see \citet{van_garderen_nearly_2024} or \citet{hillier_improved_2024}. However, these works do not consider the more general problem of constructing confidence intervals in local regions of the origin. \qed
\end{example}

\begin{example}[Impulse Response Function]
    \label{ex:impulse-response}
    Consider an  autoregressive \(\text{AR}(1)\) model of the form \(y_t = \theta y_{t-1} + u_t\) where \(y_t,y_{t-1} \in \SR\), \(\theta \in \SR\), and  \(u_t \in \SR\) is a white noise process. The ``impulse response function'' is defined as \(g(\theta) = \theta^h\) and measures the impact at time period \(h\) of an initial shock. Due to the importance of this in macroeconomic analysis, inference on \(g(\theta)\) has received attention from the econometric literature \citep{IK-2002, Gospodinov-2004, Mikusheva-2012},  mostly related to inference when \(\theta\) is close to one --- the so-called ``unit-root'' problem. However, due to degeneracy, inference  on the impulse response function can also be complicated when \(\theta\) is close to \(\theta_\star = 0\) as  \(\frac{\partial }{\partial \theta} g(\theta_\star) = h\theta_\star^{h-1} = 0\) \citep{benkwitz2000problems,lutkepohl2013introduction}.
    \qed
\end{example}

\begin{example}[Breakdown Point Analysis]
    \label{ex:sq-hellinger}
    Consider a missing data setup in which the researcher observes \(\{Y_iD_i, D_i, X_i\}_{i=1}^n\), where \(D_i \in \{0,1\}\) represents whether or not an observation's outcome $Y_i$ is observed, and \(X_i\) is a set of discrete covariates, i.e., \(X_i \in \calX \coloneqq \{x_1,\dots,x_{K}\}\). To achieve identification of parameters of interest, assumptions are typically made about the selection mechanism, such as ``missing conditionally at random'', i.e, \(Y_i \perp D_i \mid X_i\). \citet{ober2026robustness} proposes assessing the robustness of results to these assumptions through a breakdown point analysis. This assessment involves generating a confidence interval for the squared Hellinger distance between \(P_0\), the distribution of \(X_i \mid D_i = 0\), and \(P_1\), the distribution of \(X_i \mid D_i = 1\). Since \(X_i\) is discrete, this distance can be written as
     \[
        g(\theta) = H^2(P_0,P_1) = \frac{1}{2}\sum_{k=1}^K (\sqrt{\theta_{0,k}} - \sqrt{\theta_{1,k}})^2
    \]
    where \(\theta_{d,k} = \Pr(X = x_k \mid D = d)\). Simple sample analog estimators of  \(\theta_{0,k}\) and \(\theta_{1,k}\) are \(\sqrt{n}\)-consistent and asymptotically normal under mild assumptions. However, the derivative of \(g(\theta)\) with respect to the quantities \(\theta_{0,k}\) and \(\theta_{1,k}\) are given by
    \begin{align*}
        \frac{\partial}{\partial\, \theta_{0,k}}g(\theta)  
        = \frac{1}{2}\frac{\sqrt{\theta_{0,k}} - \sqrt{\theta_{1,k}}}{\sqrt{\theta_{0,k}}}, 
        \;\;\;\;\;\;\;\;\;\;
        \frac{\partial}{\partial \theta_{1,k}} g(\theta)
        = -\frac{1}{2}\frac{\sqrt{\theta_{0,k}} - \sqrt{\theta_{1,k}}}{\sqrt{\theta_{1,k}}}.
    \end{align*}
    Let \(\theta_\star\) be a point such that \(\theta_{0,k} = \theta_{1,k}\) for all \(k\). This occurs if the data is missing completely at random, that is \(D \perp (Y,X)\). At \(\theta_\star\), the derivatives above are uniformly equal to zero and the squared Hellinger distance \(g(\theta)\) between \(P_0\) and \(P_1\) is zero. Thus, when \(\theta\) is close to \(\theta_\star\), that is when \(P_0\) is close to \(P_1\), standard approaches to inference on  \(g(\theta)\) will fail.
    \qed
\end{example}

\begin{example}[Weak IV Bias and Size Distortion]
    \label{ex:weakiv} 
    Consider a standard homoskedastic linear IV model, 
    \begin{equation*}       
        \begin{split}
            y_i &= x_i\beta + \eps_i \\ 
            x_i &= z_i'\theta + v_i
        \end{split}
    \end{equation*}
    where \(y_i, x_i \in \SR\), \(\E[(\eps_i, v_i)'] = 0\), and \(Z = (z_i',\dots,z_n')' \in \SR^{n \times d_z}\) is treated as fixed. When \(\theta\) is close to zero, identification is referred to as ``weak'' and it is well known that standard inference procedures for \(\beta\) fail to control size  \citep{staiger1994instrumental}. \citet{StockYogo-2005} provide bounds on the size distortion of Wald tests for \(\beta\) in terms of the concentration parameter, 
    \[
        g(\theta) = \theta'(Z'Z)\theta/\sigma_v^2.
    \]
    \citet{GIR-2021} extend this analysis and develop confidence intervals for the bias and size distortion. In both papers, the researcher makes inferences about the concentration parameter by examining the distribution of the \(F\)-statistic, a scaled version of \(g(\hat\theta)\), where \(\hat\theta\) is the OLS estimate of \(\theta\).
    The analyses of both \citet{StockYogo-2005} and \citet{GIR-2021} are complicated by the fact that the limiting distribution of \(g(\hat\theta)\) is non-standard when \(\theta\) is close to zero, that is, when identification is weak. This can also be seen as inference in local regions of degeneracy --- at \(\theta_\star = 0\) we have that \(\nabla g(\theta_\star) = 2(Z'Z)\theta/\sigma_v^2 = 0\).
    \qed
\end{example}

\begin{example}[Explained Variance in Linear Regression]
    \label{ex:reg-var}
    Consider a linear regression model,
    \begin{equation*}
        Y = X'\theta + \eps, \;\;\; \E[\eps X] = 0
    \end{equation*}
    and define \(\sigma_Y^2 = \text{Var}(Y)\) and \(\Sigma_X = \E[XX']\). A parameter of interest is the proportion of variance in \(Y\) explained by the linear model with \(X\), i.e
    \begin{equation*}
        g(\theta) = \theta'\Sigma_X \theta/\sigma_Y^2,
    \end{equation*}
    that is, the population $R^2$. Although empirical work typically reports only a point estimate of $R^2$, reporting a confidence set for $R^2$ is informative for comparing the explanatory power or predictive performance of competing models \citep{hawinkel2024out}. When $R^2$ is bounded away from zero and one, standard errors and confidence intervals can be obtained using conventional asymptotic approximations \citep{cohen2013applied}. However, at \(\theta_\star = 0\) we have that \(\nabla g(\theta) = 2\Sigma_X\theta_\star/\sigma_Y^2 = 0\). Consequently, inference on the explained variance is non-standard when \(\theta\) is close to zero, or equivalently, when $R^2$ is close to zero.

    This type of parameter is also  of interest in labor economics when explaining variation in wage regressions. If a model under consideration can only explain a weak amount of variation in wage dispersion, we may expect \(\theta\) to be close to zero.
   \citet{CHK-2013} compare the baseline \citet*{AKM-1999} (AKM) model with various extensions in terms of each models ability to explain increases in wage inequality in West Germany. They find that these extensions provide little explanatory power on top of the baseline AKM model. The additional variance explained by these extensions corresponds to the linear regression model with \(Y\) equal to the residual from the AKM model and \(X\) equal to the new fixed-effect terms introduced by the extended models. They find that these new fixed-effect terms are close to zero suggesting that degeneracy may be a concern when conducting inference on \(g(\theta)\).\qed
\end{example}
%
%

\section{Impossibility Results}
\label{sec:Impossibility}
In this section we establish that standard approaches to inference on $g(\theta)$ necessarily fail in local regions  of first-order degeneracy. We begin in \Cref{sec:Impossibility}\red{.1} by introducing a parametric framework to study this problem and defining what it means for an estimator to be ``regular''  in this setting. \Cref{sec:Impossibility}\red{.2} then uses a representation theorem to show that the problem reduces to estimation of quadratic forms in a Gaussian shift experiment, where we prove that well-behaved estimators cannot be constructed. \Cref{sec:Impossibility}\red{.3} extends the analysis in two directions: first, to hypothesis testing problems where the null hypothesis that \(g(\theta) = g(\theta_\star)\) may hold on a nontrivial subset of the parameter space, and second, to infinite-dimensional models, where we show that the impossibility results remain valid so long as the model contains a suitable parametric submodel. Together, these results demonstrate that standard approaches to inference
necessarily break down in local regions of degeneracy.

\subsection{Preliminaries}

We begin by assuming that the researcher observes data \(X^{(n)} = (X_1,\dots,X_n)\) drawn from a parametric model \(P_{n,\theta}\),
\begin{equation}
    \label{eq:data-draw}
    \begin{split}
        X^{(n)} \sim P_{n,\theta}
    \end{split}
\end{equation}
where \(\theta \in \Theta \subset \SR^d\) and \(\Theta\) is a compact set with a nonempty interior, \(\Theta^\circ \neq \emptyset\). Let \(\calX_i\) denote the support of \(X_i\), which could be a general space, and denote \(\calX^n = \bigtimes_{i=1}^n \calX_i\). We assume that the sequence of statistical models \(\left(P_{n,\theta}: \theta \in \Theta^\circ\right)\), indexed by the sample size \(n\), is locally asymptotically normal in the sense of \citet{lecam1960lan}. 
\begin{assumption}[Local Asymptotic Normality]
    \label{assm:lan}
    There exists a sequence \(r_n \to \infty\) such that for every \(\theta \in \Theta^\circ\) and every sequence \(h_n \to h \in \SR^d\)
    \begin{equation}
        \label{eq:lan0}
        \begin{split}
            \log \left(\frac{dP_{n,\theta + h_n/r_n}}{dP_{n,\theta}}(X^{(n)})\right) = h'\Delta_n - \frac{1}{2}h'\Gamma_\theta h + Z_n(h)
        \end{split}
    \end{equation}
    where \(\Delta_n\) converges in distribution to \(N(0,\Gamma_\theta)\) under the sequence of measures \(P_{n,\theta}\), \(\Delta_n \overset{\theta}{\rightsquigarrow} N(0,\Gamma_\theta)\), and \(Z_n(h)\) converges in probability to zero under \(P_{n,\theta}\) for every \(h \in \SR^d\), \(Z_n(h) \xrightarrow p 0\).
\end{assumption}

\begin{example}[Smooth Parametric Models]
    \label{ex:smooth-parametric}
    A leading example of a model that satisfies \eqref{eq:lan0} is when the researcher observes i.i.d data, \(X_i \overset{iid}{\sim} P_\theta\) where \(\theta \in \Theta\). 
    Assume that there exists a dominating measure \(\mu\) such that \(P_\theta \ll \mu\) for all \(\theta \in \Theta^\circ\) and the Radon-Nikodym densities \(p_\theta = dP_\theta/d\mu\) are differentiable in quadratic mean, that is there is a function \(\dot\ell_\theta\) such that for any \(\theta \in \Theta^\circ\),
    \begin{equation}
    \label{eq:DQM}
            \int \left[\sqrt{p_{\theta + h}} - \sqrt{p_\theta} - \frac{1}{2}h'\dot\ell_\theta \sqrt{p_\theta}\right]^2d\mu = o(\|h\|^2),\;\;\; h \to 0
    \end{equation}
    and such that the Fisher information, \(\Gamma_\theta = P_\theta\dot\ell_\theta\dot\ell_\theta'\), is nonsingular. Let  \(P_{n,\theta} = \otimes_{i=1}^n P_\theta\), then, for any \(\theta \in \Theta^\circ\) and any \(h_n \to h \in \SR^d\), the sequence of log likelihood ratios satisfies (\cite{vanDerVaart1998}, Theorem 7.2):
    \begin{equation*}
        \begin{split}
            \log\left(\frac{dP_{n,\theta + h/\sqrt{n}}}{dP_\theta}(X^{(n)})\right) 
            &= \log \prod_{i=1}^n \frac{p_{\theta + h/\sqrt{n}}}{p_\theta}(X_i) \\
            &= \frac{1}{\sqrt{n}}\sum_{i=1}^n h'\dot\ell_\theta(X_i) - \frac{1}{2} h'\Gamma_\theta h + R_n(h),
        \end{split}
    \end{equation*}
   where \(R_n(h) = o_{P_{n,\theta}}(1)\) for all \(h \in \SR^d\). By the central limit theorem, \(\frac{1}{\sqrt{n}}\sum_{i=1}^n \dot\ell_\theta(X_i) \overset{P_{n,\theta}}{\rightsquigarrow} N(0, \Gamma_{\theta})\). Thus, by letting \(\Delta_n = \frac{1}{\sqrt{n}}\sum_{i=1}^n \dot\ell_\theta(X_i)\) we see that \Cref{assm:lan} is satisfied with \(r_n = \sqrt{n}\). 
    \qed
\end{example}

We study a twice continuously differentiable scalar functional \(g:\Theta \to \SR\), and the behavior of estimators of \(g(\theta)\) in local regions of a point \(\theta_\star \in \Theta^\circ\) which is such that the first-order derivatives of \(g(\cdot)\) at \(\theta_\star\) are zero. We will refer to \(\theta_\star\) as the ``point of degeneracy'' and local neighborhoods of \(\theta_\star\) as ``local regions of degeneracy.''

\begin{assumption}[Differentiability]
    \label{assm:differentiability}
    The function \(g:\Theta \to \SR\) is twice continuously differentiable on \(\Theta\), a compact subset of \(\SR^d\), with \(\nabla g(\theta_\star) = 0\) and \(\nabla^2g(\theta_\star) \neq 0\) for some \(\theta_\star \in \Theta^\circ\).
\end{assumption}

Given the maintained assumption of a locally asymptotically normal model, it is useful to examine regions close to \(\theta_\star\) by adopting a local parameterization around \(\theta_\star\), defining
\[
    \theta_{n, h} = \theta_\star + h/r_n
\]
and letting \(P_{n,h} = P_{n, \theta_\star + h/r_n}\). In our framework an estimator is an arbitrary measurable function of the data, \(\Psi_n: \calX^n \to \SR\). We consider sequences of estimators, \(\Psi_n\), that converge in distribution under every sequence of alternative distributions \(P_{n,h}\) to some limiting law, \(\calL_h\). This is denoted
\begin{equation}
    \label{eq:ncov}
    r_n^2\left(\Psi_n - g(\theta_{n,h})\right) \overset{h}{\rightsquigarrow} \calL_h.
\end{equation}
where we note that the convergence rate is \(r_n^2\) instead of \(r_n\) due to the fact that \(g(\theta)\) is ``flat'' around \(\theta_\star\). It is straightforward to show that tests for \(g(\theta)\) based on estimators whose convergence rates are slower than \(r_n^2\) when \(\theta\) is close to \(\theta_\star\) have trivial power against local alternatives of the form \(g(\theta_\star + h/r_n)\).

\begin{example}[Plug-In Estimators]
    \label{ex:plug-in}
    Suppose the researcher has access to an estimator \(\hat\theta\) of \(\theta\) that satisfies
    \[
        r_n(\hat\theta - \theta_{n,h}) \overset{h}{\rightsquigarrow} \calW
    \]
    for every \(h \in \SR^d\). In the smooth parametric models described in \Cref{ex:smooth-parametric}, such an estimator could be the maximum likelihood estimator or a Bayes estimator such as the posterior mean. Since \(g(\cdot)\) is assumed to be twice continuously differentiable, the limiting behavior of the plug-in estimator \(g(\hat\theta)\) can be found via the second order delta method 
    \[
        r_n^2(g(\hat\theta) - g(\theta_{n,h})) \overset{h}{\rightsquigarrow} h'\nabla^2g(\theta_\star)\calW + \frac{1}{2}\calW'\nabla^2g(\theta_\star)\calW
    \]
   The behavior of the plug in estimator, \(g(\hat\theta)\), depends on the local parameter \(h\). \qed
\end{example}

We focus on ruling out regular and locally asymptotically \(\alpha\)-quantile unbiased estimation of \(g(\theta_{n,h})\) in local regions of \(\theta_\star\). Formally, these notions are defined as follows.

\begin{definition}[Regularity]
    \label{def:regularity}
    Let $\Psi_n$ be an estimator satisfying \eqref{eq:ncov}, and let $\alpha \in (0,1)$.
    \begin{enumerate}[(i)]
        \item $\Psi_n$ is \emph{regular} if its limiting distribution does not depend on $h$, i.e.\ there exists a distribution $\calL$ on $\mathbb{R}$ such that $\calL_h = \calL$ for all $h \in \mathbb{R}^d$.
        \item $\Psi_n$ is \emph{locally asymptotically $\alpha$-quantile unbiased} if its limiting $\alpha$-quantile is zero for every $h$, i.e.\ $\calL_h\{(-\infty,0]\} = \alpha$ for all $h \in \mathbb{R}^d$.
    \end{enumerate}
\end{definition}
The existence of regular estimators is closely tied to the validity of Wald-type inference procedures --- without regular estimators, standard Wald-type inference procedures that compare test statistics to fixed critical values will not have correct asymptotic size. Similarly, the existence of locally asymptotically \(\alpha\)-quantile unbiased estimators is closely tied to the existence of asymptotically similar confidence intervals for \(g(\theta)\) in local regions of \(\theta_\star\). Since any asymptotically similar confidence interval of the form \((-\infty, \hat c]\) can be converted into a locally asymptotically \(\alpha\)-quantile unbiased estimator by taking \(\Psi_n = \hat c\), by ruling out such estimators we also rule out the possibility of similar one-sided confidence intervals.\footnote{The focus on one-sided confidence intervals is largely for simplicity of exposition. If \((-\infty, \hat c_1]\) and \([\hat c_2, \infty)\) are two asymptotically similar confidence intervals for \(g(\theta)\) each with coverage rate \(1 - \alpha/2\) and \(\hat c_1 \leq \hat c_2\) with probability approaching one, then \([\hat c_2, \hat c_1]\) is asymptotically similar with coverage rate \(1 - \alpha\).}

\subsection{Analysis in the Limiting Experiment}

To examine the possible behavior of such estimators, we make use of a representation result, given below in \Cref{thm:lim-exp}, which is a slight adaptation of Theorem 8.3 in \citet{vanDerVaart1998}. This earlier result is, in turn, a version of Le Cam's limit of experiments analysis for locally asymptotically normal models \citep{lecam1970assumptions,le1972limits}.

\begin{prop}[Limit Experiment]
    \label{thm:lim-exp}
    Suppose \Cref{assm:lan} holds, and let $\Psi_n$ be a sequence of estimators satisfying \eqref{eq:ncov}. Then there exists a randomized statistic $\Psi(Z,U)$, where $Z$ is drawn from the Gaussian shift experiment 
    \[
        Z \sim N(h, \Gamma^{-1}_{\theta_\star}), \quad h \in \mathbb{R}^d, 
    \]
    and $U \sim \mathrm{Unif}(0,1)$ independent of $Z$, such that
    \[
        \Psi(Z,U) - \tfrac{1}{2} h^\top \nabla^2 g(\theta_\star) h \;\sim\; \calL_h \quad \text{for all } h \in \mathbb{R}^d.
    \]
\end{prop}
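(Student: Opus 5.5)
The idea is to reduce the claim to the standard asymptotic representation theorem (Theorem 8.3 of \citet{vanDerVaart1998}) by recentring the estimator at the fixed point $g(\theta_\star)$ rather than at the moving target $g(\theta_{n,h})$. Define
\[
T_n = r_n^2\bigl(\Psi_n - g(\theta_\star)\bigr).
\]
Because $T_n$ does not depend on $h$, it is a genuine statistic, to which the representation theorem applies.

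\textbf{Step 1: second-order expansion of the target.} By \Cref{assm:differentiability}, $g$ is $C^2$ and $\nabla g(\theta_\star)=0$, so Taylor's theorem with the mean-value form of the remainder gives
\[
r_n^2\bigl(g(\theta_\star+h/r_n)-g(\theta_\star)\bigr)=\tfrac12 h'\nabla^2 g(\tilde\theta_n)h \to \tfrac12 h'\nabla^2 g(\theta_\star)h,
\]
where $\tilde\theta_n$ lies between $\theta_\star$ and $\theta_\star + h/r_n$. The convergence uses continuity of $\nabla^2 g$ and $\theta_\star\in\Theta^\circ$, which guarantees $\theta_{n,h}\in\Theta$ for all large $n$.

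\textbf{Step 2: limit law of the recentred statistic.} Combining \eqref{eq:ncov} with Step 1 via Slutsky's lemma,
\[
T_n \overset{h}{\rightsquigarrow} \calL_h * \delta_{\frac12 h'\nabla^2 g(\theta_\star)h}.
\]
That is, $T_n$ converges under every $h$ to the law of $W_h+\tfrac12 h'\nabla^2 g(\theta_\star)h$, where $W_h\sim\calL_h$.

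\textbf{Step 3: apply the representation theorem.} Under \Cref{assm:lan} at $\theta_\star$ with nonsingular $\Gamma_{\theta_\star}$, the local experiments $(P_{n,h}:h\in\SR^d)$ converge to the Gaussian shift experiment $\bigl(N(h,\Gamma_{\theta_\star}^{-1}):h\in\SR^d\bigr)$. Theorem 8.3 of \citet{vanDerVaart1998}, or its randomized-statistic form (Theorem 7.10), applied to the statistic $T_n$ and the parameter map $h\mapsto 0$, then yields a measurable $\Psi(Z,U)$ with $U\sim\mathrm{Unif}(0,1)$ independent of $Z$ such that $\Psi(Z,U)$ has the limit law of $T_n$ under $h$, for every $h$. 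Subtracting the deterministic term $\tfrac12 h'\nabla^2 g(\theta_\star)h$ gives exactly the claimed identity $\Psi(Z,U)-\tfrac12 h'\nabla^2 g(\theta_\star)h\sim\calL_h$.

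\textbf{Main obstacle.} The only delicate point is checking that the hypotheses of the cited theorem are met. First, that theorem is usually stated for i.i.d.\ DQM models with $r_n=\sqrt n$, whereas here we have a general LAN sequence with rate $r_n$. Its proof uses only the convergence of the likelihood ratio process, which \Cref{assm:lan} delivers; via Le Cam's third lemma and tightness of $T_n$ under $h=0$, which holds since it converges there, the argument goes through verbatim. Second, the representation requires $T_n$ to converge under every $h$, which is precisely what Step 2 supplies, and requires nothing about how the target varies with $h$.
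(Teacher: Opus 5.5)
Your proposal is correct and follows the paper's proof essentially step for step. The paper also recentres at $g(\theta_\star)$ via $S_n = r_n^2(\Psi_n - g(\theta_\star))$ and uses the second-order Taylor expansion with $\nabla g(\theta_\star)=0$ to get $S_n \overset{h}{\rightsquigarrow} \calL_h + \tfrac12 h'\nabla^2 g(\theta_\star)h$. It then invokes Theorem 7.10 of \citet{vanDerVaart1998} for the randomized statistic. Your remark on extending that theorem from i.i.d.\ DQM models to the general LAN sequence of \Cref{assm:lan} makes explicit a point the paper passes over silently.
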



\Cref{thm:lim-exp} establishes an equivalence between estimating \(g(\theta)\) in local regions of first-order degeneracy and estimating of a quadratic form of the mean parameter in a Gaussian shift model in which one observes a single draw \(Z \sim N(h, \Gamma_{\theta_\star}^{-1})\), where \(\Gamma_{\theta_\star}^{-1}\) is known but \(h\) is not. In particular, in a spirit similar to the approach in \citet{hirano2012impossibility}, we can rule out sufficiently regular behavior of estimators of \(g(\theta)\) in local regions of \(\theta_\star\) if the corresponding behavior is not permissible in the Gaussian shift model. Intuitively, sufficiently regular estimation of quadratic forms in the Gaussian shift experiment is not possible since the parameter of interest changes non-linearly as the mean parameter \(h\) varies over \(\SR^d\) while the distribution of \(Z\) changes in a linear fashion.

To illustrate, suppose that there was an estimator, \(\Psi(Z,U)\), and law, \(\calL\) with \(\int x^2\,d\calL(x) < \infty\), such that \(\Psi(Z,U) - \frac{1}{2}h'\nabla^2 g(\theta_\star)h\) is distributed according to \(\calL\) for all \(h \in \SR^d\). Since any such estimator can be turned into an unbiased estimator by subtracting off the mean of \(\calL\), it is without loss of generality to assume that \(\calL\) is mean zero and thus that \(\Psi(Z,U)\) is unbiased. On the other hand, the Cramér-Rao lower bound for the variance of any unbiased estimator of \(\frac{1}{2}h'\nabla^2g(\theta_\star)h\) in the Gaussian shift model yields
\begin{align}
    \label{eq:var-contradiction}
    \Var(\Psi(Z,U)) \geq h'(\nabla^2 g(\theta_\star))'\Gamma_{\theta_\star}^{-1}(\nabla^2 g(\theta_\star))h.
\end{align}
By letting \(h\) vary over \(\SR^d\), the right hand side of \eqref{eq:var-contradiction} can be made arbitrarily large while the left hand side is bounded by the second moment of \(\calL\). Thus, no such estimator can exist. Our full argument relies on analyzing characteristic functions, but the intuition is similar.  

\begin{remark}[]
    \label{rem:hirano-comparasion}
    It is instructive to compare the argument sketched above to the argument of \citet{hirano2012impossibility}, who rule out regular estimation of \(g(\theta)\) when \(g\) is directionally, but not fully differentiable at a point \(\theta_\star\). The \citet{hirano2012impossibility} argument relies on analyzing the behavior of a potential regular estimator as the local parameter \(h\) approaches zero. Our arguments, on the other hand, rule out regular estimation by analyzing the ``global'' behavior of a potential regular estimator, that is, the behavior as \(h\) varies over \(\SR^d\). The approach taken by \citet{hirano2012impossibility} does not apply in the present setting as the parameter of interest in the limit experiment is non-linear but continuously differentiable at zero.
    In contrast, in the limit experiment of \citet{hirano2012impossibility} the parameter of interest is a function \(\kappa(h)\) which is exactly linear around values of \(h \neq 0\), but is not continuously differentiable at zero. The argument of \citet{hirano2012impossibility} is able to additionally rule out locally unbiased estimation whereas in our setting locally unbiased estimation is possible. \qed
\end{remark}

\begin{prop}[]
    \label{prop:lim-exp2}
    Let \(Z \sim N(h, \Gamma_{\theta_\star}^{-1})\) and \(U \sim \mathrm{Unif}(0,1)\) independently of \(Z\). Let \(J\) be a \(d \times d\) non-zero, symmetric matrix.  
    \begin{enumerate}
        \item There is no randomized statistic \(\Psi(Z,U)\) and law \(\calL\) on \(\SR\) with \(\Psi(Z,U) - h'Jh \overset{h}{\sim} \calL\) for all \(h \in \SR^d\).
        \item Let \(\{\calL_h\}_{h \in \SR^d}\) be a system of probability measures on \(\SR\) such that (i) \(\calL_h\{(-\infty,0]\} =\alpha\) for some \(\alpha \in (0,1)\) and (ii) the CDFs associated with \(\calL_h\), \(F_h(\cdot)\), are differentiable at zero with derivative bounded below by some \(\eps > 0\). Then, there does not exist a randomized statistic \(\Psi(Z,U)\) such that \(\Psi(Z,U) - h'Jh \sim \calL_h\) for all \(h \in \SR^d\).
\end{enumerate}
\end{prop}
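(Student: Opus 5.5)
Both parts rest on one structural fact about the Gaussian shift experiment. For any bounded measurable $\phi$, the map $h \mapsto \E_h[\phi(\Psi(Z,U))]$ is real-analytic, indeed entire, in $h$. It equals $\int\!\int \phi(\Psi(z,u))\,du\,\varphi_{\Gamma^{-1}}(z-h)\,dz$, and the Gaussian density is entire in $h$ with exponential-moment domination. Moreover, $\E_h[\phi]$ remains bounded in $h \in \SR^d$ by $\sup|\phi|$.

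\textbf{Part 1.} We use characteristic functions. Suppose $\Psi - h'Jh \sim \calL$ for all $h$, and let $\varphi_\calL$ be the characteristic function of $\calL$. Then for each $t \in \SR$,
\[
\E_h[e^{it\Psi(Z,U)}] = e^{it h'Jh}\varphi_\calL(t).
\]
Fix $t$ small enough that $\varphi_\calL(t) \neq 0$; this is possible by continuity at zero. Since $J \neq 0$ is symmetric, pick a unit eigenvector $v$ with eigenvalue $\lambda \neq 0$, and restrict to $h = sv$ with $s \in \SR$. The right side becomes $c\, e^{it\lambda s^2}$, where $c \ne 0$.

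The left side, as a function of $s$, extends to an entire function of $s \in \mathbb{C}$. Bounding it there, $|\E_{sv}[\phi]|$ at complex $s = a+ib$ is at most $\sup|\phi| \cdot \E_{av}[e^{b\, v'\Gamma (Z - av)}]\,e^{\text{quadratic in } b}$, so it has growth of order at most $e^{C|b|^2}$ uniformly in $a$. On the other hand, analytic continuation forces the left side to equal $c\,e^{it\lambda s^2}$ on all of $\mathbb{C}$. At $s = a+ib$ this has modulus $|c|e^{-2t\lambda ab}$, which is unbounded in $a$ for fixed $b \neq 0$ of the appropriate sign. That contradicts the uniform-in-$a$ bound.

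An alternative finishing route avoids the growth estimate. Write the left side as the convolution $(\phi\circ\Psi)*\varphi_{\Gamma^{-1}}$ and take Fourier transforms in $h$. The Fourier transform of a Gaussian convolution is the Fourier transform of $\phi\circ\Psi$ times a Gaussian decay factor. The Fourier transform of the chirp $e^{it\lambda s^2}$, however, is a chirp of constant modulus, with no Gaussian decay. Dividing out the Gaussian factor would require the Fourier transform of a bounded function to grow like $e^{C\xi^2}$, which is impossible for a tempered distribution. I would present whichever of the two routes is cleaner. The complex-growth route is the main obstacle, namely checking the uniform $e^{C b^2}$ bound carefully.

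\textbf{Part 2.} Take $\phi = \mathbf{1}\{\Psi \le h'Jh\}$; this is not a fixed $\phi$, so argue directly. Define $G(h, c) = P_h(\Psi \le c)$. This is entire in $h$ for each fixed $c$, with all derivatives in $h$ bounded uniformly in $c$; for instance, $|\partial_{h_j} G| \le \E|(\Gamma(Z-h))_j|$, a constant. The hypothesis gives $G(h, h'Jh) = \alpha$ for all $h$.

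Again restrict to $h = sv$ and write $q(s) = \lambda s^2$. Differentiating along $s$ between the nearby points $s$ and $s+\delta$ gives
\[
0 = [G(s+\delta, q(s+\delta)) - G(s, q(s+\delta))] + [G(s, q(s+\delta)) - G(s, q(s))].
\]
The first bracket is at most $M|\delta|$ in absolute value, with $M$ independent of $s$. The second equals $F_s\big(q(s+\delta)-q(s)\big) - F_s(0)$, where $F_s$ is the CDF of $\calL_{sv}$ shifted appropriately. By the derivative lower bound $\eps$ at zero, this is at least roughly $\eps\,|2\lambda s\delta|$ for small $\delta$. Hence $\eps \cdot 2|\lambda s| \le M$ for all $s$, a contradiction as $s \to \infty$.

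The delicate point is uniformity of the small-$\delta$ approximation in the second bracket: differentiability of $F_s$ at zero is only pointwise in $s$. I would handle it by taking $\delta \to 0$ for each fixed $s$ separately, obtaining the derivative inequality $\eps|2\lambda s| \le M$ at each $s$, and only then letting $s \to \infty$.
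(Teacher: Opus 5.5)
Your proposal is correct. Part 2 is essentially the paper's argument. The paper differentiates $r\mapsto P_{(1+r)h}(\Psi\le(1+r)^2h'Jh)$ at $r=0$, obtaining $h'\Gamma\,\mathbb{E}_h[\mathbf{1}\{\Psi\le h'Jh\}(Z-h)]=2(h'Jh)F_h'(0)$. It then rescales $h\mapsto ch$, so the left side is $O(c)$ and the right side is of order $c^2$. You do the same along $s\mapsto sv$. Your mean-value bound on the first bracket, uniform in the threshold, handles the moving threshold more cleanly than the paper's appeal to Hirano and Porter's uniform-integrability lemma.

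Part 1 genuinely differs. The paper stays in real variables. With $\Phi_h(t)=\mathbb{E}_h[e^{it\Psi}]$, it shows that $\nabla_h^2\Phi_h(t)=\Gamma\,\mathbb{E}_h[e^{it\Psi}(Z-h)(Z-h)']\Gamma-\Phi_h(t)\Gamma$ is bounded uniformly in $h$. By contrast, the Hessian of $\varphi_{\mathcal{L}}(t)e^{ith'Jh}$ contains $-4t^2\varphi_{\mathcal{L}}(t)e^{ith'Jh}(Jh)(Jh)'$, which is unbounded along rays with $Jh\neq0$.

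Your entire-function route is also valid, and the estimate you flagged holds. The modulus of the continued kernel is exactly $e^{-\frac12(z-av)'\Gamma(z-av)}e^{\frac12 b^2v'\Gamma v}$, because the $b$-part of the cross term is purely imaginary. This gives $|F(a+ib)|\le e^{\frac12 b^2 v'\Gamma v}$ uniformly in $a$. Do state $t\neq0$ explicitly. The route yields a stronger global growth statement, at the price of holomorphy of parameter integrals and the identity theorem.

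None of that is needed, though. The uniform Lipschitz bound you already use in Part 2, $|\partial_s\mathbb{E}_{sv}[\phi]|\le\sup|\phi|\,\mathbb{E}_{sv}|v'\Gamma(Z-sv)|=\sup|\phi|\sqrt{2v'\Gamma v/\pi}$, finishes Part 1 at once, because $|\partial_s(c\,e^{it\lambda s^2})|=2|t\lambda c|\,|s|\to\infty$. That one-derivative argument is shorter than both your complex route and the paper's Hessian computation. It also shows both parts are the same phenomenon: Gaussian smoothing caps how fast $h\mapsto\mathbb{E}_h[\phi]$ can change, while the target $h'Jh$ changes at a rate that grows with $\|h\|$.
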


Together, \Cref{thm:lim-exp,prop:lim-exp2} can be combined for the main result of this section, which rules out sufficiently regular estimation in local areas of first-order degeneracy.\footnote{In the application of \Cref{prop:lim-exp2} to our setting, take \(J = \frac{1}{2}\nabla^2 g(\theta_\star)\). The result in \Cref{prop:lim-exp2} rules out well-behaved estimation of any quadratic form of the shift parameter, not just those associated with the Hessian of \(g(\cdot)\).} 

\begin{thm}[Impossibility of Regular Estimation]
    \label{thm:imposs-main}
    Suppose \Cref{assm:lan,assm:differentiability} hold. 
    \begin{enumerate}
        \item
        There is no estimator sequence $\Psi_n$ and law \(\calL\) on \(\SR\) such that
        \[
            r_n^2\big(\Psi_n - g(\theta_{n,h})\big) \;\overset{h}{\rightsquigarrow}\; \calL
            \quad \text{for all } h \in \mathbb{R}^d.
        \]
        \item
        Let $\{\calL_h\}_{h \in \mathbb{R}^d}$ be a family of distributions such that (i) $\calL_h\{(-\infty,0]\} = \alpha$ for some fixed $\alpha \in (0,1)$ and all $h$, and (ii) the CDFs, $F_h(\cdot)$, of $\mathcal{L}_h$ are differentiable at zero with derivatives bounded below by $\epsilon>0$.  
        Then there is no estimator sequence $\Psi_n$ such that
        \[
            r_n^2\big(\Psi_n - g(\theta_{n,h})\big) \;\overset{h}{\rightsquigarrow}\; \calL_h
        \quad \text{for all } h \in \mathbb{R}^d.
        \]
    \end{enumerate}
\end{thm}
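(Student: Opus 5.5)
The plan is a proof by contradiction that reduces the theorem to the limit experiment. We combine \Cref{thm:lim-exp} with \Cref{prop:lim-exp2}, taking \(J = \frac{1}{2}\nabla^2 g(\theta_\star)\). This \(J\) is symmetric because \(g\) is twice continuously differentiable. It is non-zero by \Cref{assm:differentiability}.

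\textbf{Part 1.} Suppose an estimator sequence \(\Psi_n\) and a law \(\calL\) exist with \(r_n^2(\Psi_n - g(\theta_{n,h})) \overset{h}{\rightsquigarrow} \calL\) for every \(h\). Then \(\Psi_n\) satisfies \eqref{eq:ncov} with \(\calL_h = \calL\), so \Cref{assm:lan} and \Cref{thm:lim-exp} apply. They yield a randomized statistic \(\Psi(Z,U)\), with \(Z \sim N(h,\Gamma_{\theta_\star}^{-1})\) and \(U\) an independent uniform, such that \(\Psi(Z,U) - h'Jh \sim \calL\) for all \(h \in \SR^d\). This is exactly what part 1 of \Cref{prop:lim-exp2} rules out, which gives the contradiction.

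One bookkeeping point concerns the centering. \Cref{thm:lim-exp} is stated with the centering \(\frac{1}{2}h'\nabla^2 g(\theta_\star)h\). This is the limit of \(r_n^2(g(\theta_{n,h}) - g(\theta_\star))\), by a second-order Taylor expansion together with \(\nabla g(\theta_\star)=0\) and continuity of the Hessian. The statistic in the proposition is implicitly built from \(r_n^2(\Psi_n - g(\theta_\star))\), so the two centerings agree up to \(o(1)\) terms, which Slutsky's lemma absorbs. If that step is already inside \Cref{thm:lim-exp}, we simply cite it.

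\textbf{Part 2.} The argument is the same. A sequence \(\Psi_n\) with limits \(\calL_h\) satisfies \eqref{eq:ncov}. \Cref{thm:lim-exp} then gives \(\Psi(Z,U)\) with \(\Psi(Z,U) - h'Jh \sim \calL_h\) for every \(h\). The family \(\{\calL_h\}\) is assumed to satisfy hypotheses (i) and (ii) of part 2 of \Cref{prop:lim-exp2}: the \(\alpha\)-quantile is at zero, and the CDF derivatives at zero are bounded below by \(\eps\). So that part of the proposition forbids such a \(\Psi\), which is again a contradiction.

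\textbf{Main obstacle.} Given the earlier results, the only delicate step is checking that the hypotheses of \Cref{thm:lim-exp} are met. That result requires \(\Theta^\circ\) to contain \(\theta_\star\), so that LAN holds at \(\theta_\star\) and \(\theta_{n,h}\in\Theta\) for large \(n\); this is guaranteed by \Cref{assm:differentiability}. It also requires convergence along every \(h\), which is part of the hypothesis of the theorem. The rest is direct citation.
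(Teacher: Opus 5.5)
Your proposal is correct and matches the paper's proof, which likewise obtains the theorem directly by feeding the limit-experiment representation of \Cref{thm:lim-exp} into \Cref{prop:lim-exp2} with $J = \frac{1}{2}\nabla^2 g(\theta_\star)$. The centering issue you flag is already handled inside the proof of \Cref{thm:lim-exp}, which works with $r_n^2(\Psi_n - g(\theta_\star))$ and applies the second-order Taylor expansion, so nothing further is needed.
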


\Cref{thm:imposs-main} rules out sufficiently well-behaved estimation of \(g(\theta)\) when the true parameter is ``close'' to \(\theta_\star\). In particular, \Cref{thm:imposs-main}(a) rules out the possibility of regular estimation --- the properly scaled and centered behavior of any estimator \(\Psi_n\) of \(g(\theta)\) must depend, in local regions of \(\theta_\star\), on the local parameter \(h\), which cannot be consistently estimated. Similarly, \Cref{thm:imposs-main}(b) rules out the possibility of \(\alpha\)-quantile unbiased estimation. As mentioned below \Cref{def:regularity}, this result has profound implications for inference on \(g(\theta)\) in local regions of \(\theta_\star\). In particular, both asymptotically exact Wald-type inference procedures and asymptotically similar confidence intervals for \(g(\theta)\) are unavailable in local regions of degeneracy \(\theta_\star\).

\Cref{thm:imposs-main}(a) also has implications for the construction of efficient estimators of \(g(\theta)\) in local regions of \(\theta_\star\). Standard notions of efficiency are tied to comparing the asymptotic risk of regular estimators. \Cref{thm:imposs-main}(a) shows that, after properly scaling, such regular estimators are not available in local regions of degeneracy. Consequently, alternative notions of efficiency must be considered in these settings and standard estimators may not be optimal in these regions. As an example, one can show that estimators of \(g(\theta)\) that are efficient under a standard asymptotic regime can be dominated by alternative estimators in local asymptotic mean squared error around points of degeneracy. 

\begin{remark}
As with the impossibility of locally asymptotically \(\alpha\)-quantile unbiased estimation for directionally but not fully differentiable parameters established in \citet{hirano2012impossibility}, the result in \Cref{thm:imposs-main} requires some regularity conditions on the system of limiting laws \(\{\calL_h\}_{h \in \SR^d}\).\footnote{Let \(F_0(\cdot)\) be the CDF associated with \(\calL_0\). \citet{hirano2012impossibility} show that, for any \(\alpha\)-quantile unbiased estimate, it must be the case that either \(F_0(\cdot)\) is not differentiable at zero or must satisfy \(F_h'(0) = 0\).} The regularity condition in \Cref{thm:imposs-main}(b) implies that, if a locally asymptotically \(\alpha\)-quantile unbiased estimator were to exist, its associated limiting laws \(\calL_h\) must be able to be made \emph{arbitrarily} flat.  In particular, if each limiting law \(\calL_h\) has a density with respect to Lebesgue measure, these densities evaluated at zero, which is by definition the \(\alpha\)-quantile of each \(\calL_h\), must be able to be made arbitrarily small. As an example, suppose that \(\{\calL_h\}_{h \in \SR^d}\) is a family of Gaussian distributions on \(\SR\) associated with a locally asymptotically \(\alpha\)-quantile unbiased estimator. Then, the variance of these Gaussian distributions must be able to become arbitrarily large as \(h\) ranges over \(\SR^d\). 
\qed
\end{remark}

\subsection{Hypothesis Testing and Infinite Dimensional Models}
The above analysis rules out standard approaches to inference on \(g(\theta)\) in local regions of \(\theta_\star\). These results are informative when one is interested in constructing confidence intervals for \(g(\theta)\) around points of first-order degeneracy when the data is drawn from a parametric model satisfying \Cref{assm:lan}. In this subsection, we consider two extensions of our results. In the first, we consider the somewhat simpler problem of testing the null hypothesis \(H_0: g(\theta) = g(\theta_\star)\). We show that if the null hypothesis contains a sufficiently rich set of values and a similar test exists, this similar test must have low power in local regions of \(\theta_\star\). In the second extension we generalize \Cref{thm:imposs-main} to infinite dimensional, i.e, semiparametric or nonparametric, models.

\subsubsection{Hypothesis Testing}

In this subsection we consider the problem of testing the null hypothesis \(H_0: g(\theta) = g(\theta_\star)\) where the alternative can be one sided, i.e, \(H_1: g(\theta) > g(\theta_\star)\) or two-sided, \(H_1: g(\theta) \neq g(\theta_\star)\). To setup, define \(\calH_\star\) to be the set of local parameters, \(h \in \SR^d\), such that \(g(\theta_{n,h})\) is asymptotically indistinguishable from \(g(\theta_\star)\), i.e, \(r_n^2(g(\theta_{n,h}) - g(\theta_\star)) \to 0\);
\[
    \calH_\star = \{h \in \SR^d: h'\nabla^2g(\theta_\star)h = 0\}.
\]
We study the behavior of asymptotically similar tests in local regions of degeneracy. In this setup, an asymptotically similar test is a statistic \(\Xi_n: \calX^n \to \{0,1\}\) such that 
\[
    \limsup_{n\to\infty} P_{\theta_{n,h}}(\Xi_n = 1) = \alpha \quad \text{for all } h \in \calH_\star.
\]
Equivalently, the test is similar if \(\limsup_{n\to\infty }P_{\theta_{n,h}}(1 - \Xi_n \leq 0) = \alpha\). Letting \(\Psi_n = 1 - \Xi_n\) it is apparent that this is a nearly identical requirement to that of local \(\alpha\)-quantile unbiasedness in \Cref{def:regularity}, with the key difference being that the requirement only needs to hold for local parameters \(h \in \calH_\star\) rather than for all \(h \in \SR^d\). 

However, unlike quantile unbiased estimation, which is ruled out in \Cref{thm:imposs-main}, asymptotically similar tests can exist --- one can imagine constructing a similar test by flipping a weighted coin.  Such a test, though, may not be powerful against local alternatives close to \(\theta_\star\). Our main result in this subsection establishes this formally: if such test exists then its local asymptotic power curve must be flat at \(\theta_\star\) in the sense that the derivative of the local asymptotic power curve with respect to the local parameter \(h\) exists and is equal to zero. 

Define the local asymptotic power curve as
\begin{equation}
    \label{eq:local-power-curve}
    \calP(h) = \limsup_{n\to\infty} P_{\theta_{n,h}}(\Xi_n = 1)
\end{equation}
\begin{prop}[]
    \label{thm:impossibility-flat}
    Let \(\Xi_n\) be an asymptotically similar test such that \(\calP(h)\) is differentiable at \(h = 0\). Then, the directional derivative of the local asymptotic power curve, \(\calP(h)\), in directions \(h \in \calH_\star\) is equal to zero:
    \[
        D_h\calP(0) = 0\;\;\text{ for all }\; h \in \calH_\star.
    \]
    In particular, if \(\nabla^2 g(\theta_\star)\) is indefinite then \(\calH_\star\) spans \(\SR^d\) and \(\nabla\calP(0)\) is equal to zero.
\end{prop}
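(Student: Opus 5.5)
The plan is to use that $\calH_\star$ is a cone. Similarity then forces $\calP$ to be constant along every ray in $\calH_\star$, and differentiability converts this into vanishing directional derivatives. Write $J = \nabla^2 g(\theta_\star)$, so $\calH_\star = \{h : h'Jh = 0\}$.

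\textbf{Step 1.} By the definition of an asymptotically similar test and of $\calP$ in \eqref{eq:local-power-curve}, $\calP(h) = \limsup_{n} P_{\theta_{n,h}}(\Xi_n = 1) = \alpha$ for every $h \in \calH_\star$. In particular $0 \in \calH_\star$, so $\calP(0) = \alpha$.

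\textbf{Step 2.} If $h \in \calH_\star$ and $t \in \SR$, then $(th)'J(th) = t^2 h'Jh = 0$, so $th \in \calH_\star$. Hence $\calP(th) = \alpha$ for all $t$. Since $\calP$ is differentiable at $0$, the directional derivative exists and satisfies
\[
D_h\calP(0) = \lim_{t \to 0} \frac{\calP(th) - \calP(0)}{t} = \lim_{t\to 0}\frac{\alpha - \alpha}{t} = 0 .
\]
Differentiability also gives $D_h\calP(0) = \nabla\calP(0)'h$. Therefore $\nabla \calP(0)$ is orthogonal to every element of $\calH_\star$, and hence to $\mathrm{span}(\calH_\star)$.

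\textbf{Step 3 (indefinite case).} It remains to show that $\calH_\star$ spans $\SR^d$ when $J$ is indefinite; then $\nabla\calP(0) = 0$ follows from Step 2. Diagonalize $J = \sum_i \lambda_i e_i e_i'$ with orthonormal eigenvectors $e_i$.
\begin{enumerate}
\item Eigenvectors with $\lambda_i = 0$ lie in $\calH_\star$ directly.
\item Indefiniteness gives at least one index $j$ with $\lambda_j < 0$. For any $i$ with $\lambda_i > 0$, let $w_\pm = \sqrt{-\lambda_j}\, e_i \pm \sqrt{\lambda_i}\, e_j$. Then
\[
w_\pm' J w_\pm = (-\lambda_j)\lambda_i + \lambda_i\lambda_j = 0,
\]
so $w_\pm \in \calH_\star$. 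Moreover $e_i = (w_+ + w_-)/(2\sqrt{-\lambda_j})$ and $e_j = (w_+ - w_-)/(2\sqrt{\lambda_i})$ lie in the span.
\item Symmetrically, fixing one index with positive eigenvalue, which indefiniteness also guarantees, puts every $e_j$ with $\lambda_j < 0$ in the span.
\end{enumerate}
Thus all eigenvectors lie in $\mathrm{span}(\calH_\star)$, so $\mathrm{span}(\calH_\star) = \SR^d$.

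I do not expect a real obstacle. The only points needing care are that differentiability of $\calP$ at $0$ is assumed, which is what lets the ray-wise constancy be expressed through $\nabla\calP(0)$, and the elementary spanning argument in Step 3.
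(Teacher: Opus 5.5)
Your proof is correct and follows essentially the paper's own argument. Similarity makes $\mathcal{P}$ equal to $\alpha$ on the cone $\mathcal{H}_\star$, so the directional derivatives vanish there. The spanning claim comes from the same diagonalization trick: pair each eigenvector with one of opposite-sign eigenvalue to form isotropic vectors $w_\pm$, which recover it via $w_+ \pm w_-$. The paper uses $e_j \pm t_j e_2$ with $t_j = (-\lambda_j/\lambda_2)^{1/2}$, which is your construction up to scaling.
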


\begin{remark}[Differentiability of \(\calP\)]
    \label{rem:diffpower}
    A common strategy in hypothesis testing is to compare a test statistic \(\Psi_n^\circ\) to a possibly data-dependent critical value \(\hat c_n\), rejecting when the former exceeds the latter. Let \(\Psi_n=\Psi_n^\circ-\hat c_n\), so the rejection rule can be written as \(\Xi_n=\bm{1}\{\Psi_n\geq 0\}\). If \(\Psi_n \overset{h}{\rightsquigarrow} \calL_h\) for each \(h \in \SR^d\), and if the CDF of \(\calL_0\) is continuous at zero, then the resulting local asymptotic power curve \(\calP\) is differentiable at \(0\); see \Cref{lemma:diff-local-power}. 
    This is a milder version of the regularity condition imposed on quantile unbiased estimators in \citet{hirano2012impossibility}.
    \qed
\end{remark}

The first statement in \Cref{thm:impossibility-flat} follows immediately from the definition of similarity along with the fact that \(\calH_\star\) is a cone: because \(\mathcal{P}(h)\) is constant on \(\calH_\star\), its directional derivatives in directions \(h \in \calH_\star\) must be zero. The force of the result, however, lies in the structure of \(\mathcal{H}^\star\) near points of degeneracy. In standard inference problems, i.e, when the true parameter is well separated from points of degeneracy, the parameter of interest in the limit experiment is a linear function of the shift parameter \(h\), so \(\mathcal{H}^\star = \{0\}\) and the zero-derivative condition carries no information about the shape of the power curve. By contrast, when \(g\) exhibits first-order degeneracy \(\mathcal{H}_\star\), can be a non-trivial cone — that is, it may contain directions other than zero — and the constraint \(D_h \mathcal{P}(0) = 0\) for \(h \in \calH_\star\) becomes substantive. When \(\nabla^2 g(\theta_\star)\) is indefinite, \(\calH_\star\) spans \(\SR^d\) and the entire gradient of the local asymptotic power curve vanishes at the origin, implying that power cannot increase at a linear rate in any direction away from \(\theta_\star\). The following examples illustrate the strength of this restriction.

\textbf{Example \ref{ex:mediation}, cont.}
    Consider again the mediation model, where the original model is given \((P_\theta: \theta \in \Theta \subseteq \SR^2)\). Suppose the researcher is interested in testing the null hypothesis \(H_0:\theta_1\theta_2 = 0\), that is \(g(\theta) = \theta_1\theta_2\) and \(\theta_\star = 0\). We can show that \(\calH_\star = \{h \in \SR^2:  h_1h_2 = 0\}\). Since \(\calH_\star\) is the union of the two coordinate axes, \(\mathrm{span}(\calH_\star) = \SR^2\). Thus, we have that \(\nabla\calP(0) = 0\) for any asymptotically similar test. 

    We can equivalently show that \(\calH_\star\) must span \(\SR^2\) by noting that the second derivative matrix of \(g(\cdot)\) at \(\theta_\star\) is given by
    \[
        \begin{pmatrix} 0 & 1\\1 & 0 \end{pmatrix},
    \]
    which is indefinite.
    \qed

\begin{example}[Squared Mean]
    \label{ex:squared-mean}
    On the other hand consider the case where \(\theta = (\theta_1,\theta_2) \in \SR^2\) and the researcher is interested in testing the null hypothesis \(H_0: \theta_1^2 + \theta_2^2 = 0\). In this case \(g(\theta) = \theta_1^2 + \theta_2^2\), \(\theta_\star = 0\), and \(\calH_\star = \{(0,0)\}\).  Since \(\nabla^2 g(\theta_\star)\) is positive definite, \(\mathrm{span}(\calH_\star) = \{0\}\). Thus, the results of \Cref{thm:impossibility-flat} do not apply and powerful similar tests for the null hypothesis \(H_0:\theta_1^2 + \theta_2^2 = 0\) can be constructed, see e.g \citet{chen2019improved}.  
    \qed 
\end{example}

\begin{example}[Standard Inference]
Suppose that the primitive parameter is univariate \(\theta_{n,h} = \theta_0 + h/r_n \in \SR\), and local to a point \(\theta_0 \)  such that \(g'(\theta_0) > 0\), that is we are well separated from points  of degeneracy. In this setting, the researcher typically has access to an estimator \(\Psi_n\) that satisfies \(r_n(\Psi_n -g(\theta_{n,h}))\overset{h}{\rightsquigarrow}N(0,\sigma^2)\). This estimator is regular and thus \(\alpha\)-quantile-unbiased  for all \(\alpha \in (0,1)\). Based on this estimator, an asymptotically similar one sided test for the null hypothesis, $H_0: g(\theta) = g(\theta_0)$, can be constructed with local asymptotic power curve \(\calP(h) = 1 - \Phi(c_{1-\alpha} - g'(\theta_0)h/\sigma)\), where \(\Phi(\cdot)\) is the standard normal CDF and \(c_{1-\alpha}\) is its \(1-\alpha\) quantile. Here, \(\frac{\partial }{\partial h}\calP(h)\big|_{h=0} = g'(\theta_0)\phi(c_{1-\alpha})/\sigma > 0\).
\end{example}

\begin{remark*}[]
    Recent papers by \citet{van_garderen_nearly_2024} and \citet{Dufour2025WaldTW} also study tests of the null hypothesis \(H_0:g(\theta) = g(\theta_\star)\) in various contexts. \citet{van_garderen_nearly_2024} consider the case of the mediation model, that is where \(\theta = (\theta_1, \theta_2)'\) and \(g(\theta) = \theta_1\theta_2\). They assume that the researcher has access to an asymptotically normal estimate of \(\theta\), \(\hat\theta = (\hat\theta_1, \hat\theta_2)'\) and show that there is no reasonable similar test of the form: reject if \(\max\{|\hat\theta_1|,|\hat\theta_2|\} > g(\min\{|\hat\theta_1|,|\hat\theta_2|\})\), where \(g(\cdot)\) may be an arbitrary function. Similarly, \citet{Dufour2025WaldTW} consider the behavior of Wald type tests based on the test statistic \(W_n = \frac{g(\hat\theta) - g(\theta_\star)}{\nabla g(\hat\theta)'\Sigma\nabla g(\hat\theta)}\), where \(\Sigma\) represents the asymptotic variance of  \(\hat\theta\). The authors show that, when \(\theta\) is close to \(\theta_\star\), the behavior of the Wald statistic can be irregular and propose alternate critical values for testing the null hypothesis \(H_0: g(\theta) = g(\theta_\star)\) using \(W_n\).

    We view our results as complementary to these existing results. The results in Proposition~\ref{thm:impossibility-flat} are narrower in their conclusion --- we establish only that similar tests must have flat power at $\theta^\star$ --- but broader in their scope, since they apply to any testing procedure that depends on the data, rather than only to procedures based on a specific initial estimator $\hat{\theta}$.
    \qed
\end{remark*}

\subsubsection{Infinite Dimensional Models}

In many settings the researcher may not be willing to assume that the data comes from a finite dimensional parametric model as described in the previous section. Following a tradition on studying semiparametric efficiency \citep{bickel1993efficient}, we show that this does not affect our impossibility results so long as the larger model contains a parametric submodel satisfying \Cref{assm:lan,assm:differentiability}. 

Formally, let the model \(\calP\) be a collection of sequences of probability measures on the sample space \(\calX^n\) from the previous section. That is, each element of \(\calP\) is a sequence of probability measures \(\{P_n\}\),  where each probability measure \(P_n\) is defined on the sample space \(\calX^n\). A finite dimensional submodel, \(\calP_f\), is some smaller collection of sequences of probability measures that can be parameterized as \(\calP_{f} = (\{P_{n,\theta}\}_{n\in\SN}: \theta \in \Theta)\) for an open set \(\Theta \in \SR^{d_f}\). Fix a ``centering'' sequence of probability measures \(\{P_{0,n}\}_{n \in \SN} \in \calP\). We say that the submodel passes through \(\{P_{0,n}\}\) if \(\{P_{0,n}\} \in \calP_f\), that is \(\{P_{0,n}\} = \{P_{n,\theta}\}\) for some \(\theta \in \Theta\). We will call such a parametric model ``regular'' if \Cref{assm:lan} holds and the model passes through \(\{P_{0,n}\}\).


We suppose that the object of interest is a quantity that depends on the sequence of underlying probability measures, that is we can think of the estimand \(g[\{P_n\}]\) as a functional defined on \(\calP\). For any regular parametric model, \(\calP_f\), this implicitly defines a function on \(\theta\) via the relation \(g_f(\theta) = g[\{P_{n,\theta}\}]\). With this notation defined, we show that the results of \Cref{thm:imposs-main} can be extended in a straightforward fashion to infinite dimensional models.

\begin{remark}[Semiparametric Models with i.i.d Data]

In the literature on semiparametric estimation with i.i.d data, where the researcher observes repeated observations drawn independently from a probability distribution \(P\) on \(\calX\) belonging to a model \(\mathcal{P}\) \citep{bickel1993efficient}, one can associate the entire sequence of probability measures \(\{\otimes_{i=1}^n P: n \in \SN\}\) with the underlying common distribution \(P\). With this association, one can consider the model \(\calP\) described above as a collection of probability measures on \(\calX\) rather than a collection of sequences of probability measures \(\{P_n\}\) where each \(P_n\) is defined on \(\calX^n\). The parameter, in turn, can be defined as a function of the underlying distribution \(P\) rather than as a function of the entire sequence by \(g[P] \equiv g[\{\otimes_{i=1}^n P: n \in \SN\}]\). However, when  dealing with time-series or network data, it may not be possible to define the parameter as a function of some representative underlying distribution and is instead a property of the sequence \(\{P_n\}\).
\qed
\end{remark}

\begin{corollary}[Impossibility in Infinite-Dimensional Models]
    \label{corr:semiparametric-imposs}
    Suppose the data are generated from a sequence of distributions $\{P_{0,n}\} \in \mathcal{P}$. Let $\mathcal{P}_f \subset \mathcal{P}$ be a regular parametric submodel passing through $\{P_{0,n}\}$, and suppose that \(g_f\) satisfies \Cref{assm:differentiability} with \(\{P_{n,\theta_\star}\} = \{P_{0,n}\}\).
    \begin{enumerate}
    \item There is no estimator sequence $\Psi_n$ and law \(\calL\) on \(\SR\) such that, along every regular parametric submodel $\mathcal{P}_f$,
    \[
        r_n^2\big( \Psi_n - g_f(\theta_\star + h/r_n) \big) \;\overset{h}{\rightsquigarrow}\; \calL \quad \text{for all } h \in \mathbb{R}^{d_f}.
    \]
    \item  Let $\{\calL_h\}_{h \in \mathbb{R}^{d_f}}$ be a family of distributions such that  
    (i) $\calL_h\{(-\infty,0]\} = \alpha$ for some $\alpha \in (0,1)$ and all $h$, and  
    (ii) each $\calL_h$ has a CDF, $F_h$, differentiable at zero with derivative bounded below by $\epsilon_f > 0$. Then there is no estimator sequence $\Psi_n$ such that, along every regular parametric submodel $\mathcal{P}_f$,
    \[
        r_n^2\big( \Psi_n - g_f(\theta_\star + h/r_n) \big) \;\overset{h}{\rightsquigarrow}\; \calL_h \quad \text{for all } h \in \mathbb{R}^{d_f}.
    \]
\end{enumerate}
\end{corollary}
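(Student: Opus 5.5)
The corollary follows by restricting to the single parametric submodel whose existence it assumes and applying \Cref{thm:imposs-main} to that submodel. The argument is by contradiction. Suppose there were an estimator sequence \(\Psi_n\) and a law \(\calL\) (for part (a)), or a family \(\{\calL_h\}\) (for part (b)), such that the stated convergence holds along \emph{every} regular parametric submodel. In particular it holds along the given submodel \(\calP_f = (\{P_{n,\theta}\}: \theta \in \Theta)\), which passes through \(\{P_{0,n}\} = \{P_{n,\theta_\star}\}\).

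The key steps are as follows.

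\begin{enumerate}[(1)]
    \item Check the hypotheses of \Cref{thm:imposs-main} for the submodel. Since \(\calP_f\) is regular, \Cref{assm:lan} holds for \((P_{n,\theta}:\theta\in\Theta)\), and \(\theta_\star\) is interior because \(\Theta\) is open. By hypothesis, \(g_f(\theta) = g[\{P_{n,\theta}\}]\) satisfies \Cref{assm:differentiability} at \(\theta_\star\).
    \item If \(\Theta\) is open rather than compact, as \Cref{assm:differentiability} requires, restrict to a compact ball \(\bar B \subset \Theta\) centered at \(\theta_\star\). This changes nothing, because local parameters \(\theta_\star + h/r_n\) lie in \(\bar B\) for \(n\) large and \Cref{thm:imposs-main} only concerns these local sequences.
    \item Note that \(\Psi_n\) is a measurable function of \(X^{(n)}\) on \(\calX^n\). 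Under the submodel measures \(P_{n,h} = P_{n,\theta_\star+h/r_n}\), the assumed convergence is exactly the hypothesis of \Cref{thm:imposs-main}(a) with \(g\) replaced by \(g_f\), and of part (b) with \(\epsilon = \epsilon_f\).
    \item Conclude that \Cref{thm:imposs-main} rules out such \(\Psi_n\), which is a contradiction.
\end{enumerate}

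The main point needing care is that the estimator sequence and the limiting laws are the same objects when viewed inside the submodel. \(\Psi_n\) does not depend on the submodel, and the laws \(\calL_h\) are indexed by \(h \in \SR^{d_f}\) for the given \(\calP_f\). The quantifier ``along every regular parametric submodel'' is only used by specializing to the one submodel in the hypothesis. That direction is harmless, since a requirement imposed on every submodel is stronger than one imposed on a single submodel.

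One should also confirm that the rate \(r_n\) in the display is the LAN rate of \(\calP_f\), which is how it is used in \Cref{thm:imposs-main}. With these identifications the corollary is an immediate consequence, and I expect no substantive obstacle beyond this bookkeeping.
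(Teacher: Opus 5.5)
Your proposal is correct and matches the paper's proof, which just says the corollary follows directly from \Cref{thm:imposs-main}: you specialize the ``every regular parametric submodel'' requirement to the given submodel \(\calP_f\) and apply the theorem with \(g = g_f\) and \(\epsilon = \epsilon_f\). Your extra bookkeeping (restricting the open \(\Theta\) to a compact ball around \(\theta_\star\) and taking \(r_n\) to be the LAN rate of \(\calP_f\)) only makes explicit what the paper leaves implicit.
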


\section{Minimum Distance Based Inference}
\label{sec:Inference}
In this section, we construct a uniformly valid confidence interval
for $g(\theta):\Theta\rightarrow\mathbb{R}$ using estimator $\hat{\theta}$. The confidence interval
is obtained by inverting the hypothesis $H_{0}:g(\theta)=\tau$,
and we use a minimum distance (MD) test statistic 
\[
\hat{T}_{n}(\tau)=\inf_{\theta\in\Theta:g(\theta)=\tau}r_{n}^{2}(\hat{\theta}-\theta)^{\prime}\Sigma^{-1}(\hat{\theta}-\theta).
\]
We focus on the settings where the standard first order approximation
of $g(\theta)$ fails at $\theta_{\star}$, but the second order derivative
is nondegenerate; that is, $\frac{\partial^{2}g}{\partial\theta\partial\theta^{\prime}}(\theta_{\star})=H$
with  $H$ full rank. 

In Section \ref{subsec:Bivariate-with-indefinite}, we discuss a simple
case where $\Theta \subseteq \mathbb{R}^{2}$ and $H$ is indefinite, and
we provide sufficient conditions under which the standard critical
value $Q(\chi_{1}^{2},1-\alpha)$ is uniformly valid. In Section \ref{subsec:General-Case},
we propose a computationally simple method for a general $g$, which can be
generalized to cases with higher order singularity.

\subsection{Two-Dimensional $\theta$ and Indefinite $H$}\label{subsec:Bivariate-with-indefinite}

To simplify notation, consider the null hypothesis
\begin{equation}
H_{0}:g(\theta_{1},\theta_{2}):=(1+\rho)\theta_{2}^{2}-(1-\rho)\theta_{1}^{2}=\tau,\label{eq:H0}
\end{equation}
with $|\rho|<1$ and $\tau\geq0$. The restriction $|\rho|<1$ guarantees
that $H$ is indefinite, while $\tau\geq0$ is a normalization. For
simplicity, let $n=r_{n}=1$, and assume $\hat{\theta}-\theta\sim N(0,I_{2}).$
The quadratic form $g$ and the normality of $\hat{\theta}$ can be
viewed as second order approximations, with general asymptotic results
provided in Theorem \ref{thm:d=00003D2}.

Let $X_{2}(\theta_{1})\in \mathbb{R}_+$
be the positive solution for $\theta_{2}$ such that (\ref{eq:H0}) holds. Let
$\mathcal{S}_{0}(\tau)$ be the null parameter space, which contains
two separate curves, $\mathcal{S}_{0}^{+}(\tau)$ and $\mathcal{S}_{0}^{-}(\tau)$,
\begin{align*}
\mathcal{S}_{0}(\tau) & =\mathcal{S}_{0}^{+}(\tau)\cup\mathcal{S}_{0}^{-}(\tau)
\end{align*}
where 
\[
\mathcal{S}_{0}^{+}(\tau)=\left\{ \left(x_{1},X_{2}(x_{1})\right):x_{1}\in\mathbb{R}\right\} ,\quad\mathcal{S}_{0}^{-}(\tau)=\left\{ \left(x_{1},-X_{2}(x_{1})\right):x_{1}\in\mathbb{R}\right\} .
\]
Let $\mathcal{S}(\tau,c)$ be the acceptance region with critical
value $c^{2}$, i.e., the $c$-enlargement of $\mathcal{S}_{0}(\tau)$,
\[
\mathcal{S}(\tau,c)=\left\{ (x_{1},x_{2}):(x_{1}-\theta_{1})^{2}+(x_{2}-\theta_{2})^{2}\leq c^{2},(\theta_{1},\theta_{2})\in\mathcal{S}_{0}(\tau)\right\} .
\]

\begin{prop}
\label{thm:main}Let $c=\sqrt{Q(\chi_{1}^{2},1-\alpha)}$ and $\hat{\theta}-\theta\sim N(0,I_{2})$. Suppose either $\frac{1-\rho}{\sqrt{\tau(1+\rho)}}\leq\frac{1}{c}$
or $\rho\geq0$. For all $\theta\in\mathcal{S}_{0}(\tau)$, it holds that
\[
P\left(\hat{\theta}\in\mathcal{S}(\tau,c)\right)\geq1-\alpha.
\]
\end{prop}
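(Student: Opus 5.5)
The plan is a \emph{half-plane domination} argument. We bound the rejection probability $P(\hat\theta\notin\mathcal{S}(\tau,c))$ by the Gaussian mass of two half-planes, each at distance $c$ from the true $\theta$. By the symmetry $(x_1,x_2)\mapsto(x_1,-x_2)$ it suffices to take $\theta\in\mathcal{S}_0^+(\tau)$. Let $n$ be the unit normal to $\mathcal{S}_0^+(\tau)$ at $\theta$, pointing away from the origin. Because $\hat\theta-\theta\sim N(0,I_2)$, we have $\langle\hat\theta-\theta,n\rangle\sim N(0,1)$. Each half-plane $\{x:\langle x-\theta,n\rangle>c\}$ and $\{x:\langle x-\theta,n\rangle<-c\}$ therefore has probability $\alpha/2$ when $c^2=Q(\chi^2_1,1-\alpha)$. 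The complement of $\mathcal{S}(\tau,c)$ splits into two parts. $R_{\mathrm{out}}$ is the part lying above the upper branch. $R_{\mathrm{in}}$ is the part lying strictly between the two branches; the part below the lower branch lies inside a symmetric copy of the outer region and is handled as part of the same estimate. We aim to show $R_{\mathrm{out}}\subseteq\{\langle x-\theta,n\rangle>c\}$ and $R_{\mathrm{in}}\subseteq\{\langle x-\theta,n\rangle<-c\}$.

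\emph{Outer part (no condition needed).} The function $X_2(x_1)=\sqrt{(\tau+(1-\rho)x_1^2)/(1+\rho)}$ is convex, so $K=\{x_2\ge X_2(x_1)\}$ is a convex set. Hence $R_{\mathrm{out}}\subseteq K_c:=\{x\in K:\mathrm{dist}(x,\partial K)>c\}$, and $K_c$ is again convex because it is the set of points whose distance to the complement of $K$ exceeds $c$. Now $K$ lies in its supporting half-plane $\{\langle x-\theta,n\rangle\ge0\}$. Any $x\in K$ with $\langle x-\theta,n\rangle\le c$ reaches the tangent line within distance $c$, and the segment realising this leaves $K$ before reaching the line, so $x$ is within $c$ of $\partial K$. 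Hence $R_{\mathrm{out}}\subseteq\{\langle x-\theta,n\rangle>c\}$.

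\emph{Inner part, curvature case.} First compute the curvature of the upper branch. It is maximised at the vertex $x_1=0$, where the slope vanishes and $X_2''(0)=(1-\rho)/\sqrt{\tau(1+\rho)}$. So the condition $\frac{1-\rho}{\sqrt{\tau(1+\rho)}}\le\frac1c$ says exactly that the curvature is at most $1/c$ everywhere. For a convex curve, this bound implies that the reach on the concave side is at least $c$. Consequently the inner parallel curve $C_c$, obtained by moving every point of the branch a distance $c$ along its inward normal, is well defined. It is again the graph of a convex function, since the radius of curvature only decreases by $c\le 1/\kappa$. The points below the upper branch at distance more than $c$ from it are exactly the points strictly below $C_c$. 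This set already contains $R_{\mathrm{in}}$, and any additional points excluded by the lower branch only shrink it. The point $\theta-cn$ lies on $C_c$, and the normal to $C_c$ there is still $n$. By convexity of $C_c$, the region below $C_c$ lies in $\{\langle x-\theta,n\rangle<-c\}$. Combining the two parts gives rejection probability at most $\alpha/2+\alpha/2=\alpha$.

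\emph{Inner part, case $\rho\ge0$ (main obstacle).} Here the curvature can be large and the single-branch argument fails. The lower branch must now be used. I would try to show directly that every $x$ strictly between the branches with $\langle x-\theta,n\rangle\ge -c$ lies within $c$ of $\mathcal{S}_0^+(\tau)\cup\mathcal{S}_0^-(\tau)$. The key facts are: the asymptotes have slopes $\pm\sqrt{(1-\rho)/(1+\rho)}$, which lie in $[-1,1]$ when $\rho\ge0$, so the branches open at an angle of at most $45^\circ$ from horizontal; and the normal $n$ then has an upward component at least as large as its horizontal one. The steps are as follows.
\begin{enumerate}[(1)]
\item Parametrise $x=\theta-tn+s\,n^{\perp}$ with $t\in[0,c]$.
\item Bound the distance from $x$ to the upper branch by the distance along the vertical direction, or to the lower branch via its mirror image through the $x_1$-axis.
\item Show that the minimum of these two bounds is at most $c$. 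Intuitively, a point that is far from the upper branch in the direction $n^{\perp}$ must drift toward the axis, and there the reflected branch is close.
\end{enumerate}
I expect this geometric inequality to be the hardest step. The reduction to the vertex region $x_1$ near $0$, which is the worst case by symmetry and monotonicity of slopes, should be where $\rho\ge0$ is used. If the direct inclusion fails, the fallback is weaker but still sufficient. One would show that the Gaussian measure of $R_{\mathrm{in}}$ is at most that of the half-plane. This uses the reflection $x_2\mapsto -x_2$ to move the portion of $R_{\mathrm{in}}$ sticking out of the half-plane into a region of at least equal mass that lies inside it, since the centre $\theta$ is on the upper side.
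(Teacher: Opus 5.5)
Your outer-part argument is fine. The inner-part inclusion $R_{\mathrm{in}}\subseteq\{\langle x-\theta,n\rangle<-c\}$, however, is false under either hypothesis, so the half-plane domination strategy cannot work.

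Take $\rho=0$, $\tau>0$, $\theta=(0,\sqrt\tau)$, $n=(0,1)$, and a point $x=(x_1,\sqrt\tau)$ on the tangent line, so that $\langle x-\theta,n\rangle=0$. Since $\mathcal S_0^+(\tau)\subset\{x_2\ge|x_1|\}$ and $\mathcal S_0^-(\tau)\subset\{x_2\le-|x_1|\}$, the distance from $x$ to $\mathcal S_0(\tau)$ is at least $(|x_1|-\sqrt\tau)/\sqrt{2}$. This exceeds $c$ once $|x_1|>\sqrt\tau+\sqrt{2}\,c$. So rejected points lie on the tangent line itself, and more generally on every line through $\theta$ whose slope is strictly between the asymptotic slopes. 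Hence the strip $\{|\langle x-\theta,n\rangle|\le c\}$ is not contained in $\mathcal S(\tau,c)$. Since $\rho=0\ge0$ here, your step (3) cannot succeed either.

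In the curvature case the concrete slip is a reversed inclusion. $C_c$ is the graph of a convex function, so it lies \emph{above} its tangent line $\{\langle x-\theta,n\rangle=-c\}$. The region below $C_c$ therefore \emph{contains} that half-plane rather than lying inside it. You have also attached the curvature condition to the wrong side. The parallel curve through $\theta-cn$, away from the centres of curvature, is always regular, with radius of curvature increased by $c$. The bound $\kappa\le1/c$ is what prevents a kink in the parallel curve inside $K$, which is the paper's $\mathcal C_u$.

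Your fallback aims at a false target as well: the Gaussian mass of the rejection region lying below $\mathcal S_0^+(\tau)$ can exceed $\alpha/2$. Take $\rho=0$, $\theta$ at the vertex, and $\tau$ large, so the curvature condition holds with $\kappa=\tau^{-1/2}$. That region contains two pieces:
\begin{itemize}
\item the half-plane $\{x_2<\sqrt\tau-c\}$ minus the $c$-neighbourhood of $\mathcal S_0^-(\tau)$, where that neighbourhood has mass at most $e^{-(2\sqrt\tau-c)^2/2}$;
\item the set between the line $x_2=\sqrt\tau-c$ and the outer parallel curve, which has mass $\tfrac12\phi(c)\kappa+O(\kappa^2)$, with $\phi$ the standard normal density.
\end{itemize}
Meanwhile $R_{\mathrm{out}}$ falls short of $\alpha/2$ by an amount of the same order. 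The total stays below $\alpha$ only because slack on the $K$ side pays for excess on the other side. Any valid proof must couple the two sides, and a per-side $\alpha/2+\alpha/2$ bound cannot.

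The paper does this circle by circle in its key lemma. By rotational invariance of $N(0,I_2)$, it suffices that every $\partial B(\theta,r)$ with $r>c$ has at least $4r\arcsin(c/r)$ of arc length inside the acceptance region, which is exactly what a strip of half-width $c$ receives. The covered arcs sit around the two points where the circle crosses $\mathcal S_0^+(\tau)$. Each spans a chord $\overline{AC}$ with $A\in\mathcal C_u$ and $C\in\mathcal C_\ell$, passing through a null point $G$. Since $A$ and $C$ are at distance exactly $c$ from $\mathcal S_0^+(\tau)$, we get $|\overline{AC}|\ge2c$.

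This chord argument needs the circle to meet $\mathcal C_u$, and that is where the curvature bound enters. When $\rho\ge0$ and $\theta$ is near the vertex of a highly curved branch, the paper argues differently. It shows that circles with $c<r\le\bar r(\theta_1)$ lie entirely in $\mathcal S(\tau,c)$, using the enlargement of $\mathcal S_0^-(\tau)$ to cover their lower part. For larger $r$ it returns to the chord argument.
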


Proposition \ref{thm:main} shows that the standard MD test  remains valid under a curved null hypothesis when the maximum curvature of $\mathcal{S}_{0}^+(\tau)$, given by $\frac{1-\rho}{\sqrt{\tau(1+\rho)}}$, is sufficiently small, or when the two branches $\mathcal{S}_{0}^{+}(\tau)$ and
$\mathcal{S}_{0}^{-}(\tau)$ are sufficiently close.
The argument proceeds by comparing the coverage of $\mathcal{S}(\tau,c)$
with that of the   auxiliary acceptance
set,
\[\mathcal{S}_{\text{aux}}=\left\{ (x_{1},x_{2}):(x_{2}-\theta_{2})^{2}\leq c^{2}\right\} .\]
whose coverage is exactly $1-\alpha$.  Expressed in polar coordinates, the coverage
depends on  the fraction of each circle of radius $r$ centered at $\theta\in \mathcal{S}^{+}_0(\tau)$, denoted $\partial B(\theta,r)$, that is contained in the acceptance region. Consequently, it suffices to show that, for each $r$, the arc length of $\partial B(\theta,r)$ contained in $\mathcal{S}(\tau,c)$ is no smaller than that of $\mathcal{S}_{aux}$. 

When $r\leq c$, the entire circle is covered by both $\mathcal{S}_{aux}$ and $\mathcal{S}(\tau,c)$ by construction. 
For $r>c$, let   $\mathcal{C}_{u}(\tau)$ and $\mathcal{C}_{\ell}(\tau)$ denote
the upper and lower boundaries of $\mathcal{S}^{+}(\tau,c)$; see
Figure \ref{fig:Coverage-1} left panel. If $\frac{1-\rho}{\sqrt{\tau(1+\rho)}}\leq\frac{1}{c}$, the circle $\partial B\left(\theta,r\right)$
intersects $\mathcal{C}_{u}(\tau)$ at points $A$ and $B$, and $\mathcal{C}_{\ell}(\tau)$ at points $C$ and $D$. We can show that the lengths of chords $\overline{AC}$ and $\overline{BD}$ are no smaller than $2c$. Otherwise, $B(G,c)\not\subseteq\mathcal{S}^{+}(\tau,c)$, where $G$ denotes the intersection of $AC$ with $\mathcal{S}_{0}^{+}(\tau)$, contradicting the
definition of $\mathcal{S}^{+}(\tau,c)$. Note that the arcs of $\partial B(\theta,c)$ covered by $\mathcal{S}_{aux}$ correspond to chords of length $2c$. It therefore follows that the portion of $\partial B(\theta,c)$ covered by $\mathcal{S}(\tau,c)$ is larger than that covered by $\mathcal{S}_{aux}$.
\begin{figure}
\begin{centering}
\includegraphics[scale=0.5]{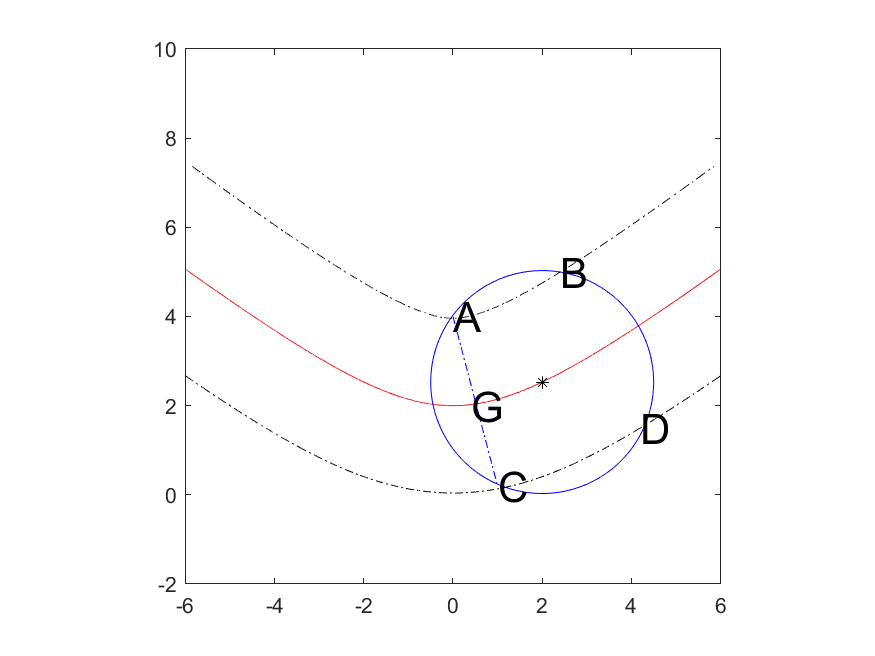}\hspace{-0.8cm}\includegraphics[scale=0.5]{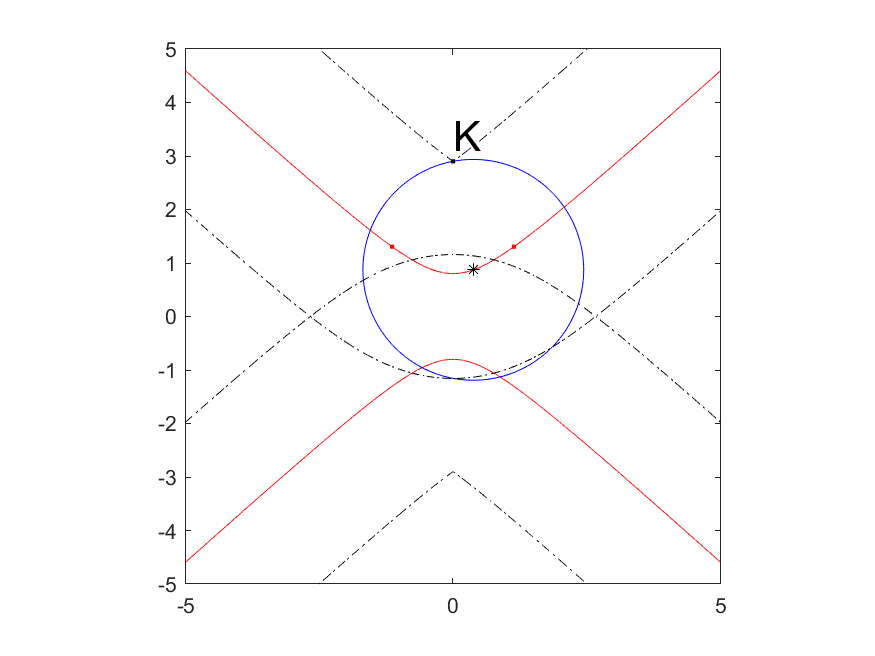}\caption{Acceptance Regions under Low (Left)
and High (Right) Curvature. }\label{fig:Coverage-1}
\par\end{centering}
\noindent\begin{minipage}[t]{1\columnwidth}%
{\scriptsize Red curves represent $\mathcal{S}_{0}^{+}(\tau)$ (left)
and $\mathcal{S}_{0}(\tau)$ (right). Black dash curves represent
the boundaries of $\mathcal{S}^{+}(\tau,c)$ (left) and $\mathcal{S}^{+}(\tau,c)$
and $\mathcal{S}^{-}(\tau,c)$ (right). ``{*}'' represents $\theta$,
and the blue curves represent $\partial B(\theta,r)$ for some $r>c$. }%
\end{minipage}
\end{figure}

If $\frac{1-\rho}{\sqrt{\tau(1+\rho)}}>\frac{1}{c}$,   $\mathcal{C}_{u}(\tau)$
has a kink due to the high curvature. Thus, there exist $\theta\in\mathcal{S}_{0}^{+}(\tau)$
and $r>c$ such that $\partial B\left(\theta,r\right)$ does not intersect
$\mathcal{C}_{u}(\tau)$; see Figure \ref{fig:Coverage-1} right panel. For such $(\theta,r)$,
the argument based on chord lengths no longer applies. 
However, when $\rho\geq0$,  $\partial B(\theta,r)$ is sufficiently close to $\mathcal{S}_{0}^{-}(\tau)$,
and we can show that
$\partial B\left((\theta_{1},\theta_{2}),r\right)\subseteq\mathcal{S}(\tau,c)$.

Next, we present the asymptotic results for general data generating
processes. 
\begin{assumption}
\label{assu:uniform=000020normal}
Suppose that
\begin{enumerate}
\item \label{enu:-is-twice=000020diff} \(\nabla^2 g(\theta_\star)\) has full rank.
\item \label{enu:AsyNormal}Let $BL_{1}$ denote the set of Lipschitz functions
which are bounded by $1$ in absolute value and have Lipschitz constant
bounded by $1$. Assume there exists $r_{n}\rightarrow\infty$ such
that 
\[
\lim_{n\rightarrow\infty}\sup_{P\in\mathcal{P}}\sup_{f\in BL_{1}}\left|E_{P}\left[f\left(\sqrt{r_{n}}\left(\hat{\theta}-\theta_{P}\right)\right)\right]-E_{P}\left[f\left(\xi_{P}\right)\right]\right|=0,
\]
where $\xi_{P}\sim N(0,\Sigma_{P})$.
\item \label{enu:compact=000020Sigma}Let $\mathcal{S}$ denote the set
of matrices with eigenvalues bounded below by $\underline{e}>0$ and
above by $\bar{e}\geq\underline{e}$. For all $P\in\mathcal{P}$,
$\Sigma_{P}\in\mathcal{S}$.
\item \label{enu:consistentSigma}For all $\varepsilon>0$, 
\[
\lim_{n\rightarrow\infty}\sup_{P\in\mathcal{P}}P\left(\left\Vert \hat{\Sigma}-\Sigma_{P}\right\Vert >\varepsilon\right)=0.
\]
\end{enumerate}
\end{assumption}

\Cref{assu:uniform=000020normal}.\ref{enu:-is-twice=000020diff} assumes that the second order derivative of \(g\) at \(\theta_\star\) is of full rank, so that there is no higher order degeneracy. 
\Cref{assu:uniform=000020normal}.\ref{enu:AsyNormal} requires that the researcher has access to an estimator \(\hat\theta\) that is uniformly asymptotically normal over the class of DGPs considered while \Cref{assu:uniform=000020normal}.\ref{enu:compact=000020Sigma} and \Cref{assu:uniform=000020normal}.\ref{enu:consistentSigma} require that the asymptotic variance of this estimator is well-behaved and consistently estimable. Since degeneracy is a property of the transformation of interest, $g$, rather than of the primitive parameter $\theta$, these are mild conditions that can be verified for most common estimators $\hat\theta$.


\begin{thm}
\label{thm:d=00003D2}Suppose $d=2$, and   let $c=\sqrt{Q(\chi_{1}^{2},1-\alpha)}$. Let $(\lambda_{P,1},\lambda_{P,2})$
be the eigenvalues of $\text{sign}\left(g(\theta_{P})-g(\theta_{\star})\right)\Sigma_{P}^{1/2}H\Sigma_{P}^{1/2}$, and define
 $\rho_{P}=\frac{\lambda_{P,1}+\lambda_{P,2}}{\left|\lambda_{P,1}-\lambda_{P,2}\right|}.$
Assume that Assumptions \ref{assm:differentiability} and \ref{assu:uniform=000020normal} hold. If for some $\eta>0$, it holds that either 
\begin{equation}
\mathcal{P}_{n}\subseteq\left\{ P\in\mathcal{P}:\frac{(1-\rho_{P})\sqrt{\left|\lambda_{P,1}-\lambda_{P,2}\right|}}{2r_{n}\sqrt{\left|g(\theta_{P})-g(\theta_{\star})\right|(1+\rho_{P})}}\leq\frac{1}{c},\;\rho_P\in[\eta-1,1-\eta]\right\} \label{eq:P2}
\end{equation}
or 
\begin{equation}
\mathcal{P}_{n}\subseteq\left\{ P\in\mathcal{P}:\rho_{P}\in[0,1-\eta]\right\},\label{eq:P1}
\end{equation}
then
\[
\liminf_{n}\inf_{P\in\mathcal{P}_{n}}P\left(\hat{T}_{n}\left(g(\theta_{P})\right)\leq c^{2}\right)\geq1-\alpha.
\]
\end{thm}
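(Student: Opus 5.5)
The plan is to reduce the general statement to \Cref{thm:main} in three steps: a uniform local quadratic approximation, a linear change of variables that normalizes the covariance and diagonalizes the Hessian, and a subsequence argument that transfers the exact finite-sample coverage bound of \Cref{thm:main} to the limit. I will argue by contradiction. Suppose there is a sequence $P_n\in\mathcal{P}_n$ along which $P_n(\hat T_n(g(\theta_{P_n}))\le c^2)$ converges to a number below $1-\alpha$. Passing to a further subsequence, I may assume $\Sigma_{P_n}\to\Sigma$ (by compactness from \Cref{assu:uniform=000020normal}.\ref{enu:compact=000020Sigma}), $\rho_{P_n}\to\rho$, and that the normalized local parameter converges in $[-\infty,\infty]^2$.

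The first step treats the case where $\theta_{P_n}$ stays a fixed distance from $\theta_\star$. Then $\nabla g(\theta_{P_n})$ is bounded away from zero, the null set $\{g=\tau\}$ is a $C^2$ curve near $\theta_{P_n}$, and its curvature relative to the $r_n^{-1}$ scale vanishes. Hence $\hat T_n$ is asymptotically the squared projection of a Gaussian onto the normal direction, which has a $\chi^2_1$ limit. This gives coverage tending to $1-\alpha$ by the usual first-order argument, using \Cref{assu:uniform=000020normal}.\ref{enu:AsyNormal} and \ref{enu:consistentSigma} together with continuity of the $\chi^2_1$ cdf at $c^2$.

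The main case is $\theta_{P_n}\to\theta_\star$. I write $\theta=\theta_\star+u/r_n$ and Taylor expand:
\[
r_n^2\{g(\theta)-g(\theta_\star)\}=\tfrac12 u'Hu+o(\|u\|^2)+r_n^2 O(\|\theta-\theta_\star\|^3).
\]
Next I change variables to $v=\Sigma_P^{-1/2}u$ and rotate so that $\Sigma_P^{1/2}H\Sigma_P^{1/2}/2$ is diagonal, with eigenvalues $\lambda_1,\lambda_2$ of opposite signs (because $\rho\in[\eta-1,1-\eta]$). Multiplying by the sign and rescaling by $|\lambda_1-\lambda_2|/2$ puts the constraint in the form $(1+\rho)v_2^2-(1-\rho)v_1^2=\tilde\tau$, where
\[
\tilde\tau=\frac{2r_n^2|g(\theta_P)-g(\theta_\star)|}{|\lambda_1-\lambda_2|}.
\]
I then check that the curvature condition in \eqref{eq:P2} is exactly $\frac{1-\rho}{\sqrt{\tilde\tau(1+\rho)}}\le\frac1c$. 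The test statistic $\hat T_n$ becomes the squared Euclidean distance from a normalized estimate $\hat v$ to this hyperbola. Since $r_n(\hat\theta-\theta_P)$ is asymptotically $N(0,\Sigma_P)$, the vector $\hat v-v_P$ is approximately $N(0,I_2)$.

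The step I expect to be the main obstacle is making this approximation uniform while both $v_P$ and $\tilde\tau$ may diverge. The distance is minimized over the global null set, whereas the quadratic approximation is only local. I would handle it in three parts. First, restrict attention to a ball of radius $M$ around $v_P$: the constraint set restricted to the region $\|v-v_P\|\le M$ has Hausdorff distance $o(1)$ to its quadratic analogue whenever $\|v_P\|/r_n\to0$, because there the cubic remainder is $O(\|v\|^3/r_n)$ and the lower bound $\|\nabla g\|\gtrsim\|v\|/r_n$ gives a transversality bound. Second, points outside the ball have distance at least $M-O_p(1)$ and cannot affect the event $\{\hat T_n\le c^2\}$. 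Third, apply the estimated $\hat\Sigma$, which is consistent by \Cref{assu:uniform=000020normal}.\ref{enu:consistentSigma}. With these in place, $P_n(\hat T_n\le c^2)$ is asymptotically at least $P(N(v_P,I)\in\mathcal{S}(\tilde\tau,c)_{\text{local}})-o(1)$.

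Finally, \Cref{thm:main} gives that this probability is at least $1-\alpha$ for every admissible $(\rho,\tilde\tau)$ under either \eqref{eq:P2} or \eqref{eq:P1}, and the bound is exact in finite samples. A boundary-continuity argument, using that the Gaussian measure of the boundary of the $c$-enlargement is zero, handles the $o(1)$ perturbations of the set. Together these give the contradiction.
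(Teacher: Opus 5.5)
The fixed-distance case and your treatment of a bounded local parameter are essentially the paper's Cases 1 and 2. The gap is in your single reduction to Proposition~\ref{thm:main} for all $\theta_{P_n}\to\theta_\star$: it fails when $\|v_{P_n}\|\to\infty$.

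\textbf{Why the reduction fails.} Proposition~\ref{thm:main} needs the centre $v_P$ to lie on the hyperbola $\mathcal{S}_0(\tilde\tau)$, whose level $\tilde\tau$ is built from the true $g(\theta_P)$. Your Hausdorff claim is meant to make this approximately true, but it does not hold in general. The offset of $v_P$ from $\{Q=\tilde\tau\}$ is about $|G_n(v_P)-Q(v_P)|/\|\nabla Q(v_P)\|$. Your own cubic bound makes this $O(\|v_P\|^2/r_n)$, and the $C^2$ smoothness of \Cref{assm:differentiability} gives only $o(\|v_P\|)$. Either way it is not $o(1)$ once $\|v_P\|\gtrsim r_n^{1/2}$, even though $\|v_P\|/r_n\to0$.

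For a concrete case, take $g(\theta)=\theta_1\theta_2+\theta_1^3$, $\theta_\star=0$, $\Sigma=I$ and $\theta_P=(r_n^{-1/3},0)$.
\begin{itemize}
\item Here $\rho_P=0$, so \eqref{eq:P1} holds, and \eqref{eq:P2} holds eventually.
\item Then $v_P=(r_n^{2/3},0)$, while the approximating hyperbola is $v_1v_2=r_n$, which sits at height $r_n^{1/3}\to\infty$ above $v_P$.
\item So $P(N(v_P,I)\in\mathcal{S}(\tilde\tau,c))\to0$, and Proposition~\ref{thm:main} cannot be invoked at all, even though the true coverage tends to $1-\alpha$.
\end{itemize}

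Switching to the hyperbola through $v_P$, $\{Q=Q(v_P)\}$, does not fix this. Conditions \eqref{eq:P1} and \eqref{eq:P2} concern $\rho_P$ and $|g(\theta_P)-g(\theta_\star)|$, not that hyperbola. Its level can differ from $\tilde\tau$ by an unbounded amount and can even have the opposite sign. For example, take $\theta_P=(t,-t^2/2)$ in the same model, so that $g(\theta_P)>0>\theta_{P,1}\theta_{P,2}$. With correlated $\Sigma$ this turns $\rho$ into $-\rho$, so neither condition is inherited.

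\textbf{The missing idea.} In the divergent regime no curvature condition is needed, because the limit is linear. The paper rescales by $s_n=\|\theta_n-\theta_\star\|^{-1}$ and uses that the limiting direction $h$ satisfies $h'H\neq0$, since $H$ has full rank. On bounded $x$ the constraint $g(\theta_n+r_n^{-1}x)=g(\theta_n)$ then reduces to $h'Hx=0$. Hence $\hat T_n\to\inf_{h'Hx=0}\|\mathbb{Z}-\Sigma^{-1/2}x\|^2\sim\chi^2_1$, and coverage tends to exactly $1-\alpha$.

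Your transversality bound $\|\nabla g\|\gtrsim\|v\|/r_n$ already gives curvature $O(1/\|v_P\|)$ for the level set through $v_P$, which is the right ingredient for this. The repair is to split $\theta_{P_n}\to\theta_\star$ by whether $r_n(\theta_{P_n}-\theta_\star)$ stays bounded or diverges:
\begin{itemize}
\item If it stays bounded, pass to the limit $(\rho,h)$ and apply Proposition~\ref{thm:main} there. The conditions in \eqref{eq:P1}--\eqref{eq:P2} are closed, and the limit statistic is continuously distributed. This also avoids needing boundary continuity for a moving family of sets.
\item If it diverges, use the $\chi^2_1$ limit above.
\end{itemize}

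A bookkeeping point: with the theorem's $\lambda$'s (eigenvalues of $\mathrm{sign}(\cdot)\Sigma_P^{1/2}H\Sigma_P^{1/2}$, not of half that matrix), the level that makes \eqref{eq:P2} match Proposition~\ref{thm:main} is $\tilde\tau=4r_n^2|g(\theta_P)-g(\theta_\star)|/|\lambda_1-\lambda_2|$.
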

Theorem \ref{thm:d=00003D2} follows from Proposition \ref{thm:main}. To see this, let
$\tilde{g}(h) = g\!\left(\theta_{\star} + r_{n}^{-1}\Sigma^{1/2} h\right),$
where \( r_{n} \) governs the rate at which \( \theta \) is estimated and \( \Sigma \) adjusts for the covariance. For \( h = O(1) \), \( \tilde{g}(h) \) can be approximated by a hyperbola, as in \eqref{eq:H0}. Condition \eqref{eq:P2} ensures that the curvature of \( \tilde{g}(h) \) is not too large, while \eqref{eq:P1} implies that the two branches of the hyperbola are sufficiently close. The result remains uniformly valid even as \( \|h\| \to \infty \).

\begin{remark}
\label{rem:Example=000020cont}To illustrate Theorem \ref{thm:d=00003D2},
consider $g(\theta)=\theta_{1}\theta_{2}$, as motivated by Example~\ref{ex:mediation}. In this case, $H=\left[\begin{array}{cc}
0 & 1\\
1 & 0
\end{array}\right]$. With $\Sigma_{P}=\left[\begin{array}{cc}
\sigma_{1}^{2} & r\sigma_{1}\sigma_{2}\\
r\sigma_{1}\sigma_{2} & \sigma_{2}^{2}
\end{array}\right]$, we have $\lambda_{P,1}=\text{sign}\left(g(\theta_{P})\right)\left(r-1\right)\sigma_{1}\sigma_{2}$, $\lambda_{P,2}=\text{sign}\left(g(\theta_{P})\right)\left(r+1\right)\sigma_{1}\sigma_{2}$, and
$\rho_{P}=\text{sign}\left(g(\theta_{P})\right)r$. Therefore, (\ref{eq:P1})
holds when $r=0$, and the MD test with the simple critical value
yields a uniformly valid confidence interval for the mediation effect.
It is worth noting that, the rejection region for $\theta_{1}\theta_{2}=0$
is given by $\min\left\{ |\hat{\theta}_{1}|,|\hat{\theta}_{2}|\right\} >Q(\chi_{1}^{2},1-\alpha)$,
which coincides with the rejection region of the likelihood ratio test. The latter is the uniformly most powerful invariant test among information- and size-coherent tests (\citet{van2022optimality}). If $\text{sign}\left(g(\theta_{P})\right)r<0$, then  (\ref{eq:P2})
is satisfied when 
$\frac{\sqrt{\sigma_{1}\sigma_{2}}(1+|r|)}{ r_n\sqrt{2(1-|r|)\left|g(\theta_{P})\right|}}\leq\frac{1}{c}.$ Since  $\sigma_1$, $\sigma_2$ and $r$ can be consistently estimated, and $g(\theta_P)$ is known under $H_0$, conditions (\ref{eq:P2}) and (\ref{eq:P1}) are straightforward to verify in practice.  

\end{remark}

\subsection{General Cases}\label{subsec:General-Case}

In this section, we present the inference procedure for a general
function $g$. The procedure is based on a local approximation of
the test statistic. First, consider the case where the true parameter
value $\theta_{n}$ satisfies $\theta_{n}=\theta_{\star}+h_{n}/r_{n}$.
Under $H_0$, the test statistic is given by\\
\[
\hat{T}_{n}(g(\theta_n))=\inf_{\vartheta:g(\vartheta)=g(\theta_{n})}r_{n}^{2}\left(\hat{\theta}-\vartheta\right)^{\prime}\hat{\Sigma}^{-1}\left(\hat{\theta}-\vartheta\right)=r_{n}^{2}\left\Vert \hat{\Sigma}^{-1/2}(\hat{\theta}-\tilde{\theta}_{n})\right\Vert ^{2}
\]
where $\tilde{\theta}_{n}$ denotes the minimizer. Since $\hat{T}_n(g(\theta_n))\leq r_{n}^{2}\left\Vert \hat{\Sigma}^{-1/2}(\hat{\theta}-\theta_{n})\right\Vert ^{2}=O_p (1)$,
we have $\tilde{\theta}_{n}=\theta_{n}+O_{p}(\frac{1}{r_{n}}).$ If $h_n = O(1)$, a
second order Taylor expansion of   $g(\tilde{\theta}_{n})=g(\theta_{n})$
gives 
\begin{align*}
r_{n}^{2}(\tilde{\theta}_{n}-\theta_{\star})^{\prime}H(\tilde{\theta}_{n}-\theta_{\star}) & =h_{n}^{\prime}Hh_{n}+o_{p}(1).
\end{align*}
In addition, let $\mathbb{Z}_{n}=r_{n}\hat{\Sigma}^{-1/2}(\hat{\theta}-\theta_{n})$,
we can write 
\begin{align*}
r_{n}\hat{\Sigma}^{-1/2}(\hat{\theta}-\tilde{\theta}_{n}) & =r_{n}\hat{\Sigma}^{-1/2}\left((\hat{\theta}-\theta_{n})+(\theta_{n}-\theta_{\star})-(\tilde{\theta}_{n}-\theta_{\star})\right)\\
 & =\mathbb{Z}_{n}+\hat{\Sigma}^{-1/2}\left(h_{n}-r_{n}(\tilde{\theta}_{n}-\theta_{\star})\right).
\end{align*}
In sum, let $t=r_n(\tilde{\theta}_n-\theta_\star)$, given $h_{n}$, we can approximate $\hat{T}_{n}$ by 
\begin{equation}
\hat{T}_{n}^{*}(h_{n})=\inf_{t:t^{\prime}Ht=h_{n}^{\prime}Hh_{n}}\left\Vert \mathbb{Z}+\hat{\Sigma}^{-1/2}(h_{n}-t)\right\Vert ^{2}\label{eq:T*}
\end{equation}
where $\mathbb{Z}|\hat{T}_{n}(g(\theta_n))\sim N(0,I_{d})$.  We can show that $\hat{T}_{n}(g(\theta_n))$
and $\hat{T}_{n}^{*}(h_{n})$ have the same asymptotic distribution,
regardless of whether $h_{n}$ converges to $h\in\mathbb{R}$ or diverges
to infinity (Lemmas \ref{lem:Tn=000020=00003D=000020T+o(1)}
and \ref{lem:Tn=000020=00003D=000020T+o1=0000202}). Intuitively, if $h_{n}\rightarrow\infty$, the restriction
for the optimizer $\tilde{t}$ in (\ref{eq:T*}) is approximately linear.  In this case, both $\hat{T}_{n}(g(\theta_n))$ and
$\hat{T}_{n}^{*}(h_{n})$ are approximated $\chi_{1}^{2}$. 

Given $h_{n}$, we can easily get the quantile of $\hat{T}_{n}^{*}(h_n)$
by simulation. However, $h_{n}$ is a nuisance parameter that cannot
be consistently estimated. Next, we propose a two step feasible critical
value. Suppose set $\mathcal{H}_{z}$ satisfies $P\left(N(0,I_{d})\in\mathcal{H}_{z}\right)=1-\eta$.\footnote{For instance, $\mathcal{H}_z=\{z\in \mathbb{R}^d:z^\prime z\leq Q(\chi^2_d,1-\eta)\}.$}
In the first step, we construct a $(1-\eta)$ confidence set for $h_{n}$,
\begin{equation}
\mathcal{H}=r_n(\hat{\theta}-\theta_\star) -\hat{\Sigma}^{1/2}\mathcal{H}_{z}.\label{eq:big=000020H}
\end{equation}
In the second step, we construct the critical value based on the $\frac{1-\alpha}{1-\eta}$
quantile of $\hat{T}_{n}^{*}(h)$ conditional on the first step. That
is, let
\begin{align}
\hat{c} & =\sup_{h\in\mathcal{H}}Q\left(\left.\hat{T}_{n}^{*}(h)\right|\mathbb{Z}\in\mathcal{H}_{z};\frac{1-\alpha}{1-\eta}\right),\label{eq:general=000020g=000020cv}
\end{align}
and reject $H_{0}:g(\theta)=\tau$ if $\hat{T}_{n}(\tau)>\hat{c}$. In (\ref{eq:general=000020g=000020cv}),
the construction of $\hat{c}$ takes into account the first step selection,
thus it is less conservative than simple Bonferroni correction.
\begin{thm}
\label{thm:d>2}Under Assumptions \ref{assm:differentiability} and \ref{assu:uniform=000020normal}, it holds that 
\[
\liminf_{n}\inf_{P\in\mathcal{P}}P\left(\hat{T}_{n}(g(\theta_{P_n}))\leq\hat{c}\right)\geq1-\alpha.
\]
In addition, if $\left\Vert r_{n}(\theta_{P_{n}}-\theta_{\star})\right\Vert \rightarrow\infty$,
\begin{equation}
\lim_{n}P_{n}\left(\hat{T}_{n}(g(\theta_{P_n}))\leq\hat{c}\right)\in\left[1-\alpha,1-\alpha+\eta\right).\label{eq:big=000020h=000020cov=000020rate}
\end{equation}
\end{thm}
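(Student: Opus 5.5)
The proof runs along subsequences. Take a sequence $P_n\in\mathcal{P}$ along which $P_n(\hat T_n(g(\theta_{P_n}))\le\hat c)$ approaches the $\liminf\inf$. By compactness of $\mathcal{S}$ (\Cref{assu:uniform=000020normal}.\ref{enu:compact=000020Sigma}), pass to a further subsequence with $\Sigma_{P_n}\to\Sigma$, and set $h_n=r_n(\theta_{P_n}-\theta_\star)$. Then either $h_n\to h\in\mathbb{R}^d$ or $\|h_n\|\to\infty$. In both cases the first task is to show $\hat T_n(g(\theta_{P_n}))=\hat T_n^*(h_n)+o_p(1)$ jointly with $\mathbb{Z}_n\rightsquigarrow\mathbb{Z}\sim N(0,I_d)$. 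This is the content of Lemmas \ref{lem:Tn=000020=00003D=000020T+o(1)} and \ref{lem:Tn=000020=00003D=000020T+o1=0000202}, which I take as given. They rest on the uniform normality in \Cref{assu:uniform=000020normal}.\ref{enu:AsyNormal}, the consistency of $\hat\Sigma$, and the second-order Taylor expansion sketched before the theorem.

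The key coupling is the following. The event $\{\mathbb{Z}_n\in\mathcal{H}_z\}$ is exactly the event $\{h_n\in\mathcal{H}\}$, since $\mathcal{H}=r_n(\hat\theta-\theta_\star)-\hat\Sigma^{1/2}\mathcal{H}_z$ and $r_n(\hat\theta-\theta_\star)=h_n+\hat\Sigma^{1/2}\mathbb{Z}_n$. (The paper's definition of $\mathbb{Z}_n$ uses $\hat\Sigma^{-1/2}$, which makes this exact.) On that event,
\[
\hat c\ \ge\ Q\bigl(\hat T_n^*(h_n)\mid \mathbb{Z}\in\mathcal{H}_z;\tfrac{1-\alpha}{1-\eta}\bigr)=:c_n(h_n).
\]
Hence
\[
P_n(\hat T_n\le\hat c)\ \ge\ P_n\bigl(\hat T_n\le c_n(h_n),\ \mathbb{Z}_n\in\mathcal{H}_z\bigr).
\]
Replacing $\hat T_n$ by $\hat T_n^*(h_n)$, a functional of $\mathbb{Z}_n$ and $\hat\Sigma$, and passing to the Gaussian limit, the right side converges to
\[
P\bigl(T^*\le c(h),\ \mathbb{Z}\in\mathcal{H}_z\bigr)=(1-\eta)\cdot\tfrac{1-\alpha}{1-\eta}=1-\alpha,
\]
where $T^*$ and $c(h)$ are the limiting statistic and its conditional quantile. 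Two technical points arise in this limit. First, $\hat\Sigma$ inside $\hat T_n^*$ must be replaced by $\Sigma$. Second, the boundary of $\mathcal{H}_z$ and the point $c(h)$ must be continuity points. The quantile could be an atom of the conditional law, so I would handle this by perturbing the quantile level by $\delta$ and letting $\delta\downarrow0$, which gives a $\ge$ inequality that suffices. Continuity of $h\mapsto c_n(h)$ in $\Sigma$ follows because $\hat T_n^*$ is continuous in $\hat\Sigma$ away from a null set.

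When $\|h_n\|\to\infty$, the constraint set $\{t:t'Ht=h_n'Hh_n\}$ is locally flat near $h_n$ at scale $O(1)$. So $\hat T_n^*(h_n)\rightsquigarrow(a_n'\mathbb{Z})^2$ with $a_n$ the unit normal $\Sigma^{1/2}Hh_n/\|\Sigma^{1/2}Hh_n\|$, which is a $\chi^2_1$ law (full rank of $H$ keeps $Hh_n\ne0$). For the upper bound in \eqref{eq:big=000020h=000020cov=000020rate} I would show that every $h\in\mathcal{H}$ also satisfies $\|h\|\to\infty$ uniformly, since $\mathcal{H}$ has bounded diameter. Then $\hat c$ converges to $\sup_a Q((a'\mathbb{Z})^2\mid\mathbb{Z}\in\mathcal{H}_z;\tfrac{1-\alpha}{1-\eta})$ over the limiting directions $a$. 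This gives the bound
\[
P(\hat T_n\le\hat c)\ \le\ P\bigl((a'\mathbb{Z})^2\le\hat c,\ \mathbb{Z}\in\mathcal{H}_z\bigr)+\eta .
\]
The first term is $1-\alpha$ to first order, but the supremum over $a$ in $\hat c$ makes it possibly larger. I expect this to be the main obstacle. To get a strict bound below $1-\alpha+\eta$, I would use that the directions arising from $\mathcal{H}$ collapse to a single $a$ as $\|h_n\|\to\infty$, because $\mathcal{H}$ has bounded width relative to $\|h_n\|$. Then $\hat c\to c(a)$ and the limit is $1-\alpha+P\bigl((a'\mathbb{Z})^2\le c(a),\ \mathbb{Z}\notin\mathcal{H}_z\bigr)$. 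This is at most $1-\alpha+\eta$, with strict inequality because $(a'\mathbb{Z})^2>c(a)$ has positive probability on $\mathcal{H}_z^c$.

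Finally, uniformity over $\mathcal{P}$ follows because every sequence has such a subsequence.
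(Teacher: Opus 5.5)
Your treatment of the bounded case $h_n\to h$ matches the paper's Case 2. You also make explicit a step the paper leaves implicit: on $\{\mathbb{Z}_n\in\mathcal{H}_z\}=\{h_n\in\mathcal{H}\}$ one has $\hat c\ge c_n(h_n)$, and the joint Gaussian limit gives $(1-\eta)\cdot\tfrac{1-\alpha}{1-\eta}$.

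\textbf{The gap is in the diverging case.} Your dichotomy ($h_n\to h$ versus $\|h_n\|\to\infty$) does not separate out the case $\theta_{P_n}\to\theta_\infty\neq\theta_\star$. The paper treats this as its own Case 1, after also extracting a subsequence along which $\theta_{P_n}$ converges.
\begin{itemize}
\item In that case your first step fails. Lemma \ref{lem:Tn=000020=00003D=000020T+o1=0000202} requires $s_n=\|\theta_n-\theta_\star\|^{-1}\to\infty$. Moreover, $\hat T_n(g(\theta_{P_n}))=\hat T_n^*(h_n)+o_p(1)$ is false as a statement about random variables coupled through $\mathbb{Z}_n$.
\item Away from $\theta_\star$ the constraint linearizes around $\theta_\infty$. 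So $\hat T_n\approx(b'\mathbb{Z}_n)^2$ with $b\propto\Sigma^{1/2}\nabla g(\theta_\infty)$, while $\hat T_n^*(h_n)$ evaluated at $\mathbb{Z}_n$ is $\approx(a'\mathbb{Z}_n)^2$ with $a\propto\Sigma^{1/2}H(\theta_\infty-\theta_\star)$.
\item These directions agree for exactly quadratic $g$, as in the mediation example, but not in general. For instance, take $g(\theta)=\theta_1\theta_2+\theta_1^3$ at $\theta_\infty=(1,0)'$: then $\nabla g=(3,1)'$ while $H\theta_\infty=(0,1)'$.
\item Your coupled lower bound then becomes $P\bigl((b'\mathbb{Z})^2\le c(a),\ \mathbb{Z}\in\mathcal{H}_z\bigr)$. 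This need not be $\ge 1-\alpha$, because $c(a)$ is calibrated to the joint law of $(a'\mathbb{Z})^2$ and the event $\{\mathbb{Z}\in\mathcal{H}_z\}$.
\item Concretely, take $a=e_1$, $b=e_2$ and the slab $\mathcal{H}_z=\{z:|z_1|\le q\}$. Then $c(a)=Q(\chi^2_1,1-\alpha)$, and your bound is only $(1-\alpha)(1-\eta)$.
\end{itemize}
For a rotation-invariant $\mathcal{H}_z$, such as the ball in the footnote, the problem disappears. The theorem, however, allows a general $\mathcal{H}_z$.

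\textbf{The fix is the paper's route,} which avoids the joint event entirely once $\|h_n\|\to\infty$.
\begin{itemize}
\item Fix $z\in\mathcal{H}_z$ and take the feasible point $\hat h_n=r_n(\hat\theta-\theta_\star)-\hat\Sigma^{1/2}z\in\mathcal{H}$. Then $\hat c\ge Q\bigl(\hat T_n^*(\hat h_n)\mid\mathbb{Z}\in\mathcal{H}_z;\tfrac{1-\alpha}{1-\eta}\bigr)=c(a)+o_p(1)$ by Lemma \ref{lem:T*approximation}.
\item Next, $P(\chi^2_1\le c(a))\ge P\bigl((a'\mathbb{Z})^2\le c(a),\ \mathbb{Z}\in\mathcal{H}_z\bigr)\ge 1-\alpha$. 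This is the Bonferroni-type content of Lemma \ref{lem:con=000020quantile}.
\item Finally, use only the \emph{marginal} convergence $\hat T_n\rightsquigarrow\chi^2_1$. This comes from the standard MD argument when $\theta_\infty\neq\theta_\star$, and from Lemma \ref{lem:Tn=000020=00003D=000020T+o1=0000202} otherwise. Coverage $\ge 1-\alpha$ then follows whatever direction governs $\hat T_n$.
\end{itemize}
Your exact-limit formula $1-\alpha+P\bigl((a'\mathbb{Z})^2\le c(a),\ \mathbb{Z}\notin\mathcal{H}_z\bigr)=P(\chi^2_1\le c(a))$ survives for the same reason. You should derive it from $\hat c\xrightarrow{p}c(a)$ plus the marginal $\chi^2_1$ limit, not from the coupling.

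On the upper bound you are more careful than the paper, whose Cases 1 and 3 only bound $\hat c$ from below. Your direction-collapse argument does need $\mathcal{H}_z$ to be bounded. For the ball, rotation invariance makes the conditional law of $(a'\mathbb{Z})^2$ given $\mathbb{Z}\in\mathcal{H}_z$ the same for every unit $a$. So the supremum over directions, which you flag as the main obstacle, is then harmless.
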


The slight conservativeness arises from the two-step procedure. Alternatively,
we can introduce a pretest to check whether $h_{n}$ is far away from
zero, e.g. $||h_{n}||>\ln r_{n}$. If so, we can use the standard
critical value $Q(\chi_{1}^{2},1-\alpha)$. The cost is that we need
to introduce an extra tuning parameter.
\begin{remark}
\label{rem:In-general,-if}In general, if $H$ is singular and $g$
is higher order identified, we can construct the critical value using
a similar two step procedure. In the first step, we construct a $1-\eta$
confidence set $\hat{\Theta}$ for $\theta$. In the second step,
we define the critical value as 
\[
\hat{c}=\sup_{\theta\in\hat{\Theta}}Q\left(\inf_{\vartheta:g(\vartheta)=g(\theta)}\left\Vert \mathbb{Z}+r_{n}\hat{\Sigma}^{-1/2}(\theta-\vartheta)\right\Vert ^{2};1-\alpha+\eta\right).
\] 
$H_{0}:g(\theta)=\tau$ is rejected if $\hat{T}_{n}(\tau)>\hat{c}$. \qed
\end{remark}

\begin{remark}
\citet{Dufour2025WaldTW} show that when $g$ is a vector-valued function
and the degree of singularity differs across elements of $g$, the
Wald-type test statistic may diverge, complicating inference. In contrast,
the MD test considered in this paper yields a test statistic that
is first-order stochastically dominated by $\chi_{d}^{2}$, regardless
of the level of singularity in $g$. Moreover, \citet{Dufour2025WaldTW}  focus solely on hypothesis
tests at a fixed point, i.e., testing $g(\theta)=g(\theta_{\star})$,
whereas this paper aims to construct uniformly valid confidence intervals. \qed
\end{remark}

\begin{remark}
\citet{AM-2016-Geometric} construct a uniformly valid MD test based
on a geometric approach that incorporates the curvature of the null
restriction $g(\theta)=\tau$. When the curvature is large, their
procedure may yield overly conservative critical values. For example,
consider the mediation analysis problem in \Cref{ex:mediation} where one is interested in testing the null hypothesis \(H_0:\theta_1\theta_2 = \tau\). As \(\tau\) approaches zero, the curvature of the null manifold can be made arbitrarily large and the critical value of \citet{AM-2016-Geometric} approaches $Q(\chi_{2}^{2},1-\alpha)$. However, as shown in Section \ref{subsec:Bivariate-with-indefinite} of this paper, a uniformly valid critical value in this setting is $Q(\chi_{1}^{2},1-\alpha)$. Indeed, even when \(\tau\) is far from zero, the \citet{AM-2016-Geometric} critical value is always larger than \(Q(\chi_1^2, 1 - \alpha)\).
\qed
\end{remark}

\section{Simulation}
\label{sec:Simulation}
In this section, we examine the size and power properties of the proposed
procedures and compare them with several alternatives. We focus on
the context of Example \ref{ex:mediation}, namely the construction of confidence intervals
for the mediation effect. In addition to the two MD-based methods
proposed in Section \ref{sec:Inference}, one using the $Q(\chi^2_1,1-\alpha)$
critical value (BN1; Section \ref{subsec:Bivariate-with-indefinite})
and one using a bootstrapped critical value (BN2; Section \ref{subsec:General-Case}),
we consider two uniformly valid MD-based alternatives: (i) the procedure
of \citet{AM-2016-Geometric} (AM),\footnote{We report results using their Section 4.1 implementation, which computes
curvature over a restricted set with tuning parameter $\eta=\alpha/10$.
We also implemented their worst-case curvature procedure from Section 2.
The two procedures have nearly identical power, with the latter performing slightly worse.} and (ii) the MD-based method with projection critical value $Q(\chi_{2}^{2},1-\alpha)$.
For comparison, we also include the naive delta method, i.e. a Wald-type
test with critical value $Q(\chi_{1}^{2},1-\alpha)$, and a naive bootstrap method. The nominal rejection rate is $\alpha=0.05$, and the tuning parameter for BN2 is $\eta=\alpha/10$.

We study confidence intervals for $g(\theta)=\theta_{1}\theta_{2}$,
where the estimators are simulated from 
\[
\hat{\theta}-\theta\sim\mathcal{N}\left(0,\left[\begin{array}{cc}
1 & r\\
r & 1
\end{array}\right]\right).
\]
Without loss of generality, we normalize the variance of $\hat{\theta}$
to one. We consider $r=0$ and $r=0.5$,
with $\theta_{2}\in\{2,6\}$ and $\theta_{1}=[-1:0.2:1]\times\theta_{2}$.

In Figure \ref{fig:Rejection-Rate-at=000020tau}, we plot the probability
that the confidence intervals exclude the true value $g(\theta)$.
The naive method does not overreject when $\theta_{1}\theta_{2}=0$,
but its rejection probability is very low near the origin, consistent
with earlier findings (e.g., \cite{Dufour2025WaldTW}). Away from the origin (see, e.g., $\theta_1=\theta_{2}=2$),
 the naive Wald test overrejects. According to
Remark \ref{rem:Example=000020cont}, BN1 is valid for $r=0$. For
$r=0.5$, Theorem \ref{thm:d=00003D2} does not guarantee validity when
$\theta_{1}<0$, $\theta_{2}=2$, or when $\theta_1\in[-1.4,0)$, $\theta_2 = 6$. Nevertheless, BN1 maintains correct size across all
designs, even when these conditions fail, suggesting that the condition
is sufficient but not necessary. As expected, all other MD-based methods
control size.


Figure \ref{fig:Rejection-Rate-at} shows the probability that the
confidence intervals exclude zero, i.e., the probability of obtaining a significant
result. When $\theta$ is close to the origin ($\theta_2=2$), our methods
have substantially higher power than AM, whose performance is close
to that of the simple projection method. When $\theta$ is further from the
origin ($\theta_{2}=6$), power curves across methods are nearly identical.


Finally, Figure \ref{fig:Median-Length-of} reports the median length
of the confidence intervals, computed across $S$ replications. BN1
consistently yields the shortest intervals, with BN2 close behind.
The projection method is the most conservative, producing intervals
19--30\% longer than BN1. AM lies between BN1 and the projection
method, with median lengths 5--18\% longer than those of  BN1. The differences
are most pronounced when $\theta$ is near the origin.
\begin{center}
\begin{sidewaysfigure}
\subfloat[\label{fig:Rejection-Rate-at=000020tau}Rejection Rate at True Value.
Left to Right: $(\theta_{2},\rho)=(2,0),(6,0),(2,0.5),(6,0.5)$.]{\includegraphics[scale=0.37]{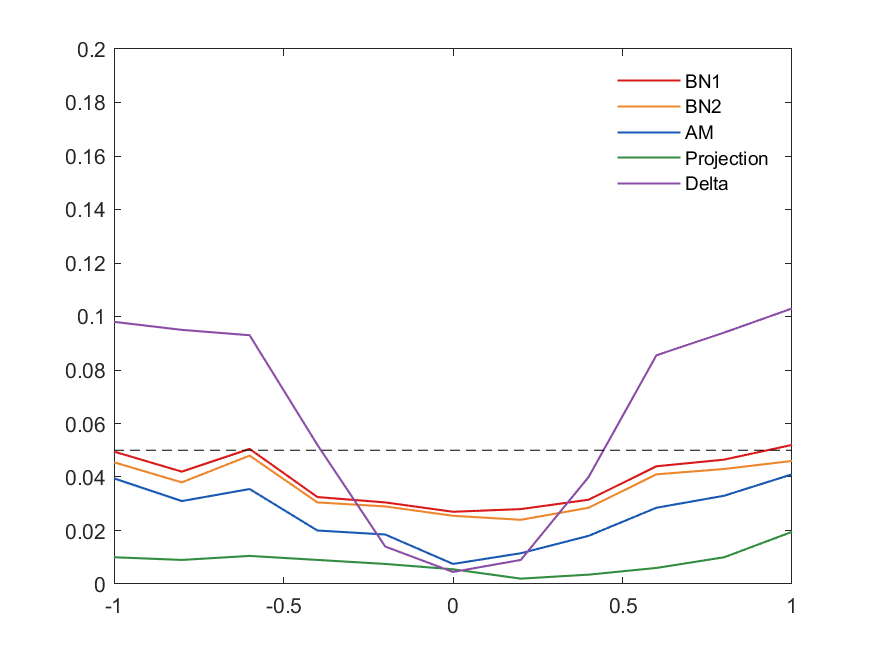}\includegraphics[scale=0.37]{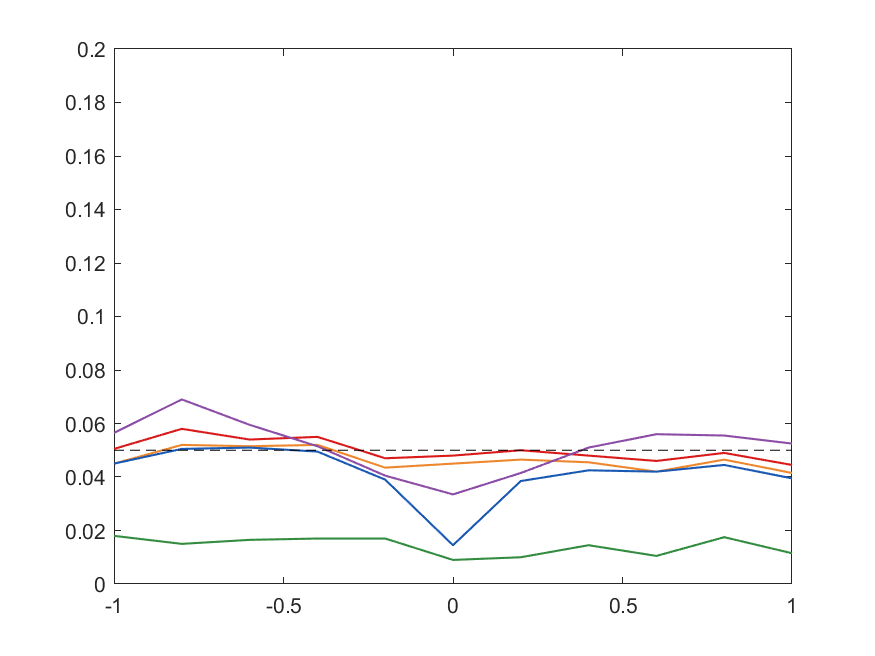}\includegraphics[scale=0.37]{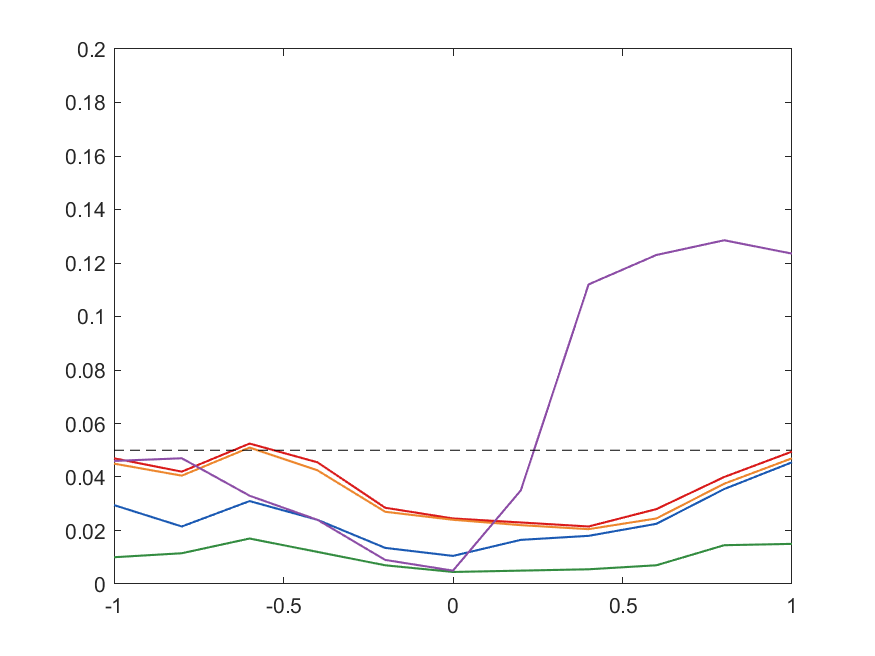}\includegraphics[scale=0.37]{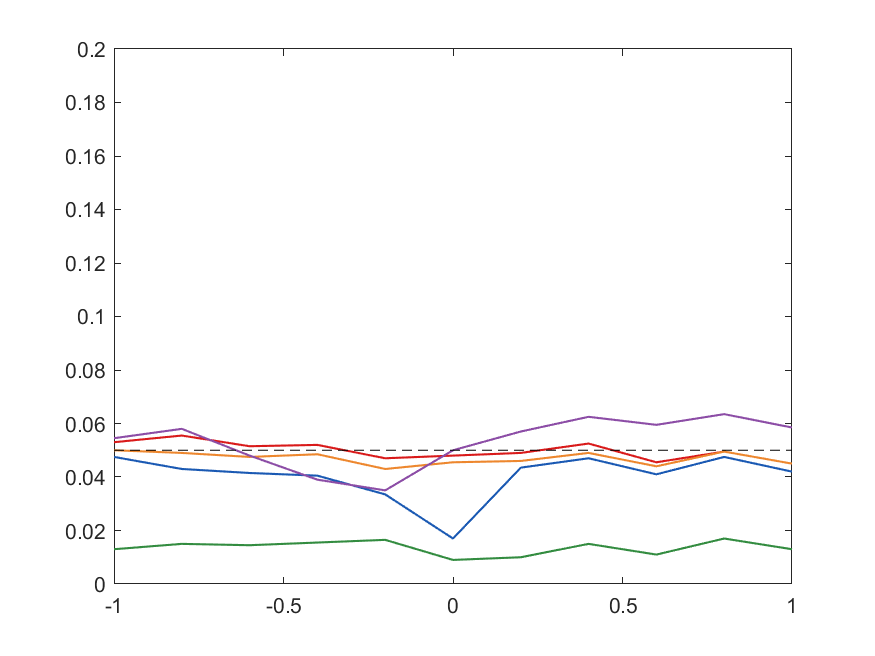}

}

\subfloat[\label{fig:Rejection-Rate-at}Rejection Rate at Zero. Left to Right:
$(\theta_{2},\rho)=(2,0),(6,0),(2,0.5),(6,0.5)$.]{\includegraphics[scale=0.37]{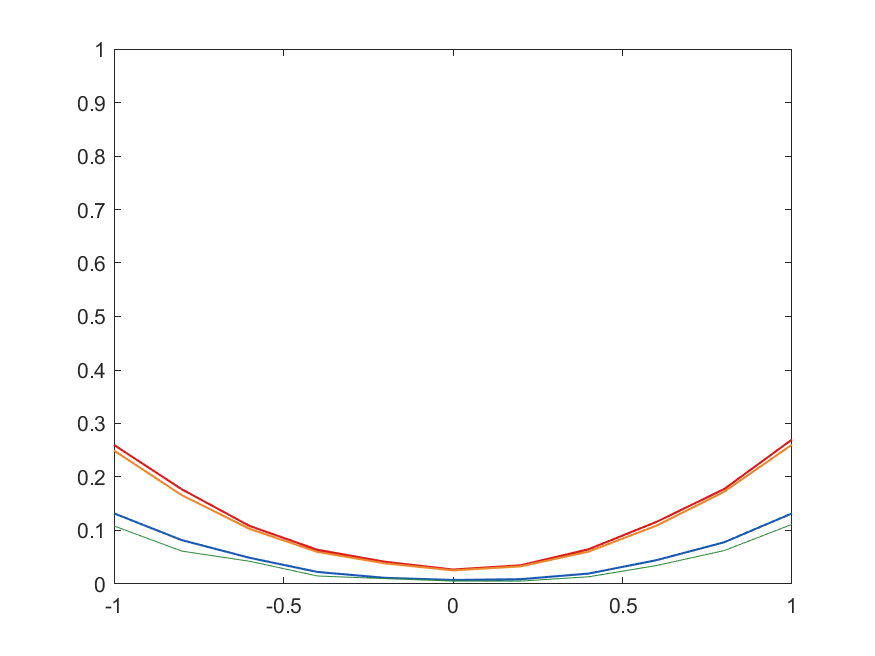}\includegraphics[scale=0.37]{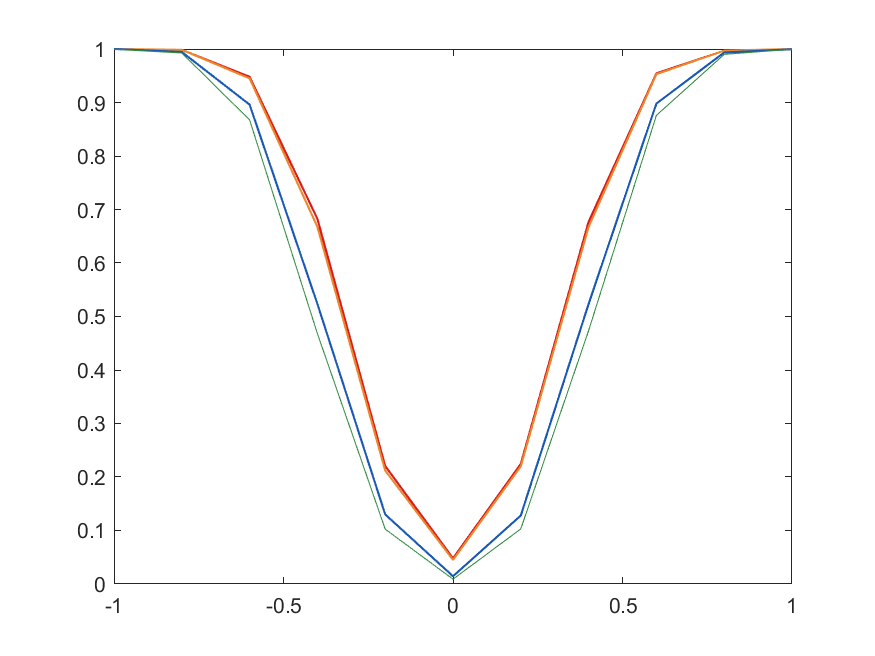}\includegraphics[scale=0.37]{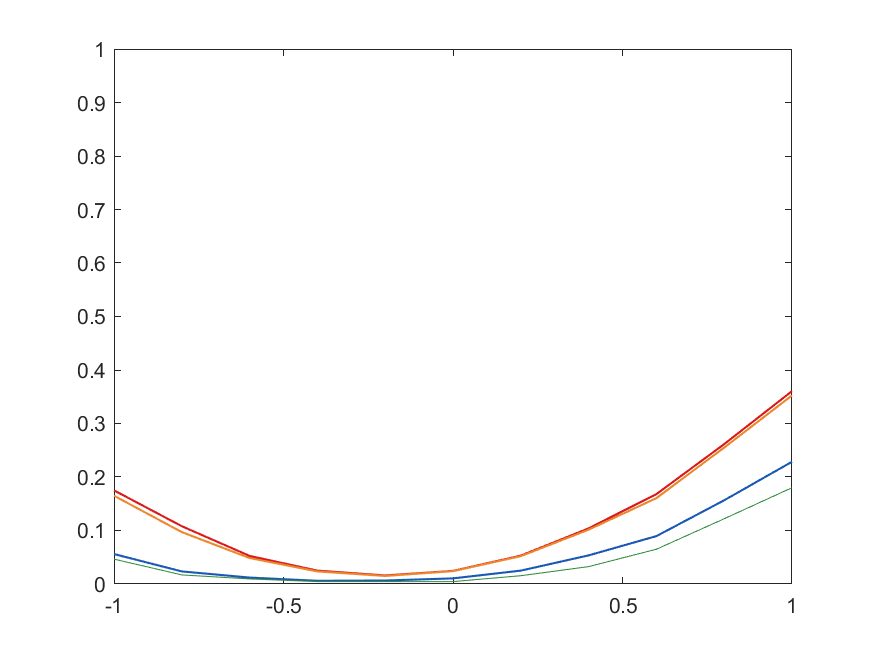}\includegraphics[scale=0.37]{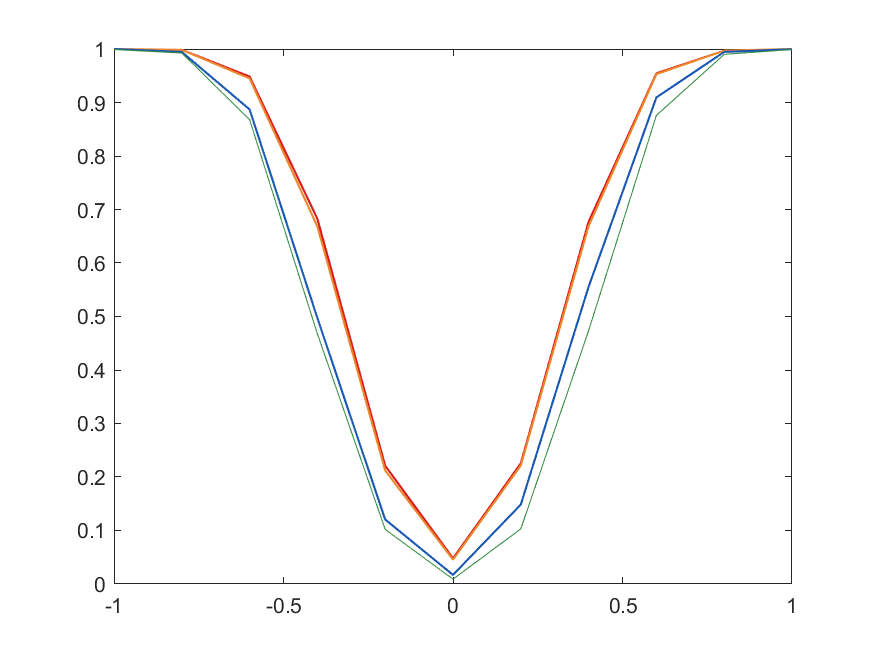}

}

\subfloat[\label{fig:Median-Length-of}Median Length of the Confidence Interval.
Left to Right: $(\theta_{2},\rho)=(2,0),(6,0),(2,0.5),(6,0.5)$.]{\includegraphics[scale=0.37]{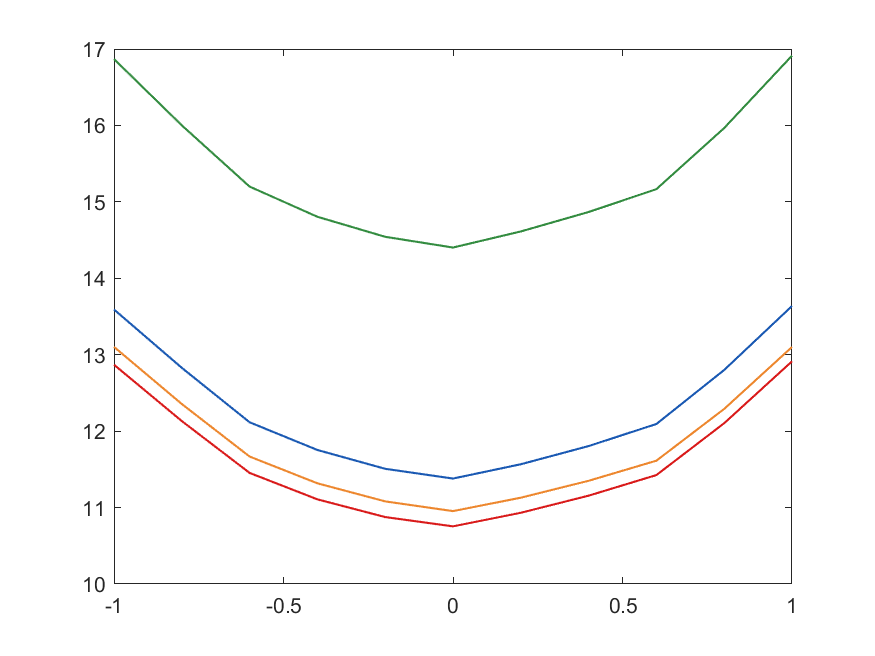}\includegraphics[scale=0.37]{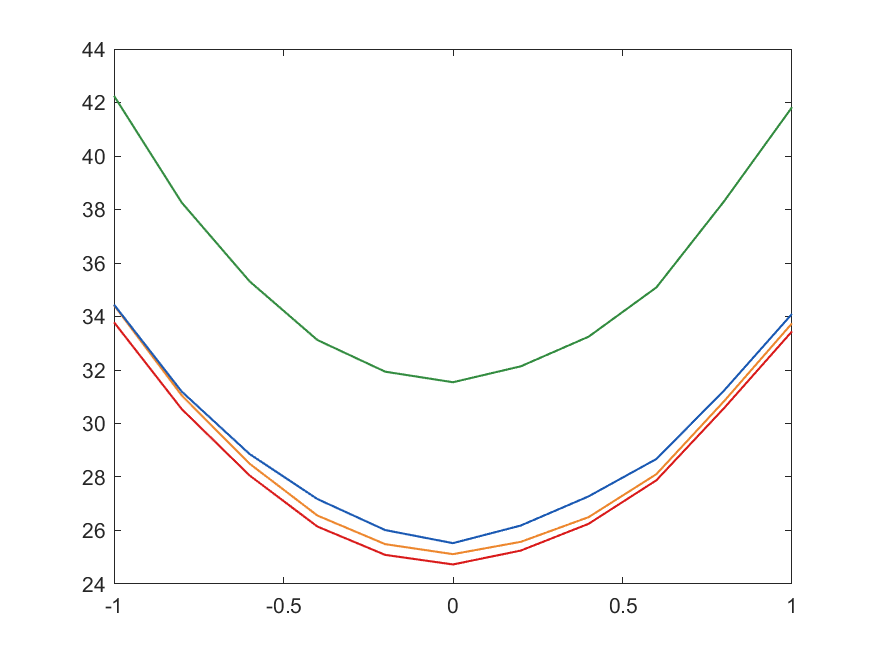}\includegraphics[scale=0.37]{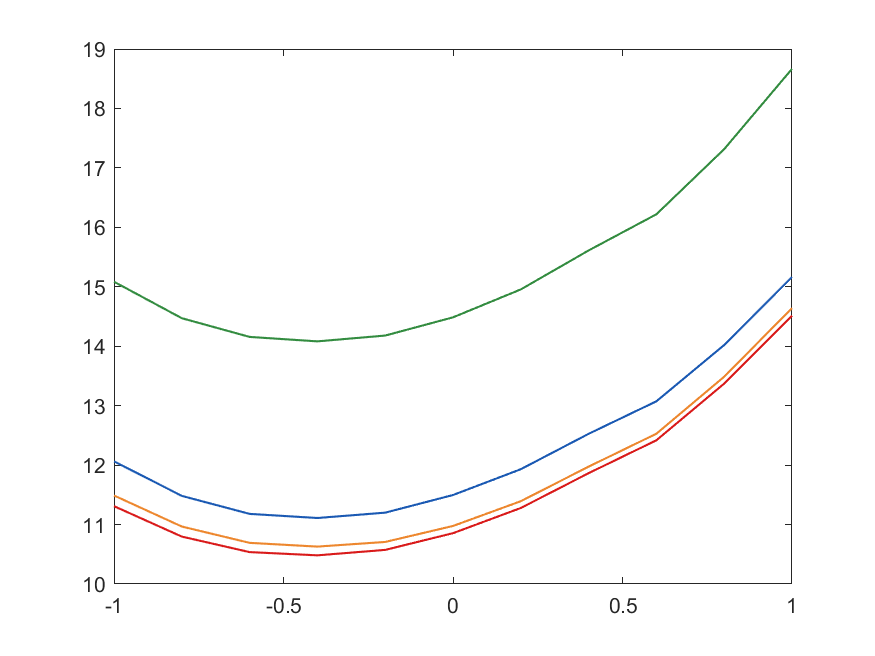}\includegraphics[scale=0.37]{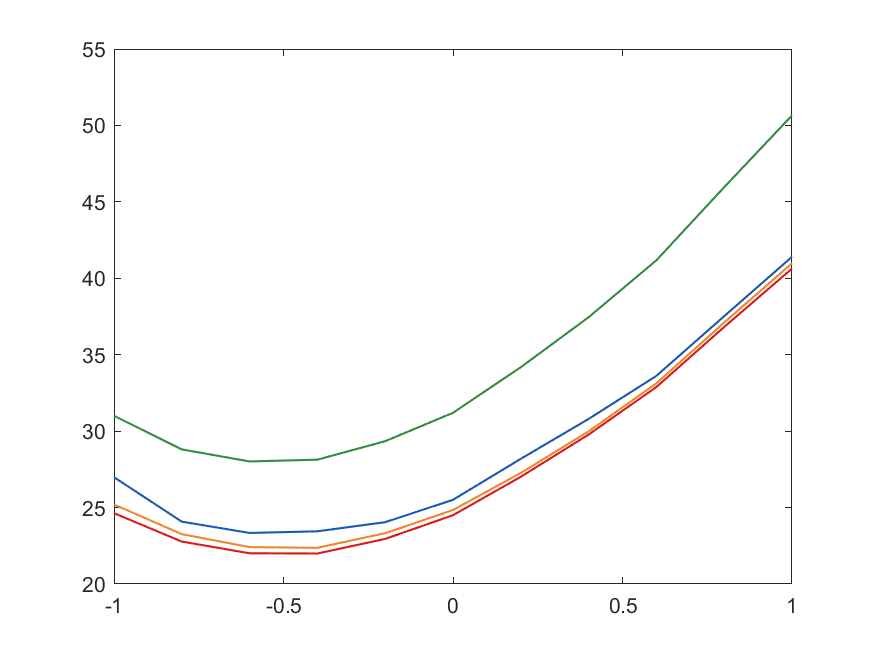}

}

\caption{Simulation Results}\label{fig:Simulation-Results}

\centering{}%
\noindent\begin{minipage}[t]{1\columnwidth}%
\scriptsize The horizontal axis represents $\theta_{1}/\theta_{2}$. The first
and second panels report the probability that the confidence intervals
do not contain the true value $\theta_{1}\theta_{2}$ and zero, respectively.
The third panel reports the median length of the CIs, where the median
is taken over $S=2,000$ samples. %
\end{minipage}
\end{sidewaysfigure}
\par\end{center}

\section{Empirical Application}
\label{sec:Empirical}
We illustrate the empirical relevance of our results using the setting analyzed by \citet{AEM-2018}. Their study takes advantage of a distinctive feature of the Turkish education system, in which elementary school teachers are randomly allocated across schools. This institutional detail generates plausibly exogenous variation in teacher characteristics that can be used to study how teachers' gender role attitudes influence student outcomes. The data include roughly 4,000 third- and fourth-grade students taught by 145 teachers, and students can be grouped according to the length of their exposure to a given teacher --- at most one year, two to three years, or up to four years. The treatment variable is whether a teacher is identified as holding traditional rather than progressive gender beliefs, while the mediator of interest is the student's own gender role beliefs. Following a similar analysis of this data in \citet{van_garderen_nearly_2024}, we focus on verbal test scores as the outcome. \citet{AEM-2018} argue that, after controlling for an extensive set of student, family, teacher, and school characteristics, the identifying assumptions for causal mediation analysis are satisfied in this context.

\begin{table}[htbp]
\centering
\begin{tabular}{|l|c|c|c|c|c|c|}
\hline
\hphantom{\(\Huge(\)}Exposure & $\hat{\theta}_1$ & $t(\hat{\theta}_1)$ & $\hat{\theta}_2$ & $t(\hat{\theta}_2)$ & $\hat{\theta}_1 \cdot \hat{\theta}_2$ & $n$ \\
\hhline{|=|=|=|=|=|=|=|}
Full sample   & 0.199 & 3.140  & -0.119 & -5.343 & -0.024 & 1885 \\
1 year        & 0.256 & 2.052  & -0.097 & -1.941 & -0.025 & 499  \\
2--3 years    & 0.109 & 1.065  & -0.125 & -4.163 & -0.014 & 906  \\
4 years       & 0.064 & 0.513  & -0.113 & -1.931 & -0.007 & 468  \\
\hline
\end{tabular}
\caption{Estimates of Mediation Effects by Teacher Exposure}
\label{tab:altmediation}
\end{table}

\Cref{tab:altmediation} reports estimates from the \citet{AEM-2018} analysis linking teachers’ gender role attitudes to students’ verbal test performance. The first coefficient, $\hat{\theta}_1$, comes from a regression of students’ gender role beliefs on the gender role attitudes of their teachers, with the standard set of student, family, teacher, and school controls included. This coefficient summarizes the extent to which traditional teachers transmit their views to students. The second coefficient, $\hat{\theta}_2$, is estimated from a regression of test scores on both student gender beliefs and teacher attitudes, again with the full set of controls. It reflects how student beliefs are associated with verbal performance once teacher attitudes are held constant. Multiplying these two coefficients gives the mediated, or indirect, effect: the part of the teacher’s influence on scores that operates through the channel of student beliefs. The estimates show that this indirect pathway is negative and relatively small, although it varies across exposure groups, being largest in the one-year sample and smallest for students exposed for four years.

Because the true mediation, or indirect, effect appears to be close to zero, the results of \Cref{sec:Impossibility} suggest that standard approaches to inference will fail. In particular, we cannot construct valid confidence intervals via the typical approach of inverting a \(t\)-test. Instead, we construct confidence intervals using the newly proposed methods of \Cref{sec:Inference}. \citet{van_garderen_nearly_2024} show that the correlation coefficient between $\hat{\theta}_1$ and $\hat{\theta}_2$ is zero; consequently, both of our methods are uniformly valid.\footnote{The validity of BN1 follows from Remark~\ref{rem:Example=000020cont}.}

\renewcommand{\arraystretch}{1.75}
\begin{table}[!htbp]
    \centering\small 
    \resizebox{0.8\columnwidth}{!}{
    \begin{tabular}{|c|cccc|}
        \hline
        Exposure & Full & 1-Year & 2-3 Year & 4 Year \\
        \hhline{|=|====|}
        Point Estimate & \(-0.024\) & \(-0.025\) & \(-0.014\) & \(-0.007\) \\
        \hline
        \(\overset{\bflarrow\text{Interval Length}\flarrow}{\text{95\% \(\mathrm{BN1}\) CI}}\)
        & \(\overset{\bflarrow 0.032 \flarrow}{[-0.042, -0.010]}\)
        & \(\overset{\bflarrow 0.070 \flarrow}{[-0.071, -0.001]}\) 
        & \(\overset{\bflarrow 0.053 \flarrow}{[-0.042, \phantom{-}0.010]}\)
        & \(\overset{\bflarrow 0.070 \flarrow}{[-0.045, \phantom{-}0.025]}\)  \\
    
        95\% \(\mathrm{BN2}\) CI 
        & \(\overset{\bflarrow 0.034 \flarrow}{[-0.044, -0.010]}\)
        & \(\overset{\bflarrow 0.076 \flarrow}{[-0.075, \phantom{-}0.001]}\) 
        & \(\overset{\bflarrow 0.058 \flarrow}{[-0.046, \phantom{-}0.012]}\)
        & \(\overset{\bflarrow 0.076 \flarrow}{[-0.049, \phantom{-}0.027]}\)  \\

        
        95\% AM CI
        & \(\overset{\bflarrow 0.038 \flarrow}{[-0.046, -0.008]}\)
        & \(\overset{\bflarrow 0.086 \flarrow}{[-0.083, \phantom{-}0.003]}\) 
        & \(\overset{\bflarrow 0.068 \flarrow}{[-0.052, \phantom{-}0.016]}\)
        & \(\overset{\bflarrow 0.094 \flarrow}{[-0.059, \phantom{-}0.035]}\)  \\
 
        \(\underset{\vphantom{(}}{\text{95\% Projection CI}}\) 
        & \(\overset{\bflarrow 0.042 \flarrow}{[-0.048, -0.006]}\)
        & \(\overset{\bflarrow 0.092 \flarrow}{[-0.085, \phantom{-}0.007]}\) 
        & \(\overset{\bflarrow 0.070 \flarrow}{[-0.052, \phantom{-}0.018]}\)
        & \(\overset{\bflarrow 0.096 \flarrow}{[-0.059, \phantom{-}0.037]}\)  \\

        \hline
    \end{tabular}}
    \captionsetup{width=0.8\columnwidth, font = small}
    \caption{Mediation Effect  in the data of \citet{AEM-2018} }
    \noindent\begin{minipage}[t]{1\columnwidth}%
{\scriptsize Confidence Intervals are generated by inverting the corresponding tests. Values are rounded to three significant figures. }%
\end{minipage}
    \label{tab:empirical}
\end{table}
We compare our confidence intervals to two other inference procedures that might be applied in this setting, both of which are based on the minimum distance statistic. The first alternate procedure is that of \citet{AM-2016-Geometric}.\footnote{In implementing the test, we follow the empirical application in the working paper version of \citet{AM-2016-Geometric} and only calculate the maximum curvature over a set ``close'' to the point estimate, adjusting the critical value accordingly.} This testing procedure technically does not cover the case where we are testing the null that the mediation effect is equal to zero since the null manifold is not smooth in this case. However, the \citet{AM-2016-Geometric} critical value approaches \(Q(\chi^2_2, 1-\alpha)\) from below as the null hypothesis value approaches zero and \(Q(\chi^2_2, 1 - \alpha)\) is a valid critical value for testing the null that the mediation
effect is equal to zero so we simply modify the procedure slightly to directly use a \(Q(\chi^2_2, 1 - \alpha)\) when the null value is equal to zero. The second method, ``Projection'', simply uses
the \(Q(\chi^2_2, 1 -\alpha)\) at all points, which is justified since, under the null hypothesis, the distance to the null manifold is always less than the distance to the point \((\theta_1,\theta_2)'\).

Consistent with the discussion in \Cref{sec:Inference}, the confidence intervals based on either the $\chi^2_{1}$ critical value (BN1) or the two-step procedure (BN2) are uniformly tighter than those obtained from the \citet{AM-2016-Geometric} simulated critical value (AM); in all specifications, our intervals are strict subsets of theirs. The difference is not only theoretical but also empirically relevant. Using the $\chi^2_{1}$ critical value, for instance, the BN1 confidence interval supports the conclusion of \citet{van_garderen_nearly_2024} that the mediation effect of a one-year exposure to a teacher with traditional views is negative, whereas the alternative methods cannot reject a null of zero at the five-percent level. As expected, the AM intervals lie strictly inside those generated by the projection method, which uses a \(\chi^2_2\) critical value at all points. However, because the true mediation effect appears small in this setting, their simulated critical value converges toward $Q(\chi^2_2, 1 -\alpha)$, which accounts for the close similarity between the two sets of intervals.

\section{Conclusion}
\label{sec:Conclusion}
We examine inference in local regions of first-order degeneracy, meaning that the gradient of the transformation is zero or nearly zero so that first-order approximations alone do not provide reliable information and second-order terms must also be considered. In such regions of local degeneracy, we show that neither regular estimation nor quantile-unbiased procedures are feasible, paralleling impossibility results for nondifferentiable functionals and ruling out standard approaches to inference. We then develop alternate inference procedures based on minimum-distance statistics that deliver uniformly valid confidence intervals. Simulation studies indicate that these procedures control size while maintaining favorable power, and the empirical application to teacher gender attitudes shows that they yield tighter confidence intervals than existing approaches.

\newpage\appendix

\section{Proofs and Supporting Results for Section~\ref{sec:Impossibility}}
\label{sec:Imposs-Proofs}
\textbf{Proof of \Cref{thm:lim-exp}.}
\begin{proof}
   Define \(S_n = r_n^2(\Psi_n - g(\theta_\star))\). Via a second order Taylor expansion, we have
   \begin{align*}
       S_n &= r_n^2(\Psi_n - g(\theta_\star)) \\ 
           &= r_n^2(\Psi_n - g(\theta_\star + h/r_n)) + r_n^2(g(\theta_\star + h/r_n) - g(\theta_\star)) \\ 
           &\overset{h}{\rightsquigarrow} \calL_h + \frac{1}{2}h'\nabla^2g(\theta_\star)h 
   \end{align*}
   where in the last line we use the fact that equation \eqref{eq:ncov} holds for any \(h \in \SR^d\) by hypothesis. Since the experiment \(\{P_\theta: \theta \in \Theta^\circ\}\) satisfies \Cref{assm:lan} with non-singular Fisher information \(\Gamma_{\theta_\star}\), by Theorem 7.10 in \citet{vanDerVaart1998} there is a randomized statistic \(\Psi(X,U)\) in the Gaussian shift experiment \(\{N(h, \Gamma_{\theta_\star}^{-1}): h \in \SR^d\}\) such that \(\Psi(X,U)\) has distribution \(\calL_h + \frac{1}{2}h'\nabla^2 g(\theta_\star) h\) when \(X \sim N(h, \Gamma_{\theta_\star}^{-1})\). Equivalently, \(\Psi(X,U) - \frac{1}{2}h'\nabla^2 g(\theta_\star)h \overset{h}{\sim} \calL_h\).
\end{proof}
\textbf{Proof of \Cref{prop:lim-exp2}.}
\begin{proof}
    (a) We proceed by contradiction, assuming there is an equivariant in law estimator. The characteristic function of the recentered estimator is given by 
    \begin{align}
        \label{eq:char-func}
        \psi(s) = E_h[\exp(is(\Psi(Z,U) - h'Jh))]
    \end{align}
    where, by assumption, \(\psi(s)\) does not depend on \(h\). Let \(\Phi_h(s) = E_h[\exp(is\Psi(Z,U))]\) and notice that \eqref{eq:char-func} implies that we can decompose \(\psi(s)\exp(isf(h)) = \Phi_h(s)\) where we let \(f(h) = h'Jh\) to save notation. We start by showing that \(\Phi_h(s)\) is twice continuously differentiable in \(h\) and deriving expressions for the derivatives. 

    For the first derivative, consider a point \(h_0 \in \SR^d\) and a deviation in the direction \(h\) of size \(r\). We save notation by letting \(\Gamma = \Gamma_{\theta_\star}\) and justify bringing the limit inside the integral by the uniform integrability condition of \citet{hirano2012impossibility}, Lemma 1(b).
    \begin{align*}
       & \lim_{r\downarrow 0} \frac{1}{r}\left[\Phi_{h_0 + rh}(s) - \Phi_{h_0}(s)\right] \\
        =& \lim_{r\downarrow 0} \frac{1}{r}\int_{[0,1]}\int \exp(is\Psi(z,u))\{\phi(z| h_0 + rh, \Gamma^{-1}) -\phi(z|h_0, \Gamma^{-1}\}\,dz\,du\\
        = &\int_{[0,1]}\int \exp(is\Psi(z,u))\lim_{r \downarrow 0}\frac{1}{r}\{\phi(z|h_0 + rh, \Gamma^{-1}) - \phi(z|h_0,\Gamma^{-1})\}\,dz\,du \\ 
        =& \int_{[0,1]}\int \exp(is\Psi(z,u))(z -h_0)'\Gamma h \phi(z|h_0,\Gamma^{-1})\,dz\,du  \\
        = & E_{h_0}[\exp(is\Psi(Z,U))(Z - h_0)']\Gamma h
    \end{align*}
    Since \(h\) is arbitrary here, we can rewrite the above as
    \[
        \nabla \Phi_{h_0}(s) = E_{h_0}[\exp(is\Psi(Z,U))(Z - h_0)']\Gamma
    \]
    where the gradient is understood to be with respect to the argument \(h_0\), i.e \(s\) is kept fixed.
    For the second derivative, we repeat the argument, again letting \(h\) be an arbitrary direction in \(\SR^d\) and justifying bringing the limit into the integral via \citet{hirano2012impossibility}, Lemma 1(b) along with the fact that \(E_{h_0}[\|\exp(is\Psi)(Z-h_0)\|]\) is uniformly bounded over \(h_0\):
    \begin{align*}
        \lim_{r\downarrow 0} &\frac{1}{r}\big[\nabla 
        \Phi_{h_0 + rh}(s) - \nabla \Phi_{h_0}(s)\big] \\ 
        &= \lim_{r\downarrow 0}\frac{1}{r}\Big\{\int_{[0,1]}\int \exp(is\Psi(z,u))(z - h_0)'\Gamma\{\phi(z|h_0 + rh,\Gamma^{-1}) - \phi(z|h_0,\Gamma^{-1})\}\,dz\,du\\
        &\;\;\;\;\;\;\;\;\;\;\;\;\;-\int_{[0,1]}\int \exp(is\Psi(z,u))rh'\Gamma\phi(z|h_0  +rh, \Gamma^{-1})\,dz\,du\Big\} \\
        &= \int_{[0,1]}\int \exp(is\Psi(z,u))(z - h_0)'\lim_{r\downarrow 0}\frac{1}{r}\Gamma\{\phi(z|h_0 + rh,\Gamma^{-1}) - \phi(z|h_0,\Gamma^{-1})\}\,dz\,du\\
        &\;\;\;\;\;\;\;\;\;\;\;\;\;-\int_{[0,1]}\int \exp(is\Psi(z,u))h'\Gamma\lim_{r\downarrow 0}\phi(z|h_0  +rh, \Gamma^{-1})\,dz\,du\Big\} \\
        &= h'\Gamma\int_{[0,1]}\int \exp(is\Psi(z,u))(z - h_0)(z -h_0)'\,\phi(z|h_0,\Gamma^{-1})\,dz\,du\Gamma \\ 
        &\;\;\;\;\;\;- h'\Gamma\int_{[0,1]}\int \exp(is\Psi(z,u))\phi(z|h_0,\Gamma^{-1})\,dz\,du \\ 
        &= h'\Gamma E_{h_0}[\exp(is\Psi(Z,U))(Z - h_0)(Z-h_0)']\Gamma - h'\Gamma E_{h_0}[\exp(is\Psi(Z,U))]
    \end{align*}
    Again, since \(h\) is arbitrary we can write this
    \begin{equation}
        \label{eq:matrix-id}
        \nabla^2 \Phi_{h_0}(s) = \Gamma E_{h_0}[\exp(is\Psi(Z,U))(Z - h_0)(Z - h_0)']\Gamma - \Phi_{h_0}(s)\Gamma
    \end{equation}
    The first and second derivatives of \(\exp(isf(h_0))\) with respect to \(h_0\) can be expressed
    \begin{equation}
        \label{eq:exp-id}
        \begin{split}
            \nabla \exp(is f(h_0)) 
            &= 2is\exp\left(isf(h_0)\right)Jh_0\\
            \nabla^2 \exp\left(is f(h_0)\right)
            &= \exp\left(isf(h_0)\right)\left(2isJ  - 4s^2(Jh_0)(Jh_0)'\right)
        \end{split}
    \end{equation}
    Recall that, by assumption, \(\Phi_{h_0}(s) = \psi(s)\exp(isf(h_0))\) for all \(h_0\). Pick an \(s \neq 0\) such that \(\psi(s) \neq 0\). This is possible since \(\psi(0) = 1\) and \(\psi(\cdot)\) is continuous. Combining \eqref{eq:matrix-id} and \eqref{eq:exp-id} yields, for any \(h_0\), that
    \begin{align}
        \label{eq:final-id}
        \begin{split}
            \psi(s)&\exp\left(isf(h_0)\right)\left(2isJ  - 4s^2(Jh_0)(Jh_0)'\right)\\
            =& \Gamma E_{h_0}[\exp(is\Psi(Z,U))(Z - h_0)(Z - h_0)']\Gamma - \Phi_{h_0}(s)\Gamma
        \end{split}
    \end{align}
    Notice that since \(|\exp(is\Psi(z,u))| = 1\) and \(|\Phi_{h_0}(s)| \leq 1\) for all \(h_0\), the operator norm of the RHS of \eqref{eq:final-id} is bounded uniformly over \(h_0 \in \SR^d\). On the other hand, looking at the LHS of \eqref{eq:final-id} we can see, using \(\|A + B\| \geq \|B\| - \|A\|\), that
    \[
        \|\mathrm{LHS}\| \geq |\psi(s)|\left(4s^2\|Jh_0\|^2 - 2|s|\|J\|\right).
    \]
    Let \(v\) be such that \(\|Jv\| \neq 0\) and let \(h_0 = cv\) for some \(c > 0\) so that \(\|Jh_0\|^2 = c^2\|Jv\|^2\). By sending \(c \to \infty\) we can thus make \(\|\mathrm{LHS}\|\) arbitrarily large, leading to a contradiction since \(\|\mathrm{RHS}\|\) is uniformly bounded over \(h_0 \in \SR^d\).

    (b)  Let \(h\) be such that \(h'Jh \neq 0\). Since \(J\) is assumed symmetric and non-zero, it is guaranteed that such an \(h\) exists. For any \(r \geq 0\), we have that
    \[
        \alpha = P_{(1 + r)h}(\Psi(Z,U) \leq ((1 + r)h)'J((1 + r)h))
    \]
    In particular,
    \[
        0 = \alpha - \alpha = P_{(1 + r)h}(\Psi(Z,U) \leq ((1 + r)h)'J((1 + r)h)) - P_h(\Psi(Z,U) \leq h'J h)
    \]
    and thus 
    \begin{align}
0  =\lim_{r\downarrow0}&\bigg\{\frac{1}{r}\left[P_{(1+r)h}(\Psi(Z,U)\leq(1+r)^{2}h^{\prime}Jh-P_{h}\left(\Psi(Z,U)\leq(1+r)^{2}h^{\prime}Jh\right)\right]\nonumber \\
 & +\frac{1}{r}\left[P_{h}\left(\Psi(Z,U)\leq(1+r)^{2}h^{\prime}Jh\right)-P_{h}\left(\Psi(Z,U)\leq h^{\prime}Jh\right)\right]\bigg\}\label{eq:hexp1}
\end{align}

    Letting \(F_h(x) = P_h(\Psi(Z,U) \leq x)\) and applying the uniform integrability in Lemma 1(a) of \citet{hirano2012impossibility} to justify exchanging limits and integrals as in the proof of \Cref{lemma:aquant}, we obtain for any \(h\)
    \begin{align*}
        h'\Gamma_{\theta_\star}E_h[\bm{1}\{\Psi(X,U) \leq h'J h\}(X - h)] = 2(h'J h)F_h'(h'J h)
    \end{align*}
    From here, take a constant \(c > 0\) and consider the behavior of the LHS and RHS as \(c \to \infty\). Notice that for any \(c > 0\) \(\|c h'\Gamma_{\theta_\star}\| \lesssim c\) while \(\|E_h[\bm{1}\{\Psi(X,U) \leq h'J h\}(X-h)]\| \lesssim 1\) by Cauchy-Schwarz. Meanwhile, \(2((ch)'J (ch)) \propto c^2F_{ch}'((ch)'J(ch))\).  Recall \(F_{h}(h'Jh) = P_h(\Psi(Z,U)  -h'Jh<= 0)\) and $\Psi(Z,U) - h'Jh \sim \calL_h$, \(F_h(h'Jh)\) corresponds to the CDF of \(\calL_{h}\) evaluated at zero.  By assumption, there exists an \(\eps > 0\) such that \(F_h'(h'Jh) \geq \eps\) for all \(h\). Since \(c^2 F_{ch}'((ch)'J(ch)) \to \infty\) as \(c \to \infty\) we arrive at a contradiction.
\end{proof}
\textbf{Proof of \Cref{thm:imposs-main}}.
\begin{proof}
    \Cref{thm:imposs-main} follows directly from \Cref{thm:lim-exp} along with \Cref{prop:lim-exp2}.
\end{proof}
\textbf{Proof of \Cref{thm:impossibility-flat}.}
\begin{proof}
    The first claim follows immediately from the definition of similarity as well as the fact that \(\calP(h)\) is differentiable at zero. For the second claim, it suffices to show that for an indefinite symmetric \(d \times d\) real matrix \(J\) the isotropic set \(\calH_J = \{h \in \SR^d: h'Jh = 0\}\) spans \(\SR^d\). To do so, let us diagonlize \(J = Q\Lambda Q'\) where \(Q\) is orthogonal satisfying \(Q'Q = I\) and \(\Lambda = \diag(\lambda_1,\dots,\lambda_d)\) is a \(d \times d\) diagonal matrix containing the eigenvalues of \(J\). Because \(Q\) is orthogonal, it suffices to show that the isotropic set of \(\Lambda\), \(\calH_\Lambda = \{h \in \SR^d: h'\Lambda h = 0\}\) spans \(\SR^d\).

    Without loss of generality, let us assume that \(\lambda_1 > 0\) and \(\lambda_2 < 0\). Let \(e_1,\dots,e_d\) denote the standard basis vectors in \(\SR^d\). We wish to show that each \(e_j \in \text{span}(\calH_\Lambda)\) for \(j = 1,\dots,d\). If \(\lambda_j = 0\) then trivially \(e_j \in \calH_\Lambda \subseteq \text{span}(\calH_\Lambda)\). Suppose that \(\lambda_j > 0\). Define \(t_j = (-\lambda_j/\lambda_2)^{1/2}\). Consider \(u_j^+ = e_j + t_je_2\) and \(u_j^- = e_j - t_je_2\). Then, notice that
    \begin{align*}
        u_j^+\Lambda u_j^+ &= \lambda_j + t_j^2\lambda_2 = 0 \andbox u_j^- \Lambda u_j^- = \lambda_j + t_j^2\lambda_2 = 0
    \end{align*}
    so that \(u_j^+ \in \calH_\Lambda\) and \(u_j^- \in \calH_\Lambda\). Since \(e_j = \frac{1}{2}(u_j^+ + u_j^-)\) it follows that \(e_j \in \text{span}(\calH_\Lambda)\). The case where \(\lambda_j < 0\) follows symmetrically.

    The claim in \Cref{rem:diffpower} follows from \Cref{lemma:diff-local-power}.
\end{proof}
\textbf{Proof of \Cref{corr:semiparametric-imposs}}
\begin{proof}
    Follows directly from \Cref{thm:imposs-main}.
\end{proof}

\begin{lem}[]
    \label{lemma:aquant}
    Suppose that \(\Psi(Z,U)\) is a statistic in the Gaussian shift experiment \(\{N(h, \Gamma_{\theta_\star}^{-1}): h \in \SR^d\}\) and let \(\calH \subseteq \SR^d\) be a cone such that, for some \(\alpha \in (0,1)\), 
    \[
        \alpha = P_h\left(\Psi(Z,U) \leq \frac{1}{2}h'\nabla^2 g(\theta_\star) h\right),\;\;\;\;\text{ for all } h \in \calH.
    \] 
    Let \(F_\Psi(\cdot)\) denote the CDF of \(\Psi(Z,U)\) under \(h = 0\). Assume that the derivative of \(F_\Psi\) exists at zero. Then \(h'\Gamma_{\theta_\star}E_0[\bm{1}\{\Psi(Z,U) \leq 0\}Z] = 0\) for all \(h \in \calH\).
\end{lem}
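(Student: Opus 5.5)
Fix $h\in\calH$ and write $J=\tfrac12\nabla^2 g(\theta_\star)$ and $\Gamma=\Gamma_{\theta_\star}$. The plan is to differentiate the similarity identity along the ray $\{rh: r\ge 0\}$ at $r=0$, in the spirit of the proof of \Cref{prop:lim-exp2}(b). Since $\calH$ is a cone, $rh\in\calH$ for every $r\ge0$, so
\[
\alpha = P_{rh}\big(\Psi(Z,U)\le r^2 h'Jh\big)\quad\text{for all } r\ge 0 .
\]
At $r=0$ this gives $\alpha=F_\Psi(0)$. Subtracting the two equalities and dividing by $r$ yields
\[
0=\frac1r\Big[P_{rh}\big(\Psi\le r^2h'Jh\big)-P_0\big(\Psi\le r^2h'Jh\big)\Big]+\frac1r\Big[F_\Psi(r^2h'Jh)-F_\Psi(0)\Big].
\]

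The second bracket is easy. Because $F_\Psi$ is differentiable at $0$, we have $F_\Psi(r^2 h'Jh)-F_\Psi(0)=F_\Psi'(0)\,r^2h'Jh+o(r^2)$. Dividing by $r$, this term tends to $0$ as $r\downarrow0$. This is the key difference from \Cref{prop:lim-exp2}(b): there the centering was $(1+r)^2h'Jh$, which moves linearly in $r$, whereas here it moves only quadratically.

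For the first bracket, set $x_r=r^2h'Jh$ and write it as
\[
\frac1r\int_{[0,1]}\int \bm 1\{\Psi(z,u)\le x_r\}\{\phi(z|rh,\Gamma^{-1})-\phi(z|0,\Gamma^{-1})\}\,dz\,du .
\]
The plan is to replace the difference quotient of the densities by its limit, $z'\Gamma h\,\phi(z|0,\Gamma^{-1})$, and then show that
\[
\frac1r\int\bm1\{\Psi\le x_r\}\{\phi_{rh}-\phi_0\}-\int \bm 1\{\Psi\le x_r\}\,z'\Gamma h\,\phi_0\;\to\;0 .
\]
This holds because the indicator is bounded by one and the difference quotients converge in $L^1(dz)$. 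For Gaussian location families that $L^1$ convergence is standard; equivalently, one can invoke the uniform integrability from Lemma 1(a) of \citet{hirano2012impossibility}, as the paper does.

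What remains is to show that
\[
E_0\big[\bm1\{\Psi\le x_r\}Z\big]\to E_0\big[\bm1\{\Psi\le 0\}Z\big].
\]
The difference is bounded by $E_0\big[|\bm1\{\Psi\le x_r\}-\bm1\{\Psi\le0\}|\,\|Z\|\big]\le \big(|F_\Psi(x_r)-F_\Psi(0)|\big)^{1/2}\big(E_0\|Z\|^2\big)^{1/2}$ by Cauchy--Schwarz. The right-hand side tends to zero because $F_\Psi$ is continuous at $0$, which follows from differentiability there. Taking limits then gives $0=h'\Gamma E_0[\bm1\{\Psi\le0\}Z]+0$, which is the claim.

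I expect the main obstacle to be the interchange of limit and integral in the first bracket when the threshold $x_r$ itself moves with $r$. The approach is to split the bracket into two pieces: a fixed-indicator difference quotient, handled by $L^1$ convergence of the density quotients uniformly over indicator functions, and the threshold-shift term, handled by Cauchy--Schwarz together with continuity of $F_\Psi$. If $h'Jh=0$ the threshold does not move at all and the argument simplifies.
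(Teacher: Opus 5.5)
Your proposal is correct and follows essentially the same route as the paper. You use the same split of $\frac{1}{r}\big[P_{rh}(\Psi\le r^2h'Jh)-P_0(\Psi\le 0)\big]$ into a density-difference term and a threshold-shift term, and the latter vanishes because the threshold moves only at order $r^2$ while $F_\Psi$ is differentiable at zero. For the first term, you use $L^1$ convergence of the Gaussian difference quotients and then Cauchy--Schwarz with continuity of $F_\Psi$ at zero to move the indicator threshold; this is a more explicit version of the paper's appeal to Lemma 1(a) of \citet{hirano2012impossibility} together with the pointwise limit $\bm{1}\{\Psi\le r^2h'Jh\}\to\bm{1}\{\Psi\le 0\}$, which tacitly needs $P_0(\Psi=0)=0$ when $h'Jh<0$, and your argument supplies exactly that.
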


\begin{proof}
    The proof of the following lemma closely follows that of Proposition 1(c) in \citet{hirano2012impossibility}. To simplify notation, let \(J = \frac{1}{2}\nabla^2 g(\theta_\star)\). For any \(r \geq 0\) we have that
    \[
        \alpha = P_{rh}\left( \Psi(Z,U) \leq (rh)'J(rh)\right) .
    \]
    Evaluating the above expression at \(r > 0\) and \(r = 0\) yields
    \[
        0 = \alpha - \alpha = P_{rh}\left(\Psi(Z,U) \leq (rh)'J(rh)\right) - P_0(\Psi(Z,U) \leq 0).
    \]
    and thus
    \begin{equation}
        \label{eq:lim1}
        \begin{split}
            0 &= \lim_{r \downarrow 0} \bigg\{ \frac{1}{r}\left[P_{rh}(\Psi(Z,U) \leq (rh)'J(rh)) - P_0\left(\Psi(Z,U) \leq (rh)'J(rh)\right)\right] \\ 
              &\;\;\;\;\;\;\;\;\;\;\;\;\;\;\;\;\;\;\;\;\;\;\;\;\;\;+ \frac{1}{r}\left[P_0\left(\Psi(Z,U) \leq (rh)'J(rh)\right) - P_0\left(\Psi(Z,U) \leq 0\right)\right]\bigg\}.
        \end{split}
    \end{equation}
    Each of the terms on the RHS of \eqref{eq:lim1} exist, so we can write the limit of the sum as the sum of the limits. Let \(\phi(\cdot|\mu,\Sigma)\) denote the pdf of a normal distribution with mean \(\mu\) and variance \(\Sigma\). Consider the first term. Applying the uniform integrability condition in Lemma 1(a) of \citet{hirano2012impossibility} to justify interchanging limits and integrals below, we obtain 
    \begin{align*}
        \lim_{r \downarrow 0} \frac{1}{r}\big[&P_{rh}(\Psi(Z,U) \leq (rh)'J(rh)) - P_0(\Psi(Z,U) \leq (rh)'J(rh))\big]  \\
        &= \lim_{r \downarrow 0} \int_{[0,1]}\int \bm{1}\left\{\Psi(z,u) \leq (rh)'J(rh)\right\}\times \frac{1}{r}[\phi(z|rh, \Gamma_{\theta_\star}^{-1}) - \phi(z|0, \Gamma_{\theta_\star}^{-1})]\,dz\,du \\ 
        &= \int_{[0,1]}\int \lim_{r \downarrow 0} \bm{1}\left\{\Psi(z,u) \leq (rh)'J(rh)\right\}\times \frac{1}{r}[\phi(z|rh, \Gamma_{\theta_\star}^{-1}) - \phi(z|0, \Gamma_{\theta_\star}^{-1})]\,dz\,du \\ 
        &= \int_{[0,1]}\int \bm{1}\{\Psi(z,u) \leq 0\}\left(\frac{\partial }{\partial \tilde h} \phi(z|\tilde h, \Gamma_{\theta_\star}^{-1}\right)_{\tilde h = 0} h \,dz\,du \\ 
        &= h'\Gamma_{\theta_\star}\left\{\int_{[0,1]} \int \bm{1}\{\Psi(z,u) \leq 0\}z\phi(z|0,\Gamma_{\theta_\star}^{-1})\,dz\,du\right\}
    \end{align*} 
    Since the  derivative of \(F_\Psi(\cdot)\) at zero exists and \(\frac{\partial }{\partial r} (rh)'\mJ(rh)\big|_{r = 0} = 0\), the second term on the RHS of \eqref{eq:lim1} evaluates to zero. Thus, we obtain for any \(h \neq  0\) that 
    \[
        0 = h'\Gamma_{\theta_\star}\left\{\int_{[0,1]} \int \bm{1}\{\Psi(z,u) \leq 0\}z\phi(z|0,\Gamma_{\theta_\star}^{-1})\,dz\,du\right\}
    \]
    which gives the result.
\end{proof}

\begin{lem}[]
    \label{lemma:flatpower}
    Let \(\Psi(Z,U)\) be a statistic in the Gaussian shift experiment \(\{N(h, \Gamma_{\theta_\star}^{-1}), h \in \SR^d\}\) such that for (i) for some \(\alpha \in (0,1)\) and cone \(\calH \subset \SR^d\), 
    \[
        \alpha = P_h\left(\Psi \leq \frac{1}{2}h'\nabla^2 g(\theta_\star) h\right)\;\;\text{ for all }h \in \calH,
    \] 
    and (ii) the CDF of \(\Psi(Z,U)\) under \(h = 0\), \(F_\Psi(\cdot)\) is differentiable at zero. Consider the level \(\alpha\) test based on \(T\), that is the test that rejects if \(\Psi(Z,U) \leq 0\). Define \(\calP(h) = P_h(\Psi(Z,U) \leq 0)\) the power curve for this test. This power curve is flat around zero in the direction \(h\) in the sense that \(D_h \calP(0)\) exists and is equal to zero.
\end{lem}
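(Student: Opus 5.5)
The plan is to compute $D_h\calP(0)$ directly by differentiating under the integral with respect to the Gaussian shift, and then to show the result vanishes using \Cref{lemma:aquant}. Throughout, the direction $h$ is taken in the cone $\calH$, which is how the statement should be read. We write $J=\frac12\nabla^2 g(\theta_\star)$ and $\Gamma=\Gamma_{\theta_\star}$.

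\textbf{Step 1: a derivative formula for the power curve.} For $r>0$,
\[
\calP(rh)-\calP(0)=\int_{[0,1]}\int \bm{1}\{\Psi(z,u)\le 0\}\big[\phi(z|rh,\Gamma^{-1})-\phi(z|0,\Gamma^{-1})\big]\,dz\,du .
\]
Divide by $r$ and let $r\downarrow0$. The integrand $\bm{1}\{\Psi\le 0\}$ is bounded and does not depend on $r$. Exchanging limit and integral is therefore justified by the uniform integrability condition of Lemma 1(a) in \citet{hirano2012impossibility}, exactly as in the proof of \Cref{lemma:aquant}. This gives
\[
D_h\calP(0)=h'\Gamma\,E_0\big[\bm{1}\{\Psi(Z,U)\le 0\}Z\big].
\]
This step is routine, and it also shows that the one-sided directional derivative exists. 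Applying the same computation to $r\uparrow 0$ via the direction $-h$ gives the two-sided version, since the formula is linear in $h$.

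\textbf{Step 2: the derivative vanishes.} Under hypotheses (i) and (ii), \Cref{lemma:aquant} applies with the same cone $\calH$. It yields $h'\Gamma E_0[\bm{1}\{\Psi(Z,U)\le 0\}Z]=0$ for every $h\in\calH$. Combining this with Step 1 gives $D_h\calP(0)=0$ for all $h\in\calH$.

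The substantive input is hidden inside \Cref{lemma:aquant}. The quantile constraint at $rh$ involves the threshold $r^2h'Jh$, whose derivative in $r$ at $0$ is zero. Differentiability of $F_\Psi$ at zero therefore kills the threshold-shift term, leaving only the shift term. That shift term is exactly the expression in Step 1.

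\textbf{Main obstacle.} The main obstacle is the limit interchange in Step 1. I would handle it as \citet{hirano2012impossibility} do: write the difference quotient of the normal density as $\frac1r[\phi(z|rh)-\phi(z|0)]$. By the mean value theorem, this is dominated by an integrable envelope of the form $C(1+\|z\|)\sup_{|t|\le1}\phi(z|th,\Gamma^{-1})$. Dominated convergence then applies, since the indicator is bounded by one.
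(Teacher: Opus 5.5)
Your proposal is correct and follows the paper's proof essentially step for step. The paper also differentiates $r \mapsto P_{rh}(\Psi(Z,U)\le 0)$ under the integral, justifies the interchange via Lemma 1(a) of \citet{hirano2012impossibility} to obtain $h'\Gamma_{\theta_\star}E_0[\bm{1}\{\Psi(Z,U)\le 0\}Z]$, and concludes with \Cref{lemma:aquant}; your explicit restriction to directions $h\in\calH$ (left implicit in the statement) and your dominating envelope are useful clarifications, not a different route.
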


\begin{proof}
    Consider a deviation in the direction \(h\). Define \(\calP_h(r) = P_{rh}(\Psi(Z,U) \leq 0)\). We want to show that 
   \[
       \frac{\partial }{\partial r}\calP_h(r) \big|_{r = 0} = \lim_{r \downarrow 0} \frac{P_{rh}(\Psi(Z,U) \leq 0) - P_0(\Psi(Z,U) \leq 0)}{r} = 0
   \]
   Let us expand the above limit and, as in the proof of \Cref{lemma:aquant}, invoke Lemma 1(a)  in \citet{hirano2012impossibility} to justify exchanging a limit and an integral below.
   \begin{align*}
       \frac{\partial }{\partial r}\calP_h(r) \big|_{r = 0}
       &= \lim_{r \downarrow 0} \int_{[0,1]}\int \bm{1}\{\Psi(z,u) \leq 0\}\times  \frac{1}{r}[\phi(z | rh, \Gamma_{\theta_\star}^{-1}) - \phi(z | 0, \Gamma_{\theta_\star}^{-1})]\,dz\,du\\
       &= \int_{[0,1]}\int \bm{1}\{\Psi(z,u) \leq 0\}\times \lim_{r\downarrow 0} \frac{1}{r}[\phi(z | rh, \Gamma_{\theta_\star}^{-1}) - \phi(z | 0, \Gamma_{\theta_\star}^{-1})]\,dz\,du \\ 
       &= h'\Gamma_{\theta_\star}\int_{[0,1]}\int \bm{1}\{\Psi(z,u) \leq  0\} z \phi(z | 0, \Gamma_{\theta_\star}^{-1})\,dz\,du \\ 
       &= h'\Gamma_{\theta_\star}E_0[\bm{1}\{\Psi(Z,U) \leq 0\} Z] \\
       &=  0
   \end{align*}
   where the final equality comes from \Cref{lemma:aquant}.
\end{proof}

\begin{remark}[]
    \label{rem:rem1}
    The proof of \Cref{lemma:flatpower} could be obtained almost directly from the proof of \Cref{lemma:aquant}. However, the statement of \Cref{lemma:aquant} additionally implies that \(\Cov_0(\bm{1}\{\Psi(Z,U) \leq 0\}, Z) = 0\), which is also an interesting restriction on any \(\alpha\)-quantile unbiased estimate. \qed
\end{remark}

\begin{lem}[]
    \label{lemma:diff-local-power}
    Suppose \Cref{assm:lan} holds at \(\theta_\star\) with Fisher information \(\Gamma_{\theta_\star}\). Let \(\Psi_n\) be a real-valued statistic and consider the test \(\Xi_n=\bm{1}\{\Psi_n\geq 0\}\). Let \(\calP\) be the local asymptotic power curve defined in \eqref{eq:local-power-curve}. Suppose that \(\Psi_n \overset{h}{\rightsquigarrow} \calL_h\) for each \(h \in \SR^d\), and let \(F_0\) be the CDF of \(\calL_0\). If \(F_0\) is continuous at zero, then \(\calP(h)=\lim_{n\to\infty}P_{\theta_{n,h}}(\Xi_n=1)\) for each \(h\), \(\calP\) is differentiable at \(0\), and
    \[
        \nabla \calP(0)=\Gamma_{\theta_\star}\,E_0\!\left[\bm{1}\{\Psi(Z,U)\geq 0\}Z\right],
    \]
    where \(Z \sim N(0,\Gamma_{\theta_\star}^{-1})\), \(U \sim \mathrm{Unif}(0,1)\) independently of \(Z\), and \(\Psi(Z,U)\) is a randomized statistic in the Gaussian shift experiment \(\{N(h,\Gamma_{\theta_\star}^{-1}) : h \in \SR^d\}\) such that \(\Psi(Z,U) \sim \calL_h\) for all \(h\).
\end{lem}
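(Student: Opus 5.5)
The plan is to pass to the limit experiment, identify $\calP(h)$ with a Gaussian-shift power function, and then differentiate that function directly.

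\textbf{Step 1 (limit experiment).} Since $\Psi_n \overset{h}{\rightsquigarrow} \calL_h$ for every $h$ and \Cref{assm:lan} holds with nonsingular $\Gamma_{\theta_\star}$, I will invoke Theorem 7.10 of \citet{vanDerVaart1998}, exactly as in the proof of \Cref{thm:lim-exp}. This gives a randomized statistic $\Psi(Z,U)$ in the experiment $\{N(h,\Gamma_{\theta_\star}^{-1}): h\in\SR^d\}$ whose law under $h$ is $\calL_h$ for every $h$.

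\textbf{Step 2 (the limit exists and $\calP(h)=\calL_h\{[0,\infty)\}$).} By the portmanteau theorem, $P_{\theta_{n,h}}(\Psi_n\geq 0)\to \calL_h\{[0,\infty)\}$ whenever $\calL_h\{0\}=0$. The hypothesis only gives this for $h=0$, so I must transfer it to all $h$; this is the one step needing care. I will write
\[
\calL_h\{0\}=\int_{[0,1]}\int \bm{1}\{\Psi(z,u)=0\}\,\phi(z|h,\Gamma_{\theta_\star}^{-1})\,dz\,du .
\]
The Gaussian densities $\phi(\cdot|h,\Gamma_{\theta_\star}^{-1})$ and $\phi(\cdot|0,\Gamma_{\theta_\star}^{-1})$ are mutually absolutely continuous and strictly positive. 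Hence the set $\{(z,u):\Psi(z,u)=0\}$ is null under $h$ if and only if it is null under $0$, and so $\calL_h\{0\}=0$ for all $h$. Consequently the $\limsup$ in \eqref{eq:local-power-curve} is a genuine limit, and
\[
\calP(h)=\int_{[0,1]}\int \bm{1}\{\Psi(z,u)\geq 0\}\,\phi(z|h,\Gamma_{\theta_\star}^{-1})\,dz\,du .
\]

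\textbf{Step 3 (differentiability and the gradient).} The function $\calP(h)$ is the integral of a bounded measurable function against a Gaussian location family. We have $\nabla_h\phi(z|h,\Gamma_{\theta_\star}^{-1})=\Gamma_{\theta_\star}(z-h)\phi(z|h,\Gamma_{\theta_\star}^{-1})$. For $\|h\|\leq 1$ this is dominated by the integrable envelope $C(1+\|z\|)\sup_{\|h\|\leq1}\phi(z|h,\Gamma_{\theta_\star}^{-1})$, which has Gaussian tails. Dominated convergence (equivalently, Lemma 1 of \citet{hirano2012impossibility}, as used in \Cref{lemma:flatpower}) then lets me differentiate under the integral. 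This shows that the partial derivatives exist near $0$ and are continuous in $h$, so $\calP$ is (Fréchet) differentiable at $0$. Evaluating at $h=0$ gives
\[
\nabla\calP(0)=\Gamma_{\theta_\star}E_0[\bm{1}\{\Psi(Z,U)\geq 0\}Z].
\]
Continuity of the partials follows from the same domination argument.

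The only step where I expect any obstacle is Step 2: checking that continuity of $F_0$ at zero propagates to every $\calL_h$. The equivalence of the Gaussian shift measures resolves it, and the rest is routine differentiation of a Gaussian integral.
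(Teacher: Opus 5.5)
Your proof is correct and follows the paper's argument essentially step for step: Theorem 7.10 of van der Vaart gives the limit statistic, mutual absolute continuity of the Gaussian shifts rules out an atom at zero under every $h$, the portmanteau theorem identifies $\calP(h)$, and the Gaussian integral is differentiated under the integral sign. The one minor difference is that you differentiate the density in $h$ and check that the partial derivatives are continuous, which gives Fr\'echet differentiability at $0$ explicitly; the paper instead writes $\calP(rh)$ via the likelihood ratio against $P_0$ and infers differentiability only from the linearity in $h$ of the directional derivatives.
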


\begin{proof}
    By Theorem 7.10 in \citet{vanDerVaart1998}, there exists a randomized statistic \(\Psi(Z,U)\) in the Gaussian shift experiment \(\{N(h,\Gamma_{\theta_\star}^{-1}) : h \in \SR^d\}\) such that \(\Psi(Z,U) \sim \calL_h\) for all \(h\). Since \(F_0\) is continuous at zero, \(P_0(\Psi(Z,U)=0)=0\), and by mutual absolute continuity of \(N(h,\Gamma_{\theta_\star}^{-1})\) in \(h\) we also have \(P_h(\Psi(Z,U)=0)=0\) for all \(h\). Thus, by the Portmanteau theorem,
    \[
        \calP(h) = \lim_{n\to\infty} P_{\theta_{n,h}}(\Psi_n \geq 0) = P_h(\Psi(Z,U)\geq 0).
    \]

    Fix \(h \in \SR^d\) and define \(G(z,u)=\bm{1}\{\Psi(z,u)\geq 0\}\). For \(r \in \SR\), let \(P_{rh}\) denote the law of \(Z \sim N(rh,\Gamma_{\theta_\star}^{-1})\). The likelihood ratio satisfies
    \[
        \frac{dP_{rh}}{dP_0}(Z) = \exp\!\left(rh'\Gamma_{\theta_\star}Z - \tfrac{1}{2}r^2 h'\Gamma_{\theta_\star}h\right),
    \]
    so \(\calP(rh)=E_0\!\left[G(Z,U)\exp\!\left(rh'\Gamma_{\theta_\star}Z - \tfrac{1}{2}r^2 h'\Gamma_{\theta_\star}h\right)\right]\). Differentiating at \(r=0\) and invoking Lemma 1(a) of \citet{hirano2012impossibility} to justify exchanging limits and integrals we obtain:
    \[
        \left.\frac{\partial}{\partial r}\calP(rh)\right|_{r=0}
        = h'\Gamma_{\theta_\star}E_0\!\left[G(Z,U)Z\right].
    \]
    Since the right-hand side is linear in \(h\), \(\calP\) is differentiable at \(0\) with gradient \(\nabla \calP(0)=\Gamma_{\theta_\star}E_0[G(Z,U)Z]\).
\end{proof}

\section{Proofs for Section~\ref{sec:Inference}}
\label{sec:Inference-Proofs}
\subsection{Proofs of the Main Theorems}
We first introduce notation  for the results in Section \ref{subsec:Bivariate-with-indefinite}. Let $\mathcal{C}_{u}(c)$ and $\mathcal{C}_{\ell}(c)$ be
the upper and lower boundaries of $\partial\mathcal{S}^{+}(\tau,c)$,
\begin{align}
\mathcal{C}_{u}(c)=\left\{ \left(C_{u,1}(x_{1},c),C_{u,2}(x_{1},c)\right):x_{1}\in\mathbb{R}\backslash\left(-x^{*}_1,x^{*}_1\right)\right\} \label{Cu(c)}
\end{align}
where
\begin{align*}
C_{u,1}(x_{1},c) & =x_{1}-\frac{c(1-\rho)x_{1}}{\sqrt{(1+\rho)^{2}X_{2}(x_{1})^{2}+(1-\rho)^{2}x_{1}^{2}}},\\
C_{u,2}(x_{1},c) & =X_{2}(x_{1})+\frac{c(1+\rho)X_{2}(x_{1})}{\sqrt{(1+\rho)^{2}X_{2}(x_{1})^{2}+(1-\rho)^{2}x_{1}^{2}}},
\end{align*}
\[
x_{1}^{*}=\begin{cases}
\frac{\sqrt{c^{2}(1-\rho)^{2}-(1+\rho)\tau}}{\sqrt{2}\sqrt{1-\rho}} & \text{if }\tau\leq\frac{c^{2}(1-\rho)^{2}}{1+\rho}\\
0 & \text{otherwise}
\end{cases},
\]
and the lower boundary is 
\begin{align}
\mathcal{C}_{\ell}(c)=\left\{ \left(C_{\ell,1}(x_{1},c),C_{\ell,2}(x_{1},c)\right):x_{1}\in\mathbb{R}\right\} \label{Cl(c)},
\end{align}
where
\begin{align*}
C_{\ell,1}(x_{1},c) & =x_{1}+\frac{c(1-\rho)x_{1}}{\sqrt{(1+\rho)^{2}X_{2}(x_{1})^{2}+(1-\rho)^{2}x_{1}^{2}}},\\
C_{\ell,2}(x_{1},c) & =X_{2}(x_{1})-\frac{c(1+\rho)X_{2}(x_{1})}{\sqrt{(1+\rho)^{2}X_{2}(x_{1})^{2}+(1-\rho)^{2}x_{1}^{2}}}.
\end{align*}
Details on the calculation of $\mathcal{C}_\ell$ and $\mathcal{C}_u$ are given in Lemma \ref{lem:S+}.

If $\tau\leq\frac{c^{2}(1-\rho)^{2}}{1+\rho}$, let the kink of $\mathcal{C}_{u}(\tau,c)$
be 
\begin{equation}
K=\left(0,\frac{\sqrt{2}\sqrt{c^{2}(1-\rho)+\tau}}{\sqrt{1-\rho^{2}}}\right).\label{eq:H}
\end{equation}
Let $\bar{r}(\theta_{1})$ denote the distance between $O=(\theta_{1},X_{2}(\theta_{1}))$ and  $K$,
\begin{equation}
\bar{r}(\theta_{1})=d\left((\theta_{1},X_{2}(\theta_{1})),K\right).\label{eq:rbar}
\end{equation}
Let $B\left((x_{1},x_{2}),r\right)$ denote the ball centered at $(x_{1},x_{2})$
with radius $r$, and $\partial B\left((x_{1},x_{2}),r\right)$ its
boundary (i.e. the circle). Let $\overset{\frown}{AB}$ be the arc
from $A$ to $B$, and $\overline{AB}$ the line segment.

\textbf{Proof of Proposition \ref{thm:main}.}
\begin{proof}
This follows from Proposition \ref{prop:tau=000020large} and \ref{prop:tau<}.
\end{proof}
\begin{prop}
\label{prop:tau=000020large}Let $\tau\geq\frac{c^{2}(1-\rho)^{2}}{1+\rho}$ and $|\rho|<1$.
For all $\theta=(\theta_{1},\theta_{2})\in\mathcal{S}_{0}^{+}(\tau)$,
$(Z_{1},Z_{2})\sim N(0,I_{2})$,
\[
P\left((Z_{1}+\theta_{1},Z_{2}+\theta_{2})\in\mathcal{S}^{+}(\tau,c)\right)\geq1-\alpha.
\]
where
\[
\mathcal{S}^+(\tau,c)=\left\{ (x_{1},x_{2}):(x_{1}-\theta_{1})^{2}+(x_{2}-\theta_{2})^{2}\leq c^{2},(\theta_{1},\theta_{2})\in\mathcal{S}^+_{0}(\tau)\right\} .
\]

\end{prop}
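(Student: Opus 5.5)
We follow the polar-coordinate comparison sketched after Proposition \ref{thm:main}. Fix $\theta=(\theta_1,\theta_2)\in\mathcal{S}_0^+(\tau)$ and write $\hat\theta=\theta+Z$ with $Z\sim N(0,I_2)$. Since the law of $Z$ is rotationally invariant, $\|Z\|$ is independent of the angle of $Z$, which is uniform on $[0,2\pi)$. Hence
\[
P(\theta+Z\in A)=\int_0^\infty \frac{\ell_A(r)}{2\pi r}\,f_{\|Z\|}(r)\,dr,
\]
where $\ell_A(r)$ is the arc length of $\partial B(\theta,r)\cap A$. Take the auxiliary band $\mathcal{S}_{\text{aux}}=\{x:|(x-\theta)'\nu|\le c\}$, with $\nu$ the unit normal to $\mathcal{S}_0^+(\tau)$ at $\theta$ (equivalently one may rotate coordinates so that $\nu=e_2$). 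Its coverage is exactly $P(|N(0,1)|\le c)=1-\alpha$. It therefore suffices to show $\ell_{\mathcal{S}^+(\tau,c)}(r)\ge \ell_{\mathcal{S}_{\text{aux}}}(r)$ for every $r>0$.

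For $r\le c$, the whole circle lies in $B(\theta,c)\subseteq\mathcal{S}^+(\tau,c)$, so there is nothing to prove. For $r>c$, the circle meets $\mathcal{S}_{\text{aux}}$ in two arcs, each subtending a chord of length exactly $2c$ perpendicular to $\nu$. We then use the description of the boundary of $\mathcal{S}^+(\tau,c)$ from Lemma \ref{lem:S+}: the upper curve $\mathcal{C}_u(c)$ and the lower curve $\mathcal{C}_\ell(c)$ are the normal offsets of the hyperbola branch at distance $c$. The hypothesis $\tau\ge c^2(1-\rho)^2/(1+\rho)$ says that the maximal curvature of $\mathcal{S}_0^+(\tau)$, which is $(1-\rho)/\sqrt{\tau(1+\rho)}$ and is attained at the vertex $x_1=0$, is at most $1/c$. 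Consequently $x_1^*=0$, the offset $\mathcal{C}_u(c)$ has no kink, and the normal map $(x_1,s)\mapsto (x_1,X_2(x_1))+s\,n(x_1)$, $|s|\le c$, is injective. So $\mathcal{S}^+(\tau,c)$ is exactly the region between $\mathcal{C}_\ell$ and $\mathcal{C}_u$, each of which is a graph over $x_1$ (or at least a simple curve separating the plane).

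The key geometric step is as follows. Since $\theta$ lies strictly between the two boundary curves and both curves go to infinity, a circle of radius $r>c$ crosses $\mathcal{C}_u$ at two points $A,B$ and crosses $\mathcal{C}_\ell$ at two points $C,D$, one on each side. The arcs $\overset{\frown}{AC}$ and $\overset{\frown}{BD}$ lie in $\mathcal{S}^+(\tau,c)$. We claim $|\overline{AC}|\ge 2c$. Suppose not. The segment $\overline{AC}$ joins the two boundary curves, so it crosses $\mathcal{S}_0^+(\tau)$ at some point $G$. Since $B(G,c)\subseteq\mathcal{S}^+(\tau,c)$ and the region between $\mathcal{C}_u$ and $\mathcal{C}_\ell$ has width $2c$ in the normal direction at every point (no self-overlap under the curvature bound), the endpoints $A$ and $C$ cannot both lie within distance less than $c$ of $G$ on opposite sides. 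Making this precise requires $d(G,\mathcal{C}_u)=d(G,\mathcal{C}_\ell)=c$, which holds because each boundary point is at distance exactly $c$ from the branch and the nearest-point projection is well defined under the curvature bound. It then follows that $|\overline{AG}|\ge c$ and $|\overline{GC}|\ge c$.

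Given that both chords have length at least $2c$, and that an arc on a circle of radius $r$ is monotone in its chord length, the covered arc length is at least $2\cdot 2r\arcsin(c/r)=\ell_{\mathcal{S}_{\text{aux}}}(r)$, which finishes the proof. The main obstacle is the topological and tubular-neighbourhood part: showing that the circle hits each boundary curve exactly twice, and that the nearest-point distance from $G$ to each boundary equals $c$. This is where the curvature condition enters, through the standard fact that an offset curve at distance $c\le 1/\kappa_{\max}$ is regular and that the tube is embedded. To establish it, we would verify that the Jacobian of the normal map, $1-s\kappa(x_1)$, is nonnegative, and combine this with global injectivity, which follows from the convexity of the hyperbola branch.
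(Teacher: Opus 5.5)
\textbf{Gap in how the arcs are produced.} Your architecture matches the paper's. You compare coverage with a band of half-width $c$ through $\theta$ in polar coordinates (Lemma \ref{lem:key=000020lemma}), and you get $|\overline{AC}|\ge 2c$ from a point $G\in\overline{AC}\cap\mathcal{S}_0^+(\tau)$. The problem is your assertion that $\overset{\frown}{AC}$ and $\overset{\frown}{BD}$ lie in $\mathcal{S}^+(\tau,c)$. You justify it by claiming the circle meets each of $\mathcal{C}_u$ and $\mathcal{C}_\ell$ exactly twice, and you plan to get that count from the embedded-tube/curvature bound. The count is false for part of the range the proposition covers.

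When $\rho<0$, the asymptotic slopes $\pm\sqrt{(1-\rho)/(1+\rho)}$ exceed $1$ in absolute value. Then the distance from a point far out on one arm to the other arm is not monotone along that arm. Concretely, take $\rho=-1/2$ and $\tau=20$; the threshold $c^2(1-\rho)^2/(1+\rho)$ is about $17.3$ for $c\approx 1.96$. Then $X_2(x_1)=\sqrt{40+3x_1^2}$; take $\theta=(100,\sqrt{30040})\approx(100,173.3)$. The distance from $\theta$ to $(-s,X_2(-s))$ is about $194.7$ at $s=0$, $173.1$ at $s=50$, and $200$ at $s=100$. The same pattern, shifted by roughly $c$, holds for $\mathcal{C}_u$. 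So a circle of radius $r=180$ meets $\mathcal{C}_u$, and also $\mathcal{S}_0^+(\tau)$, at least four times.

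Embeddedness of the tube controls the geometry near the branch. It does not control how often a large circle meets the far arm, so no tubular-neighbourhood argument can deliver the count. With extra crossings, an unspecified choice of $A$ and $C$ need not bound an arc inside $\mathcal{S}^+(\tau,c)$, and the two arcs need not be disjoint.

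\textbf{Repair.} Choose the endpoints by a first-exit rule instead of by counting. This is what the paper does: $A$ and $C$ are the crossings of $\mathcal{C}_u$ and $\mathcal{C}_\ell$ nearest to a point $I\in\partial B(\theta,r)\cap\mathcal{S}_0^+(\tau)$, and $B$, $D$ are those nearest to $J$. Then $\overset{\frown}{AIC},\overset{\frown}{BJD}\subset\mathcal{S}^+(\tau,c)$ by construction.

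A tidy version runs as follows.
\begin{enumerate}
\item Let $P$ be the point of the circle vertically above the upper offset point of $\theta$, and $Q$ the point vertically below its lower offset point. These lie strictly above $\mathcal{C}_u$ and strictly below $\mathcal{C}_\ell$ (Lemma \ref{lem:I=000020and=000020J}).
\item On each of the two arcs from $P$ to $Q$, let $C$ be the first point of $\mathcal{C}_\ell$, and $A$ the last point of $\mathcal{C}_u$ before $C$.
\item The open arc between $A$ and $C$ misses both curves and abuts both. Since $\mathcal{C}_u$ and $\mathcal{C}_\ell$ are the two graphs bounding $\mathcal{S}^+(\tau,c)$ (Lemma \ref{lem:S+}, which is where the curvature hypothesis enters), this arc lies in $\mathcal{S}^+(\tau,c)$.
\item The two arcs so obtained are disjoint, and your chord bound finishes the argument. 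Additional crossings only add covered length.
\end{enumerate}

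Two smaller points. The bounds $|\overline{AG}|,|\overline{GC}|\ge c$ need no nearest-point projection: every boundary point of the closed $c$-neighbourhood of $\mathcal{S}_0^+(\tau)$ is at distance exactly $c$ from it. Also, the chords cut out by $\mathcal{S}_{\text{aux}}$ are parallel to $\nu$, not perpendicular to it.
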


\begin{proof}
The proof is based on Lemma \ref{lem:key=000020lemma}, with $\bar{\mathcal{S}}=\mathcal{S}^{+}(\tau,c)$.
Condition 1 of Lemma \ref{lem:key=000020lemma} holds trivially. We
now verify Condition 2 of Lemma \ref{lem:key=000020lemma}. Let $r>c$.
By Lemma \ref{lem:S0^=000020cap=000020B}, $\partial B\left(\theta,r\right)$
intersects $\mathcal{S}_{0}^{+}(\tau)$ at a minimum of two points
$I$ and $J$, with $I$ to the left of $J$. See Figure
\ref{fig:Coverage-1-1}. By Lemma \ref{lem:I=000020and=000020J}.\ref{enu:C1=000020tau<-2},
there is a point $P$ in $\partial B\left(\theta,r\right)$ above
curve $\mathcal{C}_{u}(\tau,c)$. Therefore, there is at least one
point on $\partial B\left(\theta,r\right)$ between $P$ and $I$
that intersects $\mathcal{C}_{u}(\tau,c)$. Let the closest point
(if there's more than one point) to $I$ be point $A$. Similarly,
define $B$ as the point on $\partial B\left(\theta,r\right)$ between
$P$ and $J$ that intersects $\mathcal{C}_{u}(\tau,c)$ and is closest
to $J$. By Lemma \ref{lem:I=000020and=000020J}.\ref{enu:C2-1-1-1},
there is a point $Q$ on $\partial B\left(\theta,r\right)$ below
curve $\mathcal{C}_{\ell}(\tau,c)$. Therefore, there is at least
one point on $\partial B\left(\theta,r\right)$ between $Q$ and $I$
that intersects $\mathcal{C}_{\ell}(\tau,c)$. Let the closest point
(if there's more than one point) to $I$ be point $C$. Similarly
define $D$ between $Q$ and $J$. Therefore, by construction $\overset{\frown}{AIC}\subset\mathcal{S}^{+}(\tau,c)$
and $\overset{\frown}{BJD}\subset\mathcal{S}^{+}(\tau,c)$. 

To show that $\text{length}\left(\overset{\frown}{AIC}\right)+\text{length}\left(\overset{\frown}{BJD}\right)\geq4r\arcsin\frac{c}{r}$,
it suffices to show that 
\[
\text{length}\left(\overline{AC}\right)\geq2c\text{ and }\text{length}\left(\overline{BD}\right)\geq2c.
\]
By contradiction, assume that $\text{length}\left(\overline{AC}\right)<2c$.
Let $AC$ intersects $\mathcal{S}_{0}^{+}(\tau)$ at point $G$, then
$B(G,c)\not\subseteq\mathcal{S}^{+}(\tau,c)$, which contradicts the
definition of $\mathcal{S}^{+}(\tau,c)$. Therefore, Condition 1 and
2 of Lemma \ref{lem:key=000020lemma} hold, and 
\begin{equation}
P\left((Z_{1}+\theta_{1},Z_{2}+\theta_{2})\in\mathcal{S}^{+}(\tau,c)\right)\geq1-\alpha.\label{eq:S+>1-alpha}
\end{equation}
This completes the proof. 
\begin{figure}[h]
\begin{centering}
\includegraphics[scale=0.5]{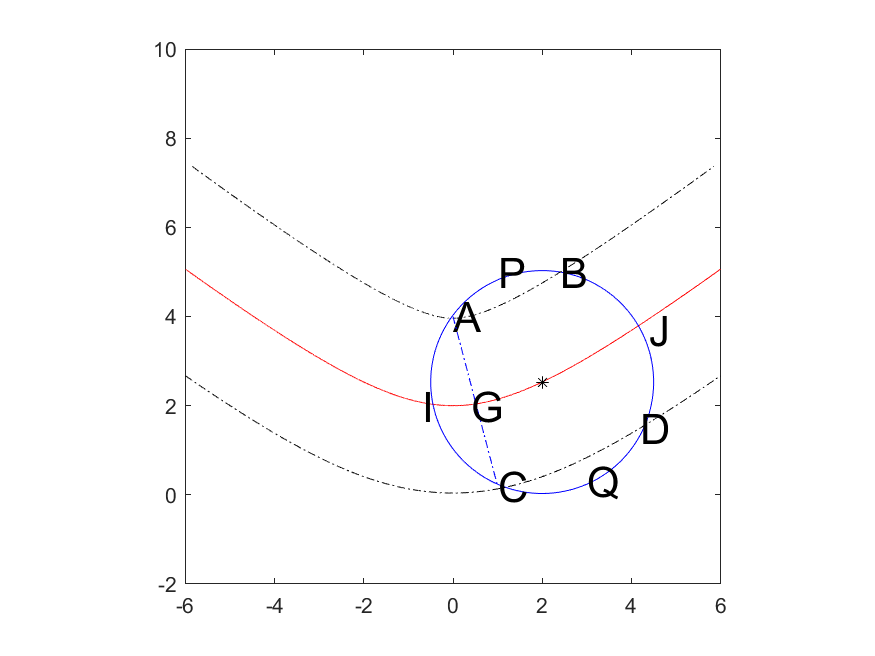}\caption{Illustration of Proposition \ref{prop:tau=000020large}.}\label{fig:Coverage-1-1}
\par\end{centering}
\noindent\begin{minipage}[t]{1\columnwidth}%
{\scriptsize Red curves represent $\mathcal{S}_{0}^{+}(\tau)$. Black dashed curves represent
the boundaries of $\mathcal{S}^{+}(\tau,c)$. ``{*}'' represents $\theta$,
and the blue curves represent $\partial B(\theta,r)$ for some $r>c$.}%
\end{minipage}
\end{figure}
\end{proof}
\begin{prop}
\label{prop:tau<}Let $0<\tau<\frac{(1-\rho)^{2}c^{2}}{1+\rho}$ and
$\rho\in[0,1)$. For all $\theta=(\theta_{1},\theta_{2})\in\mathcal{S}_{0}(\tau)$,
$(Z_{1},Z_{2})\sim N(0,I_{2})$,
\[
P\left((Z_{1}+\theta_{1},Z_{2}+\theta_{2})\in\mathcal{S}(\tau,c)\right)\geq1-\alpha.
\]
\end{prop}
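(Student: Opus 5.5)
By the symmetry $(x_1,x_2)\mapsto(x_1,-x_2)$, which swaps $\mathcal{S}_{0}^{+}(\tau)$ and $\mathcal{S}_{0}^{-}(\tau)$ and leaves $\mathcal{S}(\tau,c)$ and $N(0,I_2)$ invariant, I will assume $\theta=(\theta_1,X_2(\theta_1))\in\mathcal{S}_{0}^{+}(\tau)$. I will then apply Lemma \ref{lem:key=000020lemma} with $\bar{\mathcal{S}}=\mathcal{S}(\tau,c)$, as in the proof of Proposition \ref{prop:tau=000020large}. Condition 1 holds trivially, since $B(\theta,c)\subseteq\mathcal{S}^{+}(\tau,c)$. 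For Condition 2 it suffices to show, for every $r>c$, that the arc length of $\partial B(\theta,r)\cap\mathcal{S}(\tau,c)$ is at least $4r\arcsin(c/r)$, the length covered by the strip $\mathcal{S}_{\text{aux}}$. The new feature relative to Proposition \ref{prop:tau=000020large} is that $\mathcal{C}_u(c)$ now has a kink at $K$ in \eqref{eq:H}. I will therefore split the radii at $\bar r(\theta_1)=d(\theta,K)$ from \eqref{eq:rbar}.

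\emph{Case $c<r\le\bar r(\theta_1)$.} The circle $\partial B(\theta,r)$ does not enclose $K$, so it still meets both branches of the kinked upper boundary $\mathcal{C}_u(c)$. I will reuse the argument of Proposition \ref{prop:tau=000020large} essentially unchanged. Lemma \ref{lem:S0^=000020cap=000020B} gives the intersection points $I,J$ with $\mathcal{S}_0^{+}(\tau)$. The analogues of Lemma \ref{lem:I=000020and=000020J} give a point above $\mathcal{C}_u$ and a point below $\mathcal{C}_\ell$, which produce $A,B,C,D$. The chord argument then gives $|\overline{AC}|,|\overline{BD}|\ge 2c$, since otherwise $B(G,c)\not\subseteq\mathcal{S}^{+}(\tau,c)$ for $G=\overline{AC}\cap\mathcal{S}_0^{+}(\tau)$. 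The one point to check is that the kink does not destroy the "point above $\mathcal{C}_u$" step for $r\le\bar r$. I would verify this by taking the point of the circle on the ray from $\theta$ away from the curve near $x_1=0$.

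\emph{Case $r>\bar r(\theta_1)$.} Here the circle encloses $K$ and may miss $\mathcal{C}_u$, so I will use the second branch instead. The claim to prove is that every point of $\partial B(\theta,r)$ lying outside $\mathcal{S}^{+}(\tau,c)$ lies in $\mathcal{S}^{-}(\tau,c)$, or at least that the covered arc still exceeds $4r\arcsin(c/r)$. The key facts are the following.
\begin{itemize}
\item[(i)] The assumption $\tau<(1-\rho)^2c^2/(1+\rho)$ gives a vertex gap of $2\sqrt{\tau/(1+\rho)}<2c(1-\rho)/(1+\rho)\le 2c$. Hence the $c$-enlargements of the two branches overlap near the origin, and the region between the branches near $x_1=0$ is covered.
\item[(ii)] When $\rho\ge0$, the asymptotes have slopes $\pm\sqrt{(1-\rho)/(1+\rho)}$ of absolute value at most $1$. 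The uncovered part between the branches is then a thin wedge around the $x_1$-axis, and $\mathcal{S}^{-}(\tau,c)$ is the reflection of $\mathcal{S}^{+}(\tau,c)$.
\end{itemize}
Using these facts I would replace the missing upper chord by a chord running from the upper boundary of $\mathcal{S}^{+}$ to the lower boundary $\mathcal{C}_\ell$ of the reflected set $\mathcal{S}^{-}$. The uncovered portions of the circle are then the two wedge regions plus the region above the kink. I would bound each of these by comparing with the arcs cut out by two parallel lines at distance $2c$.

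\textbf{Main obstacle.} The hardest step is the geometric estimate in the second case. It has to control the arc of the circle falling into the uncovered region near $K$ (above $\mathcal{C}_u$) and, at the same time, the arcs falling into the two wedges between the branches. I would first try to show directly that $\partial B(\theta,r)\subseteq\mathcal{S}(\tau,c)$ whenever $r>\bar r(\theta_1)$ and the circle misses $\mathcal{C}_u$. The idea is to show that the farthest point of the circle from both branches is within distance $c$ of $\mathcal{S}_0^{-}(\tau)$, parametrizing by the $x_1$-coordinate and using $\rho\ge0$. If that fails for large $r$, I would fall back to the chord-counting bound for those radii.
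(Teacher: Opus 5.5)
Your reduction to $\theta\in\mathcal{S}_0^+(\tau)$ and your use of Lemma \ref{lem:key=000020lemma} match the paper. However, for $|\theta_1|<x_1^*$ you have the two radius regimes interchanged, and the step you treat as routine is the one that fails.

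By Lemma \ref{lem:I=000020and=000020J}.\ref{lem:min=000020dist=000020H}, the kink $K$ is the point of $\mathcal{C}_u(c)$ nearest to $\theta$. So $\bar r(\theta_1)=d(\theta,\mathcal{C}_u(c))$, and the open disk of radius $\bar r(\theta_1)$ about $\theta$ lies strictly below $\mathcal{C}_u(c)$. Hence for $c<r<\bar r(\theta_1)$ the circle $\partial B(\theta,r)$ neither meets $\mathcal{C}_u(c)$ nor has any point above it. The points $A,B$ of your first case do not exist, and the check you propose (a point of the circle above the curve near $x_1=0$) cannot succeed.

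Nor can any argument that uses $\mathcal{S}^+(\tau,c)$ alone work in this range. Take $\rho=0$, $\tau$ small, $\theta$ the vertex, and $r$ slightly above $c$. In the limit $\mathcal{S}_0^+(\tau)$ becomes $\{x_2=|x_1|\}$, and the bottom quadrant of $\partial B(\theta,r)$ is at distance $r>c$ from it. So roughly a quarter of the circle lies outside $\mathcal{S}^+(\tau,c)$, whereas Condition 2 allows an uncovered length of only $4r\arccos(c/r)\to0$.

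Conversely, in your second case, $r>\bar r(\theta_1)$, the circle encloses $K$. Each branch of $\mathcal{C}_u(c)$ runs from $K$ to infinity, so the circle must cross both branches; it cannot miss $\mathcal{C}_u(c)$. There the chord argument of Proposition \ref{prop:tau=000020large} applies verbatim. Your primary plan of proving $\partial B(\theta,r)\subseteq\mathcal{S}(\tau,c)$ in that case is false for large $r$.

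The correct assignment is the reverse, which is what the paper does.

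For $r>\bar r(\theta_1)$, take $A,B$ on the two branches of $\mathcal{C}_u(c)$ on either side of $K$, take $C,D$ on $\mathcal{C}_\ell(c)$, and reuse the chord bound.

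For $c<r\le\bar r(\theta_1)$, show that the whole circle lies in $\mathcal{S}(\tau,c)$; this is where $\rho\ge0$ and the second branch are needed. The paper does this in two halves.
\begin{enumerate}
\item The part of the circle in $\{x_2\ge0\}$ lies in $\mathcal{S}^+(\tau,c)$. It misses $\mathcal{C}_u(c)$ as shown above. It also cannot meet $\mathcal{C}_\ell(c)$ at a point with $x_2\ge0$ (Lemma \ref{lem:I=000020and=000020J}.\ref{enu:B=000020cap=000020Cl=000020}): reflecting such a point across the tangent to the convex branch at its base point would produce a point of $\mathcal{C}_u(c)$ closer to $\theta$ than $K$.
\item The part in $\{x_2\le0\}$ lies in $\mathcal{S}^-(\tau,c)$. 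The circle crosses the $x_1$-axis within $|x_1|\le\bar x_1$, where $(\bar x_1,0)$ lies on the reflected lower boundary $\mathcal{C}_\ell^-$, and it misses the reflected upper boundary $\mathcal{C}_u^-$ (Lemma \ref{lem:=000020x<xbar}).
\end{enumerate}

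Your observations (i) and (ii) about the vertex gap and the asymptote slopes are the right kind of input, but they belong to this small-radius case, not the large-radius one.
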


\begin{proof}
The proof is based on Lemma \ref{lem:key=000020lemma} with $\bar{\mathcal{S}}=\mathcal{S}(\tau,c)$.
WLOG, let $\theta\in\mathcal{S}_{0}^{+}(\tau)$. By Lemma \ref{lem:I=000020and=000020J}.\ref{enu:C1=000020tau<-2}
and \ref{lem:I=000020and=000020J}.\ref{enu:C2-1-1-1}, for all $|\theta_{1}|\geq x_{1}^{*}$, the coverage is at least $1-\alpha$
with the same argument as in Proposition \ref{prop:tau=000020large}.

For $\theta_{1}\in(-x_{1}^{*},x_{1}^{*})$. If $r>\bar{r}(\theta_{1})$, where $\bar{r}(\theta_{1})$ is defined in (\ref{eq:rbar}),
$\partial B\left(O,r\right)$ intersects with $\mathcal{C}_{u}$ at
least two points, since (i) by Lemma \ref{lem:I=000020and=000020J}.\ref{lem:min=000020dist=000020H},
\begin{align*}
\inf_{x\in\mathcal{C}_{u}}d\left(x,\theta\right) & =d\left(K,\theta\right)=\bar{r}(\theta_{1})<r,
\end{align*}
(ii) 
\[
\sup_{x\in\mathcal{C}_{u},x_{1}>0}d\left(x,\theta\right)=\sup_{x\in\mathcal{C}_{u},x_{1}<0}d\left(x,\theta\right)=\infty.
\]
Therefore 
\[
\left|\operatorname{length}\left(\partial B\left(\theta,r\right)\cap\mathcal{S}^{+}(\tau,c)\right)\right|\geq4r\arcsin\frac{c}{r}
\]
following from the same argument in Proposition \ref{prop:tau=000020large}. 

Then we show that for $r\in(c,r(\theta_{1})]$,
\begin{equation}
\left|\operatorname{length}\left(\partial B\left(\theta,r\right)\cap\bar{\mathcal{S}}\right)\right|=2\pi r\geq4r\arcsin\frac{c}{r}.\label{eq:tau<,r}
\end{equation}
By Lemma \ref{lem:I=000020and=000020J}.\ref{enu:B=000020cap=000020Cl=000020},
$\partial B\left(\theta,r\right)\cap\mathcal{C}_{\ell}(c)\cap\left\{ (0,x):x\geq0\right\} =\emptyset$.
By Lemma \ref{lem:I=000020and=000020J}.\ref{lem:min=000020dist=000020H},
$\partial B\left(\theta,r\right)\cap\mathcal{C}_{u}(c)=\emptyset$.
Therefore, $B\left(\theta,r\right)\cap\left\{ (0,x):x\geq0\right\} \subseteq\mathcal{S}^{+}(\tau,c)$.
Case 1: if $$B\left(\theta,r\right)\cap\left\{ (0,x_{2}):x_{2}\leq0\right\} =\emptyset,$$
then (\ref{eq:tau<,r}) holds. Case 2: if $\exists(x_{1},0)\in\partial B\left(\theta,r\right)$,
 by Lemma \ref{lem:=000020x<xbar}.\ref{enu:covered=000020},
$$B\left(\theta,r\right)\cap\left\{ (0,x):x\leq0\right\} \subseteq\mathcal{S}^{-}(\tau,c).$$
In sum, $\partial B\left(\theta,r\right)\subseteq\mathcal{S}(\tau,c)$.
\end{proof}
\textbf{Proof of Theorem \ref{thm:d=00003D2}.}
\begin{proof}
There exists a subsequence $P_{n_{j}}\in\mathcal{P}_{n}$ such that
\begin{align*}
\liminf_{n}\inf_{P\in\mathcal{P}_{n}}P\left(\hat{T}_n(g(\theta_{P}))\leq c^2\right) & =\lim_{j\rightarrow\infty}P_{n_{j}}\left(\hat{T}_n(g(\theta_{n_{j}}))\leq c^2\right)\text{ where }\theta_{n}=\theta_{P_{n}}.
\end{align*}
Since $\Theta$ is compact, the sequence $\{\theta_{n_{j}}\}$ is
bounded and thus has a further subsequence $n_{j^{\prime}}$ such
that $\lim_{n_{j^{\prime}}}\theta_{n_{j^{\prime}}}=\theta_{\infty}\in\Theta$,
 $\lim_{n_{j^{\prime}}}r_{n_{j^{\prime}}}(\theta_{n_{j^{\prime}}}-\theta_{\star})= h\in\mathbb{R}_{[\pm\infty]}^{2}$, $\lim_{n_{j^{\prime}}}R_{P_{n_j}}= R$, and $\rho_{P_n}\rightarrow\rho$.
With slight abuse of notation, we will refer to this convergent subsequence
as $\{\theta_{n}\}$ from here on. In addition, we use $n$ instead of $P_n$ for subscript in $\lambda,R$, and $\rho$.

Case 1. If $\lim_{n}\theta_{n}=\theta_{\infty}\not=\theta_{\star}$,
standard minimum distance arguments (see, for example Section 9.1
in Newey and McFadden (1994)) imply that
\[
\lim_{n}P_{n}\left(\hat{T}_{n}(g(\theta_{n}))\leq Q(\chi_{1}^{2},1-\alpha)\right)\geq1-\alpha.
\]

Case 2. Suppose $\theta_{\infty}=\theta_{\star}$ and $\lim_{n}r_{n}(\theta_{n}-\theta_{\star})=h\in\mathbb{R}^{2}$.
We first normalize the problem to match the notation in Proposition\textbf{
}\ref{thm:main}. WLOG, assume $\lambda_{n,2}\geq\lambda_{n,1}$.
There exists an orthogonal matrix $R_{n}$ such that 
\[
\text{sign}(g(\theta_{n})-g(\theta_\star))\Sigma_{n}^{1/2}H\Sigma_{n}^{1/2}
=R_{n}^{\prime}\scalebox{0.85}{$\begin{bmatrix}
\lambda_{n,1} & 0\\
0 & \lambda_{n,2}
\end{bmatrix}$}R_{n}
=\frac{\lambda_{n,2}-\lambda_{n,1}}{2}R_{n}^{\prime}\scalebox{0.85}{$\begin{bmatrix}
\rho_{n}-1 & 0\\
0 & \rho_{n}+1
\end{bmatrix}$}R_{n}.
\]

Define
\[
\vartheta_{n}=R_{n}\Sigma_{n}^{-1/2}\theta_{n},\;\vartheta_{\star,n}=R_{n}\Sigma_{n}^{-1/2}\theta,\;\quad\tilde{g}_n(\vartheta)=\frac{\text{sign}(g(\theta_{n})-g(\theta_{\star}))}{\lambda_{n,2}-\lambda_{n,1}}g\left(\Sigma_{n}^{1/2}R_{n}^{\prime}\vartheta\right).
\]
By construction, 
\begin{align*}
\frac{\partial\tilde{g}(\vartheta_{\star,n})}{\partial\vartheta}=0,
&\;
\frac{\partial^{2}\tilde{g}(\vartheta_{\star,n})}
{\partial\vartheta\,\partial\vartheta^{\prime}}
=\frac{\text{sign}(g(\theta_{n})-g(\theta_\star))}{\lambda_{n,2}-\lambda_{n,1}}
R_{n}\Sigma_{n}^{1/2}H\Sigma_{n}^{1/2}R_{n}^{\prime} 
&=\frac{1}{2}
\scalebox{0.85}{$\begin{bmatrix}
\rho_{n}-1 & 0\\
0 & \rho_{n}+1
\end{bmatrix}$}.
\end{align*}

Let $\hat{\vartheta}_{n}=R_{n}\Sigma^{-1/2}\hat{\theta}_{n}$. Under Assumption
\ref{assu:uniform=000020normal},
it holds that $r_{n}(\hat{\vartheta}_{n}-\vartheta_{n})\xrightarrow{d}N(0,I_{2})$
uniformly when $\lim_{n}r_{n}(\theta_{n}-\theta_{\star})=h\in\mathbb{R}^{2}$. Let 
$\tilde{H}
=\scalebox{0.85}{$\begin{bmatrix}
\rho-1 & 0\\
0 & \rho+1
\end{bmatrix}$}.$

Under Assumption \ref{assu:uniform=000020normal}, by the almost sure
representation theorem, there exists a probability space with random
variables $\mathbb{Z}_{n}$ and $\mathbb{Z}$ defined on it such that
(i) $\mathbb{Z}_{n}$ has the same distribution as $r_n(\hat{\vartheta}_{n}-\vartheta_{n})$,
(ii) $\mathbb{Z}\sim N(0,I_{2})$, and (iii) $\mathbb{Z}_{n}\xrightarrow{\text{a.s.}}\mathbb{Z}$.
Define $\tilde{h}_{n}=r_{n}(\vartheta_{n}-\vartheta_{\star,n})$, 
\begin{align*}
\hat{T}_{n}(g(\theta_{n}))	&=\inf_{\theta:g(\theta)=g(\theta_{n})}\left\Vert r_{n}R_{n}\hat{\Sigma}^{-1/2}(\hat{\theta}-\theta)\right\Vert ^{2}\\
	&=\inf_{\theta:g(\theta)=g(\theta_{n})}\left\Vert r_{n}R_{n}\Sigma^{-1/2}(\hat{\theta}-\theta)\right\Vert ^{2}+o_{p}(1)\\
	&=\inf_{\vartheta:\tilde{g}_{n}(\vartheta_{\star,n}+\vartheta-\vartheta_{\star,n})=\tilde{g}_{n}(\vartheta_{n})}\left\Vert r_{n}(\hat{\vartheta}-\vartheta_{n})+\tilde{h}_{n}-r_{n}(\vartheta-\vartheta_{\star,n})\right\Vert ^{2}+o_{p}(1)\\
	&=\inf_{\vartheta:\tilde{g}_{n}(\vartheta_{\star,n}+r_{n}^{-1}x)=\tilde{g}_{n}(\vartheta_{n})}\left\Vert r_{n}(\hat{\vartheta}-\vartheta_{n})-(x-\tilde{h}_{n})\right\Vert ^{2}+o_{p}(1).
\end{align*}
The second line follows from $r_n(\hat{\theta}-\theta)=O_p(1)$, with $\theta$ denoting the optimizer, and from the consistency of $\hat{\Sigma}$. The third and fourth lines follow from rearranging terms.

It follows that $\hat{T}_{n}\sim T_{n}+o_p(1)$, where
\[
T_{n}=\inf_{\vartheta:\tilde{g}_n(\vartheta_{\star,n}+r_{n}^{-1}x)=\tilde{g}_n(\vartheta_{n})}\left\Vert \mathbb{Z}_{n}-(x-\tilde{h}_{n})\right\Vert ^{2}.
\]
Let 
\begin{equation}
T=\inf_{x:x^{\prime}\tilde{H}x=\tilde{h}^{\prime}\tilde{H}\tilde{h}}\left\Vert \mathbb{Z}-(x-h)\right\Vert ^{2}.\label{eq:T}
\end{equation}
By Lemma \ref{lem:Tn=000020=00003D=000020T+o(1)}, $T_{n}=T+o_{p}(1).$
Moreover, by Lemma \ref{lem:continus=000020T}, $T$ is continuously
distributed. Hence
\[
\lim_{n}P_{n}(T_{n}\leq c^{2})=P(T\leq c^{2})\geq1-\alpha,
\]
where the inequality follows from Proposition \ref{thm:main} with
condition either 
\begin{align*}
    \tilde{h}^{\prime}\tilde{H}\tilde{h}&=\lim_n 4 r_{n}^{2}\left(\tilde{g}(\vartheta_{n})-\tilde{g}(\vartheta_{\star})\right)\\
&=\lim_n\frac{4\text{sign}(g(\theta_{n})-g(\theta_{\star}))}{\lambda_{n,2}-\lambda_{n,1}}r_{n}^{2}\left(g(\theta_{n})-g(\theta_{\star})\right)\geq\frac{c^{2}(1-\rho)^{2}}{1+\rho}
\end{align*}

or $\rho\in[0,1-\eta]$.

Case 3. Suppose $\theta_{n}\rightarrow\theta_{\star}$ and $\lim_{n}r_{n}(\theta_{n}-\theta_{\star})\rightarrow\infty$.
Define $s_{n}=\frac{1}{||\theta_{n}-\theta_{\star}||}\ll r_{n}$.
By construction, $\left\Vert s_{n}(\theta_{n}-\theta_{\star})\right\Vert =1$,
so there exists a subsequence such that $\lim s_{n}(\theta_{n}-\theta_{\star})=\lim h_{n}=h\in\mathbb{R}^{d}\backslash\{0_{d}\}$.
Similar to Case 2, let $\mathbb{Z}_n$ has the same distribution as $\hat{\Sigma}^{-1/2}r_n(\hat{\theta}_n-\theta_n)$,
\begin{align*}
T_{n} & =\inf_{x:g(\theta_{n}+r_{n}^{-1}x)=g(\theta_{n})}\left\Vert \mathbb{Z}_{n}-\hat{\Sigma}^{-1/2}x\right\Vert ^{2}.
\end{align*}
Note that $\hat{T}_{n}\sim T_{n}$.
By Lemma \ref{lem:Tn=000020=00003D=000020T+o1=0000202}, $T_{n}=T+o_{p}(1)$,
where
\begin{equation}
T=\inf_{x:h^{\prime}Hx=0}\left\Vert \mathbb{Z}-\Sigma_{n}^{-1/2}x\right\Vert ^{2}.\label{eq:T2}
\end{equation}
Since $h^{\prime}H\neq0$, $T\sim\chi_{1}^{2}$, it holds that
\[
\lim_{n}P_{n}(T_{n}\leq c^{2})=P(T\leq c^{2})=1-\alpha.
\]
\end{proof}
\textbf{Proof of Theorem \ref{thm:d>2}.}
\begin{proof}
There exists a subsequence   $P_{n_j}\in\mathcal{P}_{n_j}$ such that 
\begin{align*}
\liminf_{n}\inf_{P\in\mathcal{P}_{n}}P\left(\hat{T}_{n}(g(\theta_{n}))\leq\hat{c}\right) & =\lim_{j}P_{n_j}\left(\hat{T}_{n}(g(\theta_{n_j}))\leq\hat{c}\right).
\end{align*}
Since $\Theta$ and $\mathcal{S}$ are compact, the sequences $\{\theta_{n_j}\}$
and $\Sigma_{n_j}$ have further subsequences $n_k$ such that $\lim_{k}\theta_{n_k}=\theta_{\infty}\in\Theta$,
$\lim_{n_k}r_{n_k}(\theta_{n_k}-\theta_{\star})\rightarrow h\in\mathbb{R}_{[\pm\infty]}^{d}$,
and $\lim_{k}\Sigma_{n_k}\rightarrow\Sigma\in\mathcal{S}$. With slight
abuse of notation, we will refer to this convergent subsequence as
$\{n\}$ from here on. 

Case 1. If $\lim_{n}\theta_{n}=\theta_{\infty}\not=\theta_{\star}$,
standard minimum distance arguments apply and will show that $\hat{T}_{n}(g(\theta_n))\overset{P_{n}}{\rightsquigarrow}\chi_{1}^{2}.$ Let $z\in\mathcal{H}_z$.
By construction, $\hat{h}_{n}=r_{n}(\hat{\theta}-\theta_{\star})-\hat{\Sigma}_n^{1/2}z\in\mathcal{H}$.
In addition, $\hat{h}_{n}/r_{n}\xrightarrow{p}(\theta_{\infty}-\theta_{\star})\neq0$,
and since $H$ has full rank, $\frac{\hat{h}_{n}^{\prime}H}{||\hat{h}_{n}||}\xrightarrow{p}\frac{(\theta_{\infty}-\theta_{\star})^{\prime}H}{||\theta_{\infty}-\theta_{\star}||}\neq0$.
By Lemma \ref{lem:T*approximation}, 
\begin{equation}
\hat{T}_{n}^{*}(\hat{h}_{n})=\inf_{h^{\prime}x=0}\left\Vert \mathbb{Z}-\Sigma^{-1/2}x\right\Vert ^{2}+o_{p}(1).\label{eq:T*=000020app}
\end{equation}
The critical value 
\begin{align*}
\hat{c} & \geq Q\left(\left.\hat{T}_{n}^{*}(\hat{h}_{n})\right|\mathbb{Z}\in\mathcal{H}_{z};\frac{1-\alpha}{1-\eta}\right)\\
 & =Q\left(\left.\inf_{h^{\prime}x=0}\left\Vert \mathbb{Z}-\Sigma^{-1/2}x\right\Vert ^{2}\right|\mathbb{Z}\in\mathcal{H}_{z};\frac{1-\alpha}{1-\eta}\right)+o_{p}(1)
\end{align*}
where the inequality follows from $\hat{h}_{n}\in\mathcal{H}$ and
the equality follows from (\ref{eq:T*=000020app}) and the continuity
of $\inf_{h^{\prime}x=0}\left\Vert \mathbb{Z}-\Sigma^{-1/2}x\right\Vert ^{2}$.
By Lemma \ref{lem:con=000020quantile}, $$Q\left(\left.\inf_{h^{\prime}x=0}\left\Vert \mathbb{Z}-\Sigma^{-1/2}x\right\Vert ^{2}\right|\mathbb{Z}\in\mathcal{H}_{z};\frac{1-\alpha}{1-\eta}\right)\in\left[Q(\chi_{1}^{2},1-\alpha),q_{\chi_{1}^{2},1-\alpha+\eta}\right).$$
By the continuity of the limit distribution of $\hat{T}_{n}$, it
holds that 
\[
\lim_{n}P_{n}\left(\hat{T}_{n}\leq\hat{c}\right)\in\left[1-\alpha,1-\alpha+\eta\right).
\]

Case 2. $\theta_{\infty}=\theta_{\star}$ and $\lim_{n}r_{n}(\theta_{n}-\theta_{\star})=h\in\mathbb{R}^{d}$.
Similar to the proof of Theorem \ref{thm:d=00003D2} Case 2, $\hat{T}_{n}(g(\theta_n))\sim T+o_{p}(1),$ $\hat{T}^*_{n}(h_n)\sim T+o_{p}(1),$
where $h_n=r_n(\theta_n-\theta_\star)$,
$$T=\inf_{x:x^{\prime}Hx=h^{\prime}Hh}\Bigl\Vert\mathbb{Z}-\Sigma^{-1/2}(x-h)\Bigr\Vert^{2}.$$ 
By Lemma \ref{lem:continus=000020T},
$T$ is continuously distributed, thus $\hat{c}\xrightarrow{p}Q(T|\mathbb{Z}\in \mathcal{H}_z,\frac{1-\alpha}{1-\eta}). $ Note that $h_{n}\in\mathcal{H}$
is equivalent to $r_{n}(\hat{\theta}_{n}-\theta_{n})\in\mathcal{H}_{z}$.
To see the coverage rate, 
\begin{align*}
P\left(\hat{T}_{n}(g(\theta_n))\leq\hat{c}\right)\geq & P\left(\hat{T}_{n}(g(\theta_n))\leq\hat{c},h_{n}\in\mathcal{H}\right)\\
= & P\left(\hat{T}_{n}(g(\theta_n))\le c,\mathbb{Z}_n\in\mathcal{H}_{z}\right)+o(1)\\
= & P\left(\left.T\le c\right|\mathbb{Z}\in\mathcal{H}_{z}\right)P\left(\mathbb{Z}\in\mathcal{H}_{z}\right)+o(1)\\
= & \frac{1-\alpha}{1-\eta}(1-\eta)+o(1).
\end{align*}

Case 3. If $\theta_{n}\rightarrow\theta_{\star}$ and $\lim_{n}r_{n}(\theta_{n}-\theta_{\star})\rightarrow\infty$.
Similar to the proof of Theorem \ref{thm:d=00003D2} Case 3, $\hat{T}_{n}(g(\theta_n))\sim T+o_{p}(1)$, $\hat{T}^*_{n}(h_n)\sim T+o_{p}(1)$,
where $T$ is defined in (\ref{eq:T2}), which is $\chi_{1}^{2}$.
The same argument as in Case 1 applies here, together with Lemma \ref{lem:con=000020quantile}, it holds that  
\[
\lim_{n}P_{n}\left(\hat{T}_{n}(g(\theta_n))\leq\hat{c}\right)\in\left[1-\alpha,1-\alpha+\eta\right).
\]
The lower bound is binding when $k=1$.
\end{proof}

\subsection{Lemmas}

\begin{lem}
\label{lem:key=000020lemma}Fix $\theta=(\theta_{1},\theta_{2})\in\mathbb{R}^{2}$,
$c=\sqrt{Q(\chi_{1}^{2},1-\alpha)}$. If the set $\bar{\mathcal{S}}$
satisfies 
\begin{enumerate}
\item \label{enu:con1}$B\left(\theta,c\right)\subset\bar{\mathcal{S}}$.
\item \label{enu:con2}For all $r>c$, $\left|\operatorname{length}\left(\partial B\left(\theta,r\right)\cap\bar{\mathcal{S}}\right)\right|\geq4r\arcsin\frac{c}{r}.$
\end{enumerate}
Let $\hat{\theta}-\theta\sim N(0,I_{2})$. Then 
\[
P\left(\hat{\theta}\in\bar{\mathcal{S}}\right)\geq1-\alpha.
\]
\end{lem}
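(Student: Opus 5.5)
The plan is to write the probability in polar coordinates centred at \(\theta\) and compare it, circle by circle, with the auxiliary set \(\mathcal{S}_{\text{aux}}=\{(x_1,x_2):(x_2-\theta_2)^2\le c^2\}\), whose coverage is exactly \(P(Z_2^2\le c^2)=1-\alpha\).

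\textbf{Step 1: polar decomposition.} With \(\hat\theta=\theta+Z\) and \(Z\sim N(0,I_2)\), write \(Z=(R\cos\phi,R\sin\phi)\). Then \(R\) has density \(re^{-r^2/2}\) on \((0,\infty)\), and \(\phi\) is uniform on \([0,2\pi)\), independent of \(R\). For any measurable \(A\),
\[
P(\hat\theta\in A)=\int_0^\infty \frac{\operatorname{length}\left(\partial B(\theta,r)\cap A\right)}{2\pi r}\, re^{-r^2/2}\,dr .
\]
Coverage is therefore a weighted average over \(r\) of the fraction of the circle \(\partial B(\theta,r)\) lying in \(A\). So it suffices to show that, for every \(r\),
\[
\operatorname{length}\left(\partial B(\theta,r)\cap\bar{\mathcal{S}}\right)\ \ge\ \operatorname{length}\left(\partial B(\theta,r)\cap\mathcal{S}_{\text{aux}}\right).
\]

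\textbf{Step 2: compute the auxiliary arc lengths.} For \(r\le c\), every point of \(\partial B(\theta,r)\) satisfies \(|x_2-\theta_2|\le r\le c\), so the whole circle lies in \(\mathcal{S}_{\text{aux}}\), giving length \(2\pi r\). For \(r>c\), the condition \(|r\sin\phi|\le c\) holds on two arcs, each centred at the horizontal directions \(\phi=0\) and \(\phi=\pi\). Each arc has angular half-width \(\arcsin(c/r)\), so the total length is \(4r\arcsin(c/r)\).

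\textbf{Step 3: verify the pointwise inequality and integrate.} For \(r\le c\), Condition~1 gives \(\partial B(\theta,r)\subset B(\theta,c)\subset\bar{\mathcal{S}}\), so the fraction covered is \(1\). For \(r>c\), Condition~2 gives the fraction \(\ge 4r\arcsin(c/r)/(2\pi r)\). Plugging these into the integral yields \(P(\hat\theta\in\bar{\mathcal{S}})\ge P(\hat\theta\in\mathcal{S}_{\text{aux}})=P(\chi^2_1\le c^2)=1-\alpha\).

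\textbf{Main obstacle.} The only delicate point is measurability and the validity of the polar formula for a general set \(\bar{\mathcal{S}}\). This is handled by Fubini in polar coordinates, interpreting the length as the one-dimensional Hausdorff (arc-length) measure of a measurable subset of the circle. For the sets used in the paper, which are closed enlargements of curves, measurability is immediate.
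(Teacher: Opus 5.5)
Your proposal is correct and matches the paper's proof essentially step for step. Both write the coverage in polar coordinates centred at $\theta$ and compare it circle by circle with the band $\mathcal{S}_{\text{aux}}$, which has exact coverage $1-\alpha$; the arc lengths are $2\pi r$ for $r\le c$ and $4r\arcsin(c/r)$ for $r>c$, and the pointwise inequality from Conditions~1 and~2 is then integrated. The open-versus-closed ball issue at $r=c$ is a measure-zero point in the $r$-integral and does not matter.
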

\begin{proof}
The key idea of Lemma \ref{lem:key=000020lemma} is to compare the
coverage probability of $\bar{\mathcal{S}}$ that of an auxiliary
acceptance set
\[
\mathcal{S}_{\text{aux}}=\left\{ (x_{1},x_{2}):(x_{2}-\theta_{2})^{2}\leq c^{2}\right\} .
\]
It is trivial that $P\left(\hat{\theta}\in\mathcal{S}_{\text{aux}}\right)=1-\alpha.$
We will show that the coverage probability of $\bar{\mathcal{S}}$
is bounded below by that of $\mathcal{S}_{\text{aux}}$. 

To simplify the comparison, we switch to polar coordinates. Let $\hat{\theta}=(\theta_{1}+r\cos\omega,\theta_{2}+r\sin\omega)$,
\begin{align}
P\left(\hat{\theta}\in\mathcal{S}_{\text{aux}}\right) & =\frac{1}{2\pi}\int_{r=0}^{+\infty}\int_{\omega=-\frac{\pi}{2}}^{\frac{3}{2}\pi}\mathbf{1}\left[(r\sin\omega)^{2}\leq c^{2}\right]\text{d}\omega\exp(-\frac{r^{2}}{2})r\text{d}r\nonumber \\
 & =\int_{r=0}^{c}\exp(-\frac{r^{2}}{2})r\text{d}r+\int_{r=c}^{+\infty}\frac{4\arcsin\frac{c}{r}}{2\pi}\exp(-\frac{r^{2}}{2})r\text{d}r.\label{eq:cov=000020aux}
\end{align}
To see (\ref{eq:cov=000020aux}), note that if $r\leq c$, then $(r\sin\omega)^{2}\leq c^{2}$
 for all $\omega\in[-\frac{1}{2}\pi,\frac{3}{2}\pi]$; if $r>c$,
then
\begin{align*}
(r\sin\omega)^{2} & \leq c^{2},\omega\in[-\frac{\pi}{2},\frac{3}{2}\pi]\\
\Leftrightarrow & \omega\in\left[-\arcsin(\frac{c}{r}),\arcsin(\frac{c}{r})\right]\cup\left[\pi-\arcsin(\frac{c}{r}),\pi+\arcsin(\frac{c}{r})\right].
\end{align*}

Now consider $P\left(\hat{\theta}\in\bar{\mathcal{S}}\right)$. By
Condition \ref{enu:con1} and \ref{enu:con2}, 
\begin{align}
 & P\left(\hat{\theta}\in\bar{\mathcal{S}}\right)=\frac{1}{2\pi}\int_{r=0}^{+\infty}\frac{1}{r}\left|\operatorname{length}\left(\partial B\left(\theta,r\right)\cap\bar{\mathcal{S}}\right)\right|\exp(-\frac{r^{2}}{2})r\text{d}r\nonumber \\
 & \quad\geq\int_{r=0}^{c}\exp(-\frac{r^{2}}{2})r\text{d}r+\int_{r=c}^{+\infty}\frac{4\arcsin\frac{c}{r}}{2\pi}\exp(-\frac{r^{2}}{2})r\text{d}r.\label{eq:C12}
\end{align}
This lower bound matches the expression in (\ref{eq:cov=000020aux}),
which completes the proof. 
An illustration of Lemma \ref{lem:key=000020lemma} is provided in
Figure \ref{fig:Lemma-:}.
\begin{figure}[h]
\begin{centering}
\includegraphics[scale=0.7]{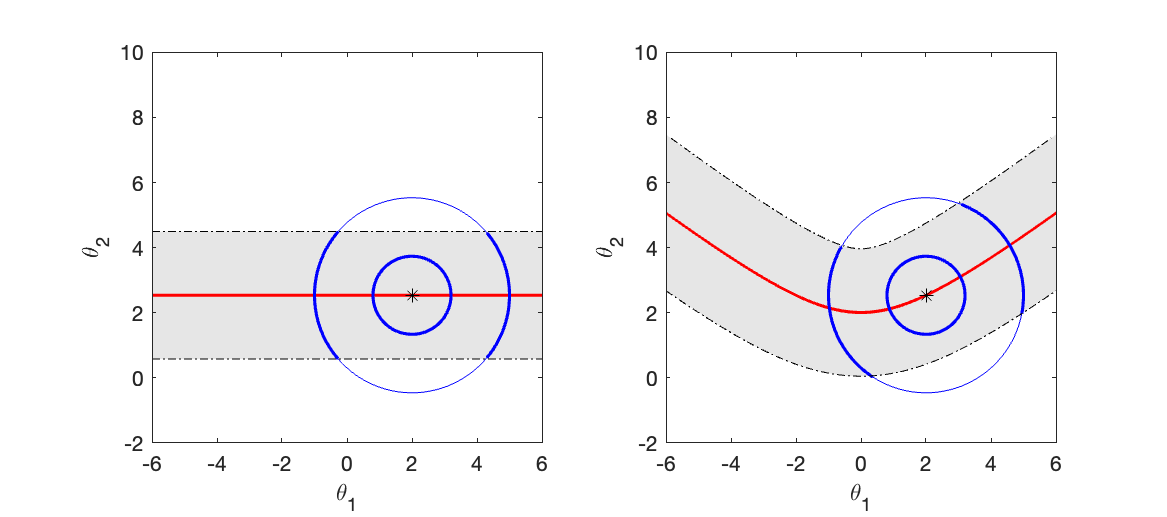}
\par\end{centering}
\caption{Lemma \ref{lem:key=000020lemma}: Acceptance Region of Linear and
Curved Null. }\label{fig:Lemma-:}

\noindent\begin{minipage}[t]{1\columnwidth}%
{\scriptsize The red curve shows the null parameter space $\mathcal{S}_{0}(\tau)$.
Shaded areas denote the acceptance regions $\mathcal{S}_{\text{aux}}$
(left) and $\bar{\mathcal{S}}$ (right). ``{*}'' represents the
true value $\theta$, and the blue circles represent $B(\theta,r)$
with bold segments indicating the portions inside the acceptance regions.
If, for all $r$, the bold segment in the right panel is longer than that
in the left, then the acceptance rate of $\bar{\mathcal{S}}$ is at
least $1-\alpha$.}%
\end{minipage}

\end{figure}
\end{proof}
\begin{lem}
\label{lem:S+}The boundary $\partial\mathcal{S}^{+}(\tau,c)$ can
be characterized by two curves. The upper one is $\mathcal{C}_{u}(c)$ defined in (\ref{Cu(c)}) and  the lower one is $\mathcal{C}_{\ell}(c)$ defined in (\ref{Cl(c)}).
\end{lem}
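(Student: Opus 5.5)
The approach is to describe $\mathcal{S}^{+}(\tau,c)$ as the union of closed discs of radius $c$ centred on the branch $\mathcal{S}_{0}^{+}(\tau)$. Its boundary is then contained in the envelope of this family of discs, i.e. the two parallel curves of the branch obtained by moving a distance $c$ along the unit normal in either direction. The plan is to compute these offset curves explicitly and then decide which of their points actually lie on $\partial\mathcal{S}^{+}(\tau,c)$.

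\textbf{Step 1: the unit normal.} The branch is parametrised as $x_1\mapsto(x_1,X_2(x_1))$, where
\[
X_2(x_1)=\sqrt{(\tau+(1-\rho)x_1^2)/(1+\rho)}.
\]
It is the zero set of $g-\tau$, so $\nabla g=\bigl(-2(1-\rho)x_1,\;2(1+\rho)X_2\bigr)$ is normal to it. Since $X_2>0$, this gradient points upward, away from the origin. Normalising gives the unit normal
\[
n(x_1)=\frac{\bigl(-(1-\rho)x_1,\;(1+\rho)X_2(x_1)\bigr)}{\sqrt{(1+\rho)^2X_2^2+(1-\rho)^2x_1^2}} .
\]
Therefore $(x_1,X_2)+c\,n$ reproduces $C_{u}$ and $(x_1,X_2)-c\,n$ reproduces $C_{\ell}$, exactly as in (\ref{Cu(c)}) and (\ref{Cl(c)}).

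\textbf{Step 2: which points of the envelope lie on the boundary.} A standard argument handles this. Every boundary point $y$ of the union has a nearest point $\theta\in\mathcal{S}_0^{+}(\tau)$ at distance exactly $c$ (the branch is closed and $d(y,\mathcal{S}_0^{+})$ is continuous). At that nearest point, the first-order condition forces $y-\theta$ to be parallel to $n(\theta)$. Hence $y$ lies on the upper offset (above the branch) or on the lower offset (below it), and the boundary is contained in $\mathcal{C}_u\cup\mathcal{C}_\ell$.

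For the converse, a point of an offset curve is on the boundary iff its distance to the whole branch equals $c$, i.e. iff the foot point is a global nearest point. On the concave side, the lower side toward the origin, this could in principle fail if $c$ exceeded the radius of curvature. I would instead argue that the lower offset map is injective: $C_{\ell,1}$ is strictly monotone in $x_1$, since its coefficient $1+c(1-\rho)/\sqrt{\cdot}$ is positive. Together with the fact that $\mathcal{S}^{+}$ contains everything between the two offsets, this shows the entire $\mathcal{C}_\ell$ is boundary.

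\textbf{Step 3: the convex side, the main obstacle.} The upper side is convex from outside, and here the offset develops a swallowtail once $c$ exceeds the minimal radius of curvature. The minimal radius of curvature is attained at $x_1=0$ and equals $\sqrt{\tau(1+\rho)}/(1-\rho)$, which matches the condition in Proposition~\ref{thm:main}.

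In this regime I would compute $dC_{u,1}/dx_1$ and show it vanishes precisely at $\pm x_1^{*}$. The portion with $|x_1|<x_1^{*}$ folds back and lies within distance less than $c$ of other points of the branch, so it is interior. By the $x_1\mapsto -x_1$ symmetry, the two outer pieces meet on the vertical axis at the kink $K$ defined in (\ref{eq:H}).

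The concrete plan for this step is:
\begin{enumerate}
\item Verify algebraically that $C_{u,1}(x_1,c)=0$ at $x_1=\pm x_1^{*}$ with the stated formula, and that $C_{u,2}$ there equals the second coordinate of $K$.
\item Verify that for $|x_1|\ge x_1^{*}$ the map $x_1\mapsto C_{u,1}$ is monotone and has the same sign as $x_1$, so the two outer pieces do not cross each other.
\end{enumerate}

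When $\tau\ge c^2(1-\rho)^2/(1+\rho)$, the radius condition holds everywhere, so $x_1^{*}=0$ and no folding occurs. The remaining routine work is the squaring and simplification needed to solve $C_{u,1}=0$ for $x_1$.
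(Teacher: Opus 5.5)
\textbf{There is a genuine gap: the proposal never shows that the points of $\mathcal{C}_u$ and $\mathcal{C}_\ell$ are actually boundary points.}

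The first half of your Step~2 is sound. A boundary point has a nearest branch point at distance exactly $c$, and the first-order condition puts it on one of the full offsets; this is more explicit than the paper. The problem is the converse. You reduce it to showing that each point of $\mathcal{C}_u$ and $\mathcal{C}_\ell$ is at distance exactly $c$ from the \emph{whole} branch, and then you never verify this.

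For $\mathcal{C}_u$, your items 1--2 only show that the outer pieces form a simple graph through $K$. Non-crossing says nothing about whether $C_u(x_1)$, $x_1\ge x_1^*$, lies within $c$ of some other branch point, e.g.\ one with negative abscissa. The missing idea, which the paper uses, is convexity. Write $N^2=(1+\rho)^2X_2(x_1)^2+(1-\rho)^2x_1^2=(1+\rho)\tau+2(1-\rho)x_1^2$. Then $dC_{u,1}/dx_1=1-c(1-\rho^2)\tau/N^3$, and the paper's $d^2C_{u,2}/dC_{u,1}^2$ has the sign of the same quantity $N^3-c(1-\rho^2)\tau$. This is nonnegative on $|x_1|\ge x_1^*$, so $\mathcal{C}_u$ is convex. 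Its tangent at $C_u(x_1)$ is perpendicular to the normal segment (Lemma~\ref{lem:The-perpendicular-bisector}). Hence the convex region above $\mathcal{C}_u$ lies beyond a line at distance $c$ from $(x_1,X_2(x_1))$. This is how the paper shows the branch stays at distance at least $c$ from $\mathcal{C}_u$, so points of $\mathcal{C}_u$ are at distance exactly $c$ and points above it are farther. Your monotonicity check computes exactly this quantity, so you were one step away. A vertical-slice argument using the graph property could also close it once the swallowtail is excluded, but you supply neither.

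For $\mathcal{C}_\ell$ you have the sides reversed. Since $X_2''>0$, the centres of curvature lie \emph{above} the branch, so the lower offset can never degenerate. The right argument is that the branch lies above each of its tangent lines, which makes $(x_1,X_2(x_1))$ the global nearest point of $C_\ell(x_1)$. Injectivity of $C_{\ell,1}$, together with the normal segments lying in $\mathcal{S}^+(\tau,c)$, does not show that $C_\ell(x_1)$ is not an interior point.

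\textbf{Your picture of the swallowtail is also wrong in a way that matters.} The derivative $dC_{u,1}/dx_1$ vanishes at the cusps $\pm x_c$, where $N^3=c(1-\rho^2)\tau$. These lie strictly inside $(-x_1^*,x_1^*)$ whenever $\tau<c^2(1-\rho)^2/(1+\rho)$. At $\pm x_1^*$ it is $C_{u,1}$ itself that vanishes (the self-intersection $K$), and the derivative there equals $1-(1+\rho)\tau/(c^2(1-\rho)^2)>0$.

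So only the arc $|x_1|<x_c$ folds back. The arcs $x_c<|x_1|<x_1^*$ are regular overshoots past $K$, and no local curvature argument shows they are interior. You need a global argument. For instance, for $0<|x_1|<x_1^*$ one has $N<c(1-\rho)$, so $C_{u,1}(x_1)$ has sign opposite to $x_1$. The mirror branch point $(-x_1,X_2(x_1))$ is then strictly closer than $c$ to $C_u(x_1)$. The point $x_1=0$ is handled by $c$ exceeding the radius of curvature there.
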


\begin{proof}
$\mathcal{C}_{u}(c)$ and $\mathcal{C}_{\ell}(c)$ are obtained by
shifting $\mathcal{S}_{0}^{+}(\tau)$ a distance $c$ along its normal
direction. Note that with $\tau\in(0,\frac{c^{2}(1-\rho)^{2}}{1+\rho}]$,
for all $x\in(-x_{1}^{*},x_{1}^{*})$, $\left(C_{u,1}(x_{1},c),C_{u,2}(x_{1},c)\right)$
is an interior point of $\mathcal{S}^{+}(\tau,c)$, thus not included
in $\mathcal{C}_{u}(c)$. By construction, $\mathcal{C}_{u}(c)\cup\mathcal{C}_{\ell}(c)\subseteq\mathcal{S}(\tau,c)$.

Then we show that $\mathcal{C}_{u}$ and $\mathcal{C}_{\ell}$ are
the boundaries of $\mathcal{S}^{+}(\tau,c)$. First, we show that
$\mathcal{C}_{u}$ is convex. Let $C_{u,1}^{\prime}$ and $C_{u,1}^{\prime\prime}$
be the first and second order derivatives of $C_{u,1}$ with respect
to $x$, 
\begin{align*}
 &\frac{\text{d}^{2}C_{u,2}}{\text{d}C_{u,1}^{2}} =\frac{C_{u,2}^{\prime\prime}C_{u,1}^{\prime}-C_{u,2}^{\prime}C_{u,1}^{\prime\prime}}{\left(C_{u,1}^{\prime}\right)^{3}}\\
 & =\frac{(1-\rho)\tau\left((\rho+1)\tau+2(1-\rho)x_{1}^{2}\right)^{3/2}}{\sqrt{1+\rho}\left((1-\rho)x_{1}^{2}+\tau\right)^{3/2}}\left(\left((1+\rho)\tau+2(1-\rho)x_{1}^{2}\right)^{3/2}-c\left(1-\rho^{2}\right)\tau\right)^{-1}.
\end{align*}
The sign of $\frac{\text{d}^{2}C_{u,2}}{\text{d}C_{u,1}^{2}}$ is
the same as $\left((1+\rho)\tau+2(1-\rho)x_{1}^{2}\right)^{3/2}-c\left(1-\rho^{2}\right)\tau$.
If $\tau\geq\frac{c^{2}(1-\rho)^{2}}{1+\rho}$, then 
\begin{align*}
 & \left((1+\rho)\tau+2(1-\rho)x_{1}^{2}\right)^{3/2}-c\left(1-\rho^{2}\right)\tau\geq\left((1+\rho)\tau+0\right)^{3/2}-c\left(1-\rho^{2}\right)\tau\\
 & \quad\quad=\tau(1+\rho)\left((1+\rho)^{1/2}\tau^{1/2}-c\left(1-\rho\right)\right)\\
 & \quad\quad\geq\tau(1+\rho)\left((1+\rho)^{1/2}\frac{c(1-\rho)}{\sqrt{1+\rho}}-c\left(1-\rho\right)\right)=0.
\end{align*}
If $\tau\in\left[0,\frac{c^{2}(1-\rho)^{2}}{1+\rho}\right]$, 
\begin{align*}
 & \left((1+\rho)\tau+2(1-\rho)x_{1}^{2}\right)^{3/2}-c\left(1-\rho^{2}\right)\tau\geq\left((1+\rho)\tau+2(1-\rho)x_{1}^{*2}\right)^{3/2}-c\left(1-\rho^{2}\right)\tau\\
 & \quad\quad=c^{3}(1-\rho)^{3}-c\left(1-\rho^{2}\right)\tau\geq0.
\end{align*}
Thus $\frac{\text{d}^{2}C_{u,2}}{\text{d}C_{u,1}^{2}}\geq0$ and $\mathcal{C}_{u}$
is convex.

Next, by Lemma \ref{lem:The-perpendicular-bisector}, the connecting
line of $\left(C_{u,1}(x_{1},c),C_{u,2}(x_{1},c)\right)$ and $(x_{1},X_{2}(x_{1}))$
is orthogonal to the tangent line of $\mathcal{C}_{u}$ at $\left(C_{u,1}(x_{1},c),C_{u,2}(x_{1},c)\right)$.
In addition, since $\mathcal{C}_{u}$ is convex, $\mathcal{C}_{u}$
is above its tangent line. This implies that $\forall x_{1}$
\begin{align}
d\left((x_{1},X_{2}(x_{1})),\mathcal{C}_{u}\right) & =d\left(\left(x_{1},X_{2}(x_{1})\right),\left(C_{u,1}(x_{1},c),C_{u,2}(x_{1},c)\right)\right)=c.\label{eq:lower=000020bd}
\end{align}
This also implies that for all $(C_{u,1}(x_{1},c),C_{u,2}(x_{1},c))\in\mathcal{C}_{u}$,
\[
c\geq d\left((C_{u,1}(x_{1},c),C_{u,2}(x_{1},c)),\mathcal{S}_{0}^{+}(\tau)\right)\geq c,
\]
where the first equality follows from the fact that 
\[
d\left((x_{1},X_{2}(x_{1})),(C_{u,1}(x_{1},c),C_{u,2}(x_{1},c))\right)=c
\]
and the second inequality follows from (\ref{eq:lower=000020bd}), i.e.
$$d\left(\left(C_{u,1}(x_{1},c),C_{u,2}(x_{1},c)\right),\mathcal{S}_{0}^{+}(\tau)\right)\geq d\left(\mathcal{C}_{u},\mathcal{S}_{0}^{+}(\tau)\right)=\inf_{x_{1}}d\left(\mathcal{C}_{u},(x_{1},X_{2}(x_{1}))\right)=c.$$
Therefore, $\mathcal{C}_{u}$ is the upper part of the boundary of $\mathcal{S}^{+}(\tau,c)$. 

To show that $\mathcal{C}_{\ell}$ is on the boundary of $\mathcal{S}_{0}^{+}(\tau)$,
note that by Lemma \ref{lem:The-perpendicular-bisector}, the connecting
line of $\left(C_{\ell,1}(x_{1},c),C_{\ell,2}(x_{1},c)\right)$ and
$(x_{1},X_{2}(x_{1}))$ is orthogonal to the tangent line of $\mathcal{S}_{0}^{+}(\tau)$
at $\left(x_{1},X_{2}(x_{1})\right)$. In addition, since 
\[
\frac{\text{d}^{2}X_{2}(x_{1})}{\text{d}x_{1}^{2}}=\frac{(1-\rho)\tau}{\sqrt{1+\rho}\left(\tau+(1-\rho)x_{1}^{2}\right)^{3/2}}\geq0,
\]
$\mathcal{S}_{0}^{+}(\tau)$ is convex. Thus $\mathcal{S}_{0}^{+}(\tau)$
is above its tangent line. This implies that 
\[
d\left(\left(C_{\ell,1}(x_{1},c),C_{\ell,2}(x_{1},c)\right),\mathcal{S}_{0}^{+}(\tau)\right)=c.
\]
Thus $\mathcal{C}_{\ell}$ is the lower part of the boundary of $\mathcal{S}^{+}(\tau,c)$.
\end{proof}
\begin{lem}
\label{lem:The-perpendicular-bisector}Let $x_{1}\in\mathbb{R}\backslash(-x_{1}^{*},x_{1}^{*})$. 
\begin{enumerate}
\item \label{perp1}The perpendicular bisector of $\left(C_{\ell,1}(x_{1},c),C_{\ell,2}(x_{1},c)\right)$
and $\left(C_{u,1}(x_{1},c),C_{u,2}(x_{1},c)\right)$ is tangent to
$\mathcal{S}_{0}^{+}(\tau)$ at $(x_{1},X_{2}(x_{1}))$.
\item \label{enu:The-connecting-line}The connecting line of $\left(C_{u,1}(x_{1},c),C_{u,2}(x_{1},c)\right)$
and $\left(C_{\ell,1}(x_{1},c),C_{\ell,2}(x_{1},c)\right)$ is orthogonal
to the tangent line of $\mathcal{C}_{u}$ and $\mathcal{C}_{\ell}$
at these points.
\end{enumerate}
\end{lem}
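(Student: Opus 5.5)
Both parts follow from one observation: $\mathcal{C}_u(c)$ and $\mathcal{C}_\ell(c)$ are the parallel curves of $\mathcal{S}_0^+(\tau)$ at signed distance $\pm c$. The plan is to verify this representation, after which both parts reduce to orthogonality facts for a unit normal field.

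\textbf{Step 1: the representation.} Write $p(x_1)=(x_1,X_2(x_1))$. Since $g(p(x_1))=\tau$ for all $x_1$, the vector $\nabla g(p)=2\bigl(-(1-\rho)x_1,(1+\rho)X_2(x_1)\bigr)$ is normal to $\mathcal{S}_0^+(\tau)$ at $p$. Set
\[
n(x_1)=\frac{\bigl(-(1-\rho)x_1,\,(1+\rho)X_2(x_1)\bigr)}{\sqrt{(1+\rho)^2X_2(x_1)^2+(1-\rho)^2x_1^2}} .
\]
The denominator is positive because $X_2(x_1)>0$ when $\tau>0$. Comparing with the definitions in \eqref{Cu(c)} and \eqref{Cl(c)} gives
\[
C_u(x_1,c)=p(x_1)+c\,n(x_1), \qquad C_\ell(x_1,c)=p(x_1)-c\,n(x_1).
\]

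\textbf{Step 2: part \ref{perp1}.} By Step 1 the midpoint of $C_u(x_1,c)$ and $C_\ell(x_1,c)$ is $p(x_1)$. The segment joining them has direction $n(x_1)$. The perpendicular bisector is therefore the line through $p(x_1)$ orthogonal to $n(x_1)$. Because $n(x_1)\perp p'(x_1)=(1,X_2'(x_1))$, this line is exactly the tangent line to $\mathcal{S}_0^+(\tau)$ at $p(x_1)$.

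\textbf{Step 3: part \ref{enu:The-connecting-line}.} Differentiate in $x_1$ to get $C_u'=p'+c\,n'$ and $C_\ell'=p'-c\,n'$. Since $\|n\|\equiv 1$, we have $n'\perp n$. In the plane this forces $n'=-\kappa\,p'/\|p'\|\cdot\|p'\|^{0}$ up to the standard Frenet scaling, i.e.\ $n'=-\kappa\,p'$ with $\kappa$ the signed curvature times speed normalization. Hence
\[
C_u'=(1-c\kappa)\,p', \qquad C_\ell'=(1+c\kappa)\,p',
\]
and both are orthogonal to $n$, which is the direction of the connecting line.

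\textbf{Main obstacle: nondegeneracy of $C_u'$.} For Step 3 to give a genuine tangent line, $C_u'$ must be nonzero. For $\mathcal{C}_\ell$ this is automatic: $\mathcal{S}_0^+(\tau)$ is convex, as computed in the proof of Lemma~\ref{lem:S+}, so $\kappa\ge0$ with the upward normal and $1+c\kappa>0$. For $\mathcal{C}_u$ one needs $1-c\kappa\neq0$, i.e.\ $c\kappa(x_1)<1$. The curvature of $\mathcal{S}_0^+(\tau)$ is maximal at $x_1=0$ and decreasing in $|x_1|$. By the sign computation in the proof of Lemma~\ref{lem:S+}, $1-c\kappa$ has the sign of $\bigl((1+\rho)\tau+2(1-\rho)x_1^2\bigr)^{3/2}-c(1-\rho^2)\tau$. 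This quantity is $\ge0$ on $|x_1|\ge x_1^*$, which is precisely why the lemma restricts to $x_1\notin(-x_1^*,x_1^*)$. If equality holds at the endpoint $|x_1|=x_1^*$, the tangent direction of $\mathcal{C}_u$ there is obtained as the limit of $C_u'/\|C_u'\|=\operatorname{sign}(1-c\kappa)\,p'/\|p'\|$ from $|x_1|>x_1^*$. That limit is still orthogonal to $n$, so the conclusion of part \ref{enu:The-connecting-line} extends to the endpoint.
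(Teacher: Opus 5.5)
Your proposal is correct and follows the paper's route. Both arguments check that the midpoint of the two points is $(x_1,X_2(x_1))$ and that the chord is normal to $\mathcal{S}_0^+(\tau)$, then show that the tangents of $\mathcal{C}_u$ and $\mathcal{C}_\ell$ are parallel to that of $\mathcal{S}_0^+(\tau)$. You obtain this from $C_u=p+cn$, $C_\ell=p-cn$ and $n'\perp n$ rather than from the paper's direct slope computation, and you rightly flag that the paper's ratio $C_{u,2}'/C_{u,1}'$ tacitly needs $C_{u,1}'=1-c\kappa\neq 0$.

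Two small cleanups:
\begin{itemize}
\item Replace the garbled ``$\|p'\|^{0}$'' expression with the plain statement that $n'=-\kappa p'$ for a scalar $\kappa(x_1)$, since $p'$ spans $n^{\perp}$ in the plane.
\item $1-c\kappa$ vanishes only strictly inside $(-x_1^*,x_1^*)$. The endpoint $x_1^*$ is where $C_{u,1}=0$, i.e.\ the self-intersection $K$, not a focal point. So $1-c\kappa$ is strictly positive on the lemma's range except in the boundary case $\tau=c^2(1-\rho)^2/(1+\rho)$, which your limiting argument already covers.
\end{itemize}
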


\begin{proof}
To show \ref{perp1}. It is easy to verify that 
\begin{align*}
\frac{1}{2}\left(C_{\ell,1}(x_{1},c)+C_{u,1}(x_{1},c)\right) & =x_{1},\\
\frac{1}{2}\left(C_{\ell,2}(x_{1},c)+C_{u,2}(x_{1},c)\right) & =X_{2}(x_{1}).
\end{align*}
In addition, 
\begin{align*}
\frac{C_{u,2}(x_{1},c)-C_{\ell,2}(x_{1},c)}{C_{u,1}(x_{1},c)-C_{\ell,1}(x_{1},c)} & =-\frac{(1+\rho)X_{2}(x_{1})}{(1-\rho)x_{1}},\quad\frac{\text{d}X_{2}(x_{1})}{\text{d}x_{1}}=\frac{(1-\rho)x_{1}}{(1+\rho)X_{2}(x_{1})}\\
\Rightarrow & \frac{C_{u,2}(x_{1},c)-C_{\ell,2}(x_{1},c)}{C_{u,1}(x_{1},c)-C_{\ell,1}(x_{1},c)}\frac{\text{d}X_{2}(x_{1})}{\text{d}x_{1}}=-1.
\end{align*}
This completes the proof.

To show \ref{enu:The-connecting-line}, straightforward calculation shows that 
\begin{align*}
\frac{\text{d}C_{u,2}}{\text{d}C_{u,1}} & =\frac{C_{u,2}^{\prime}}{C_{u,1}^{\prime}}=\frac{\text{d}X_{2}(x_{1})}{\text{d}x_{1}},\\
\frac{\text{d}C_{\ell,2}}{\text{d}C_{\ell,1}} & =\frac{C_{\ell,2}^{\prime}}{C_{\ell,1}^{\prime}}=\frac{\text{d}X_{2}(x_{1})}{\text{d}x_{1}}.
\end{align*}
By the first part, the connecting line of $\left(C_{u,1}(x_{1},c),C_{u,2}(x_{1},c)\right)$
and $\left(C_{\ell,1}(x_{1},c),C_{\ell,2}(x_{1},c)\right)$ is orthogonal
to the tangent line of $\mathcal{S}_{0}^{+}(\tau)$ at $(x_{1},X_{2}(x_{1}))$.
Therefore, it is also orthogonal to the tangent line of $\mathcal{C}_{i}(c)$
at $\left(C_{i,1}(x_{1},c),C_{i,2}(x_{1},c)\right)$ with $i=u,\ell$. 
\end{proof}
\begin{lem}
\label{lem:I=000020and=000020J}Let $r>c$, $\theta=(\theta_{1},\theta_{2})\in\mathcal{S}_{0}^{+}(\tau).$ 
\begin{enumerate}
\item \label{enu:C1=000020tau<-2}If $\theta_{1}\not\in(-x_{1}^{*},x_{1}^{*})$,
there exists $(x_{1},x_{2}^{(1)})\in\partial B\left((\theta_{1},\theta_{2}),r\right)$,
$(x_{1},x_{2}^{(2)})\in\mathcal{C}_{u}(\tau,c)$ such that $x_{2}^{(1)}>x_{2}^{(2)}$.
\item \label{enu:C2-1-1-1}There exists $(x_{1},x_{2}^{(1)})\in\partial B\left((\theta_{1},\theta_{2}),r\right)$,
$(x_{1},x_{2}^{(2)})\in\mathcal{C}_{\ell}(\tau,c)$ such that $x_{2}^{(1)}<x_{2}^{(2)}$.
\item \label{lem:min=000020dist=000020H} If $\tau\leq\frac{c^{2}(1-\rho)^{2}}{1+\rho}$,
$\theta_{1}\in(-x_{1}^{*},x_{1}^{*})$, then 
\[
d\left(\theta,\mathcal{C}_{u}(c)\right)=d(\theta,K),\;\text{with }K\;\text{in }(\ref{eq:H}).
\]
\item \label{enu:B=000020cap=000020Cl=000020}Suppose $\rho\geq0$, $\tau\leq\frac{c^{2}(1-\rho)^{2}}{1+\rho}$,
$\theta_{1}\in(-x_{1}^{*},x_{1}^{*})$, and $r\in\left(c,\bar{r}(\theta_{1})\right)$.
If $$\left(C_{\ell,1}(x_{1},c),C_{\ell,2}(x_{1},c)\right)\in\partial B\left(\theta,r\right),$$
then $C_{\ell,2}(x_{1},c)<0$. 
\end{enumerate}
\end{lem}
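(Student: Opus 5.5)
The plan is to exploit the geometry already set up in \Cref{lem:S+} and \Cref{lem:The-perpendicular-bisector}. $\mathcal{S}_0^+(\tau)$ is the graph of the convex function $X_2$. $\mathcal{C}_u(c)$ and $\mathcal{C}_\ell(c)$ are its normal offsets at distance $c$, with the offset direction at $(x_1,X_2(x_1))$ being the unit normal $n(x_1)\propto(-(1-\rho)x_1,(1+\rho)X_2(x_1))$. Both offsets are graphs over the $x_1$-axis: $\mathcal{C}_u$ is convex by \Cref{lem:S+}, possibly with a kink at $K$. Parts 1 and 2 I would prove by a connectedness argument rather than by explicit coordinates. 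Parts 3 and 4 need a first-order (normal-line) argument and a direct computation, respectively.

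\emph{Parts 1 and 2.} For $\theta_1\notin(-x_1^*,x_1^*)$, the foot $F=\theta+c\,n(\theta_1)$ belongs to $\mathcal{C}_u(c)$. Since $\mathcal{C}_u$ is the graph of a continuous function, it separates the plane into an open region above it and one below it, and the region above is connected and unbounded. Because $d(\theta,F)=c<r$, $F$ lies in the interior of $B(\theta,r)$, so points just above $F$ are in the region above $\mathcal{C}_u$ and inside $B(\theta,r)$. Being connected and unbounded, that region also contains points outside $B(\theta,r)$, hence it meets $\partial B(\theta,r)$. This gives a circle point $(x_1,x_2^{(1)})$ strictly above the point $(x_1,x_2^{(2)})$ of $\mathcal{C}_u$ with the same abscissa. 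Part 2 is identical with $\mathcal{C}_\ell$, the foot $\theta-c\,n(\theta_1)$ (which lies on $\mathcal{C}_\ell$ for every $\theta_1$), and the region below $\mathcal{C}_\ell$. The only thing to check is that $\mathcal{C}_\ell$ is genuinely a graph over $\mathbb{R}$. I would verify this from $C_{\ell,1}'(x_1)>0$, or alternatively replace $\mathcal{C}_\ell$ by the lower boundary of $\mathcal{S}^+(\tau,c)$, which is a graph by construction.

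\emph{Part 3.} Each branch of $\mathcal{C}_u$, for $x_1\ge x_1^*$ and for $x_1\le -x_1^*$, is smooth. By \Cref{lem:The-perpendicular-bisector}, the normal line to $\mathcal{C}_u$ at $C_u(x_1)$ passes through $(x_1,X_2(x_1))$ and meets $\mathcal{S}_0^+(\tau)$ only there, since $\mathcal{S}_0^+$ lies above its tangent line. A point $\theta\in\mathcal{S}_0^+$ with $\theta_1\in(-x_1^*,x_1^*)$ therefore admits no interior critical point of $x_1\mapsto d(\theta,C_u(x_1))^2$ on either branch. Since this distance diverges as $|x_1|\to\infty$, its minimum over each branch is attained at the endpoint $x_1=\pm x_1^*$, which is $K$ in both cases, the two branches meeting there.

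\emph{Part 4.} This is the computational step and the main obstacle. I would parametrize $\mathcal{C}_\ell$ by $x_1$ and show that, for $\rho\ge0$ and $\tau\le c^2(1-\rho)^2/(1+\rho)$, every point with $C_{\ell,2}(x_1,c)\ge0$ satisfies $d(\theta,C_\ell(x_1))\ge\bar r(\theta_1)=d(\theta,K)$. Such a point then cannot lie on $\partial B(\theta,r)$ with $r<\bar r(\theta_1)$. The condition $C_{\ell,2}\ge0$ reduces to $(1+\rho)^2X_2^2+(1-\rho)^2x_1^2\ge c^2(1+\rho)^2$, which lower-bounds $|x_1|$ or $X_2$. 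I would then compare squared distances, writing $d(\theta,C_\ell(x_1))^2-d(\theta,K)^2$ as a polynomial expression in $(x_1,\theta_1,\tau)$ after clearing the square root. First I would try to show it is minimized at the boundary case $C_{\ell,2}=0$, and there check nonnegativity using $\rho\ge0$, which makes $K$ high and the lower curve flat. If this direct algebra is intractable, the fallback is a geometric bound: on $\{x_2\ge0\}$, $\mathcal{C}_\ell$ lies below the chord from $K$ and outside $B(\theta,\bar r)$.
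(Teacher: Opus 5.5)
\textbf{Parts 1 and 2.} These are fine. The paper simply moves along the vertical ray through the foot point and applies the intermediate value theorem, so the graph property you invoke is not even needed.

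\textbf{Part 4.} This is the main gap: you have not actually proved it. Your reduction is the right target: show $d(\theta,C_\ell(x_1))\ge\bar r(\theta_1)$ whenever $C_{\ell,2}(x_1,c)\ge0$. But what follows is a plan for polynomial algebra you have not carried out. Its key step, that the difference of squared distances is minimized at $C_{\ell,2}=0$, is unverified, and the geometric fallback is too vague to check.

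The missing idea is a reflection.
\begin{itemize}
\item By \Cref{lem:The-perpendicular-bisector}, $C_\ell(x_1)$ and $C_u(x_1)$ are mirror images in the tangent line $L$ to $\mathcal{S}_0^+(\tau)$ at $(x_1,X_2(x_1))$.
\item Since $\mathcal{S}_0^+(\tau)$ lies on or above $L$, $\theta$ is on the same side of $L$ as $C_u(x_1)$. Hence $d(\theta,C_\ell(x_1))\ge d(\theta,C_u(x_1))$.
\item Your own rewriting gives $C_{\ell,2}\ge0 \iff 2(1-\rho)x_1^2\ge c^2(1+\rho)^2-(1+\rho)\tau$. Since $\rho\ge0$ implies $c^2(1+\rho)^2\ge c^2(1-\rho)^2$, this forces $|x_1|\ge x_1^*$.
\item Therefore $C_u(x_1)\in\mathcal{C}_u(c)$, and Part 3 yields $d(\theta,C_u(x_1))\ge d(\theta,K)=\bar r(\theta_1)>r$. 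So $C_\ell(x_1)\notin\partial B(\theta,r)$.
\end{itemize}
This is exactly where $\rho\ge0$ enters, and no distance polynomial needs to be analyzed.

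\textbf{Part 3.} This part rests on a false step. You claim the normal line at $(x_1,X_2(x_1))$ meets $\mathcal{S}_0^+(\tau)$ only at that point, because $\mathcal{S}_0^+$ lies above its tangent lines. That does not follow from convexity: the normal to $y=x^2$ at $(1,1)$ meets the parabola again at $(-3/2,9/4)$. Here the claim fails outright for $\rho<0$, which Part 3 allows.

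To see this, substitute $(a,X_2(a))+t\bigl(-(1-\rho)a,(1+\rho)X_2(a)\bigr)$ into $(1+\rho)x_2^2-(1-\rho)x_1^2=\tau$. The second root is $t^*=-2\bigl[(1+\rho)^2X_2(a)^2+(1-\rho)^2a^2\bigr]/D$, with $D=(1+\rho)^2\tau+4\rho(1-\rho)a^2$. If $D>0$, this point is on the lower branch. If $D<0$, which happens for $\rho<0$ and large $|a|$, it lies on $\mathcal{S}_0^+(\tau)$ itself.

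What saves your argument is that, in the $D<0$ case, the second intersection has abscissa of sign opposite to $a$ and modulus larger than $|a|$. So it cannot be $\theta$ when $|\theta_1|<x_1^*\le|a|$. You need to establish this, or an equivalent fact. It is what the paper's direct computation of $h'(x_1)$ delivers: apart from the factors $(x_1-\theta_1)$ and $\bigl((1+\rho)\tau+2(1-\rho)x_1^2\bigr)^{3/2}-c(1-\rho^2)\tau$, the remaining factor is positive whenever $x_1(x_1+\theta_1)\ge0$.
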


\begin{proof}
Prove \ref{enu:C1=000020tau<-2}. Let $x_{1}=C_{u,1}(\theta_{1},c)$
and $x_{2}^{(2)}=C_{u,2}(\theta_{1},c)$. By construction, $(x_{1},x_{2}^{(2)})\in\mathcal{C}_{u}(\tau,c)$
and $d\left(\theta,(x_{1},x_{2}^{(2)})\right)=c<r$. In addition,
there exists $x_{2}$ large enough such that $d\left(\theta,(x_{1},x_{2})\right)>r$.
By continuity, there exists $x_{2}^{(1)}\in(x_{2}^{(2)},x_{2})$ such
that $d\left(\theta,(x_{1},x_{2}^{(2)})\right)=r$.

The proof of \ref{enu:C2-1-1-1} is an analog of \ref{enu:C1=000020tau<-2}.

To prove \ref{lem:min=000020dist=000020H}, let $x_{1}\not\in(-x_{1}^{*},x_{1}^{*})$.
The distance between $\theta$ and $\left(C_{u,1}(x_{1},c),C_{u,2}(x_{1},c)\right)\in\mathcal{C}_{u}(c)$
is
\begin{align*}
h(x_{1})= & \left(C_{u,1}(x_{1},c)-\theta_{1}\right)^{2}+\left(C_{u,2}(x_{1},c)-X_{2}(\theta_{1})\right)^{2}.
\end{align*}
Taking the first order derivative 
\begin{align*}
\frac{\mathrm d h(x_{1})}{\mathrm d x_{1}}
&=\scalebox{1}{$
(x_{1}-\theta_{1})\left(\left((1+\rho)\tau+2(1-\rho)x_{1}^{2}\right)^{3/2}
-c(1-\rho^{2})\tau\right)$} \\
&\quad\times\scalebox{1}{$
\frac{2\left((1-\rho)^{2}x_{1}(x_{1}+\theta_{1})+(1-\rho^{2})x_{1}^{2}
+(1+\rho)\tau+(1+\rho)\sqrt{(1-\rho)x_{1}^{2}+\tau}
\sqrt{\theta_{1}^{2}(1-\rho)+\tau}\right)}
{(1+\rho)\sqrt{\tau+(1-\rho)x_{1}^{2}}
\left((1+\rho)\tau+2(1-\rho)x_{1}^{2}\right)^{3/2}
\left(\sqrt{\tau+(1-\rho)x_{1}^{2}}
+\sqrt{\theta_{1}^{2}(1-\rho)+\tau}\right)}
$}.
\end{align*}

Note that (i) $|x_1|\geq x_{1}^{*}\geq|\theta_{1}|$ thus $x_1(x_1+\theta_{1})\geq0$;
(ii) $\tau\leq\frac{c^{2}(1-\rho)^{2}}{1+\rho}$ thus
\[
\left((1+\rho)\tau+2(1-\rho)x_{1}^{2}\right)^{3/2}-c(1-\rho^{2})\tau\geq\left((1+\rho)\tau+2(1-\rho)x_{1}^{*2}\right)^{3/2}-c(1-\rho^{2})\tau=0.
\]
Therefore, the sign of $\frac{\text{d}h(x_{1})}{\text{d}x_{1}}$ is
the same as $(x_{1}-\theta_{1})$, and thus $\frac{\text{d}h(x)}{\text{d}x}<0$
for $x<-x_{1}^{*}<0$, and $\frac{\text{d}h(x)}{\text{d}x}>0$ for
$x>x_{1}^{*}>0$. Hence, $h(x)$ is minimized at $x_{1}^{*}$, i.e.
point $H$.

To prove \ref{enu:B=000020cap=000020Cl=000020}, by contradiction,
assume that there exists $A=(C_{\ell,1}(x_{1}),C_{\ell,2}(x_{1}))\in\partial B\left((\theta_{1},\theta_{2}),r\right)$
and $C_{\ell,2}(x_{1})\geq0$. WLOG, assume that $C_{\ell,1}(x_{1})\geq0$.
Since $C_{\ell,2}(x_{1})$ is increasing in $x_{1}$, and $C_{\ell,2}\left(\frac{\sqrt{c^{2}(\rho+1)^{2}-(\rho+1)\tau}}{\sqrt{2}\sqrt{1-\rho}}\right)=0$,
we have 
\[
x_{1}>\frac{\sqrt{c^{2}(1+\rho)^{2}-(1+\rho)\tau}}{\sqrt{2}\sqrt{1-\rho}}\geq x_{1}^{*}.
\]
Let $A^{\prime}=(C_{u,1}(x_{1}),C_{u,2}(x_{1}))$. By Lemma \ref{lem:The-perpendicular-bisector},
the perpendicular bisector of $AA^{\prime}$ is tangent of $\mathcal{S}_{0}^{+}(\tau)$
at $(x_{1},X_{2}(x_{1}))$. Since $\mathcal{S}_{0}^{+}(\tau)$ is
convex, $\theta$ is above the perpendicular bisector. This further
implies that 
\[
r=d\left(\theta,A\right)>d\left(\theta,A^{\prime}\right).
\]
However, by Lemma \ref{lem:I=000020and=000020J}.\ref{lem:min=000020dist=000020H},
we have 
\[
d\left(\theta,A^{\prime}\right)\geq d\left(\theta,K\right)=\bar{r}(\theta_{1}),
\]
which is a contradiction. Therefore, such $x_{1}$ does not exist.
\end{proof}
\begin{lem}
\label{lem:S0^=000020cap=000020B}Let $r>0$. $\mathcal{S}_{0}^{+}(\tau)$
intersects $\partial B\left((\theta_{1},X_{2}(\theta_{1})),r\right)$
at a minimum of two points.
\end{lem}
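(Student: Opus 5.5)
The plan is a continuity argument along the upper branch. Write $O=(\theta_1,X_2(\theta_1))$ and parametrize $\mathcal{S}_0^{+}(\tau)$ by $x_1\mapsto (x_1,X_2(x_1))$, $x_1\in\mathbb{R}$. Consider the squared distance
\[
\delta(x_1)=(x_1-\theta_1)^2+\bigl(X_2(x_1)-X_2(\theta_1)\bigr)^2 .
\]
The branch meets $\partial B(O,r)$ exactly at the solutions of $\delta(x_1)=r^2$. I will produce one solution with $x_1>\theta_1$ and one with $x_1<\theta_1$; these are distinct points, which gives at least two intersections.

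First, $\delta$ is continuous, since $X_2(x_1)=\sqrt{(\tau+(1-\rho)x_1^2)/(1+\rho)}$ is continuous for $|\rho|<1$ and $\tau\ge0$. Second, $\delta(\theta_1)=0<r^2$. Third, $\delta(x_1)\ge (x_1-\theta_1)^2$, so $\delta(x_1)\to\infty$ as $x_1\to\pm\infty$. By the intermediate value theorem there is some $x_1^{R}\in(\theta_1,\infty)$ with $\delta(x_1^{R})=r^2$, and likewise some $x_1^{L}\in(-\infty,\theta_1)$. The corresponding points $I=(x_1^L,X_2(x_1^L))$ and $J=(x_1^R,X_2(x_1^R))$ have different first coordinates. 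Both lie on $\mathcal{S}_0^{+}(\tau)\cap\partial B(O,r)$, with $I$ to the left of $J$, exactly as used in Proposition \ref{prop:tau=000020large}.

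I expect no real obstacle. The only care needed is at $\tau=0$, where $X_2(x_1)=\sqrt{(1-\rho)/(1+\rho)}\,|x_1|$ is still continuous, so the argument goes through unchanged. If the application needs the points ordered, the parametrization by $x_1$ provides this directly. No monotonicity of $\delta$ is required, because the statement only asks for at least two points.
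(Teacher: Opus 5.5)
Your proposal is correct and follows essentially the same route as the paper: it also takes the squared distance $h(x_1)$ from $(x_1,X_2(x_1))$ to the center, notes $h(\theta_1)=0$ and $h(\pm\infty)=\infty$, and applies the intermediate value theorem on each side of $\theta_1$. Your lower bound $\delta(x_1)\ge (x_1-\theta_1)^2$ is a slightly cleaner justification of the divergence at $\pm\infty$ than the paper's bare assertion.
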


\begin{proof}
Let $h(x_{1})$ be the distance between $(x_{1},X_{2}(x_{1}))\in\mathcal{S}_{0}^{+}(\tau)$
and the center of the circle $(\theta_{1},\theta_{2})=(\theta_{1},X_{2}(\theta_{1}))$,
i.e.
\begin{align*}
h(x_{1}) & =\left(x_{1}-\theta_{1}\right)^{2}+\left(X_{2}(x_{1})-X_{2}(\theta_{1})\right)^{2}\\
 & =\frac{\left(\sqrt{\tau+(1-\rho)x_{1}^{2}}-\sqrt{\tau+(1-\rho)\theta_{1}^{2}}\right)^{2}}{1+\rho}+(x_{1}-\theta_{1})^{2}.
\end{align*}
It is easy to see that $h(\theta_{1})=0$, $h(-\infty)=\infty$ and
$h(+\infty)=\infty$. Since $h(x_{1})$ is a continuous function,
there exists $x_{1}^{(1)}<\theta_{1}<x_{1}^{(2)}$ such that 
\[
h(x_{1}^{(1)})=h(x_{1}^{(2)})=r^{2}.
\]
Thus $\mathcal{S}_{0}^{+}(\tau)$ intersects $\partial B\left((\theta_{1},\theta_{2}),r\right)$
at $\left(x_{1}^{(1)},X_{2}(x_{1}^{(1)})\right)$ and $\left(x_{1}^{(2)},X_{2}(x_{1}^{(2)})\right)$.
\end{proof}
\begin{lem}
\label{lem:=000020x<xbar}Let $\tau\leq\frac{(1-\rho)^{2}c^{2}}{1+\rho}$,
with $\rho\geq0$. Suppose $|x_{1}|<x_{1}^{*}$ and let $O=(x_{1},X_{2}(x_{1}))\in\mathcal{S}_{0}(\tau)$.
Define $\mathcal{C}_{j}^{-}=\left\{ (x_{1},x_{2}):(x_{1},-x_{2})\in\mathcal{C}_{j}(c)\right\} $
where $j=\ell,u$, $\mathcal{C}_{j}$ is defined as in Lemma \ref{lem:S+}.
For all $r<\bar{r}(x_{1})$, let $(x_{1}^{(1)},0),(x_{1}^{(2)},0)\in\partial B(O,r)$.
 Then
\end{lem}

\begin{enumerate}
\item \label{enu:x-axis}$\max\left\{ |x_{1}^{(1)}|,|x_{1}^{(2)}|\right\} \leq\bar{x}_{1}:=\frac{\sqrt{2}\sqrt{c^{2}(\rho+1)-\tau}}{\sqrt{1-\rho^{2}}}$
.
\item \label{enu:covered=000020} If $(x_{1},x_{2})\in\partial B(O,r)\cap\mathcal{C}_{\ell}^{-}$,
then $x_{2}\geq0$. Moreover, $\partial B(O,r)\cap\mathcal{C}_{u}^{-}=\emptyset$.
\end{enumerate}
\begin{proof}
Proof of Part \ref{enu:x-axis}. WLOG, assume $x_{1}^{(1)}\geq0$. Let $k=\frac{\sqrt{2}\sqrt{c^{2}(1-\rho)+\tau}}{\sqrt{1-\rho^{2}}}$
be the vertical axis of $K$. Let
\begin{align*}
h(x_{1}) & =\left\Vert O-(\bar{x}_{1},0)\right\Vert ^{2}-\left\Vert O-K\right\Vert ^{2}\\
 & =(\bar{x}_{1}-x_{1})^{2}+X_{2}(x_{1})^{2}-x_{1}^{2}-\left(k-X_{2}(x_{1})\right)^{2}.
\end{align*}
It suffices to show that $h(x_1^{(1)})\geq0$. Note that 
\begin{align*}
\frac{\partial h(x_{1})}{\partial x_{1}}
&=\scalebox{1}{$
\frac{2\sqrt{2}\left((1+\rho)^2(1-\rho)\left(c^{2}(1-\rho)+\tau\right)
\left(\tau+(1-\rho)x_{1}^{2}\right)\right)^{-1/2}}
{\left(c^{2}(1-\rho)+\tau\right)x_{1}(1-\rho)
+\sqrt{(1+\rho)\left(c^{4}(1-\rho^{2})+2c^{2}\rho\tau-\tau^{2}\right)}
\sqrt{\tau+(1-\rho)x_{1}^{2}}}
\,\tilde{h}(x_{1})
$} \\
\tilde{h}(x_{1})
&=\scalebox{0.9}{$
2(1-\rho)\left(c^{2}\tau(1-3\rho)+\tau^{2}-2c^{4}(1-\rho)\rho\right)x_{1}^{2}
-(1+\rho)\left(c^{4}(1-\rho^{2})+2c^{2}\rho\tau-\tau^{2}\right)\tau
$}.
\end{align*}

The sign of $\frac{\partial h(x_{1})}{\partial x_{1}}$ depends on
$\tilde{h}(x_{1})$. Observe that
\begin{align*}
\tilde{h}(0) & =-(1+\rho)\left(c^{4}-(\tau-\rho c^{2})^{2}\right)\tau\leq0
\end{align*}
where the inequality follows from $\tau\leq\frac{(1-\rho)^{2}c^{2}}{1+\rho}\leq(1+\rho)c^{2}$.
Moreover
\[
\tilde{h}(x_{1}^{*})=-2c^{2}(1-\rho)\rho\left(\tau-c^{2}(\rho-1)\right)^{2}\leq0.
\]
Since $\tilde{h}(x_{1})$ is monotone in $x_{1}$ for $x_{1}>0$,
we have
\begin{align*}
\tilde{h}(x_{1}) & \leq\max\left\{ \tilde{h}(0),\;\tilde{h}(x_{1}^{*})\right\} \leq0.
\end{align*}
Thus $h(x_{1})$ decreases in $x_{1}$, and $h(x_{1}^{(1)})\geq0$ follows from
\begin{align*}
h&(x_{1})  \geq h(x_{1}^{*})\\
 & =\frac{8c^{4}\rho^{2}(1-\rho^{2})^{-1}}{c^{2}(1+\rho^{2})-(1+\rho)\tau+\sqrt{(\rho+1)\left(c^{2}(1+\rho)-\tau\right)}\sqrt{c^{2}(1-\rho)^{2}-(\rho+1)\tau}}\geq0.
\end{align*}

Proof of Part \ref{enu:covered=000020}. We can verify that $(\bar{x}_{1},0)\in\mathcal{C}_{\ell}^{-}$.
Moreover, for all $(x_{1},x_{2})\in\mathcal{C}_{\ell}^{-}$, if $x_{2}<0$,
then $|x_{1}|>\bar{x}_{1}$. By Part \ref{enu:x-axis}, such points
cannot lie on $\partial B(O,r)$. Finally, $\partial B(O,r)\cap\mathcal{C}_{u}^{-}$
because
\[
d(O,\mathcal{C}_{u}^{-})>d(O,\mathcal{C}_{u})=d(O,K)=\bar{r}(x_{1})>r.
\]
The inequality from the symmetry of $\mathcal{C}_{u}$ and $\mathcal{C}_{u}^{-}$
about the $x_{1}$-axis, combined with $X_{2}(x_{1})>0$. The equality
follows from Lemma \ref{lem:I=000020and=000020J}.\ref{lem:min=000020dist=000020H}.
\end{proof}
\begin{lem}
\label{lem:Tn=000020=00003D=000020T+o(1)}Suppose $\vartheta_{\star,n}\rightarrow\vartheta_\star$, and $\tilde{g}_n$ satisfies (i) $\frac{\partial \tilde{g}_n (\vartheta_{\star,n})}{\partial\vartheta}=0$, (ii) $\frac{\partial^2\tilde{g}_n(\vartheta_{\star,n})}{\partial\vartheta\partial\vartheta^\prime}\rightarrow H$ with a full rank $H$, (iii)  $\frac{\partial^2\tilde{g}_n(\vartheta)}{\partial\vartheta\partial\vartheta^\prime}$ is Lipschitz continuous in $\vartheta$ with Lipschitz coefficient $M\in\mathbb R_+$   for all $\vartheta\in B(\vartheta_\star,\epsilon)$ where $\epsilon>0$. Let $\mathbb{Z}\sim N(0,I_{d})$ and $\mathbb{Z}_{n}=\mathbb{Z}+o_{p}(1)$. Let $h_{n}=r_{n}(\vartheta_{n}-\vartheta_{\star,n})$.
If $\lim_{n}h_n=h\in\mathbb{R}^{d}$,
then 
\[
T_{n}^{(1)}=T^{(1)}+o_{p}(1),
\]
where 
\begin{align*}
T_{n}^{(1)} & =\inf_{\tilde{g}_n(\vartheta_{\star,n}+r_{n}^{-1}x)=\tilde{g}_n(\vartheta_{n})}\left\Vert \mathbb{Z}_{n}-(x-h_{n})\right\Vert ^{2},\quad T^{(1)}=\inf_{x^{\prime}Hx=h^{\prime}Hh}\left\Vert \mathbb{Z}-(x-h)\right\Vert ^{2}.
\end{align*}
\end{lem}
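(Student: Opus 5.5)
The argument is a localization step followed by a uniform comparison of the two constraint sets, and then an argmin-continuity argument. Write the constraint in $T_n^{(1)}$ in rescaled form. Set $x=r_n(\vartheta-\vartheta_{\star,n})$ and apply a second-order Taylor expansion with Lipschitz remainder, using (i) and (iii):
\[
r_n^2\bigl(\tilde g_n(\vartheta_{\star,n}+r_n^{-1}x)-\tilde g_n(\vartheta_{\star,n})\bigr)=\tfrac12 x'H_n x+R_n(x),\qquad |R_n(x)|\le \tfrac{M}{6}\,r_n^{-1}\|x\|^3 ,
\]
where $H_n=\partial^2\tilde g_n(\vartheta_{\star,n})/\partial\vartheta\partial\vartheta'\to H$. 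This holds whenever $\vartheta_{\star,n}+r_n^{-1}x\in B(\vartheta_\star,\epsilon)$, which is true for $\|x\|\le K$ and $n$ large. The same expansion with $x=h_n$ shows that the constraint $\tilde g_n(\vartheta_{\star,n}+r_n^{-1}x)=\tilde g_n(\vartheta_n)$ reads $F_n(x):=x'H_nx-h_n'H_nh_n+2R_n(x)-2R_n(h_n)=0$. On compacts, $F_n$ converges uniformly to $F(x):=x'Hx-h'Hh$, and the derivatives also converge uniformly: $\nabla F_n\to 2Hx$, since $\nabla R_n(x)=O(r_n^{-1}\|x\|^2)$ by the Lipschitz Hessian.

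Next, localize. Both infima are attained, or nearly attained, inside a ball of random but tight radius. The point $x=h_n$ is feasible for $T_n^{(1)}$, so $T_n^{(1)}\le\|\mathbb Z_n\|^2=O_p(1)$. Any near-minimizer $x$ therefore satisfies $\|x-h_n\|\le\|\mathbb Z_n\|+O_p(1)$, so $\|x\|=O_p(1)$. The same holds for $T^{(1)}$. Fix $\delta>0$ and choose $K$ with $P(\|\mathbb Z\|+\|h\|+1>K/3)<\delta$. On this event, both problems can be restricted to $\bar B(0,K)$, where the uniform approximation above is valid.

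The main step, and in my view the main obstacle, is to show that the feasible sets $\mathcal X_n=\{x\in\bar B(0,K):F_n(x)=0\}$ and $\mathcal X=\{x\in\bar B(0,K):F(x)=0\}$ are close in Hausdorff distance. Since $\|\mathbb Z_n-\mathbb Z\|=o_p(1)$ and the objective is Lipschitz on compacts, this suffices for $|T_n^{(1)}-T^{(1)}|=o_p(1)$.

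One direction is easy. If $x_n\in\mathcal X_n$, then $F(x_n)\to0$ uniformly, so any limit point lies in $\mathcal X$. Hence $\sup_{x\in\mathcal X_n}d(x,\mathcal X)\to0$, and therefore $\liminf T_n^{(1)}\ge T^{(1)}$ in probability.

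The other direction needs more care. I must show that every $x\in\mathcal X$ is approximated by points of $\mathcal X_n$. Because $H$ has full rank, $\nabla F(x)=2Hx\ne0$ for all $x\neq0$. So at every nonzero point of $\mathcal X$, an implicit-function or intermediate-value argument along the direction $Hx$ gives nearby zeros of $F_n$: $F_n(x\pm s\,Hx/\|Hx\|)$ has opposite signs for small fixed $s$ and large $n$, uniformly on compact pieces that stay away from $0$. The origin lies in $\mathcal X$ only when $h'Hh=0$, which is the degenerate cone case. There I would use that $H$ is full rank and $\mathcal X\ne\{0\}$ (or $H$ is definite, in which case $\mathcal X=\{0\}$ and $F_n$'s zero set shrinks to $0$ directly). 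Every neighborhood of $0$ then contains nonzero points of $\mathcal X$, so approximating those nonzero points suffices, by a diagonal argument.

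This yields $\limsup T_n^{(1)}\le T^{(1)}+o_p(1)$. Combining the two bounds and letting $\delta\downarrow0$ gives $T_n^{(1)}=T^{(1)}+o_p(1)$.
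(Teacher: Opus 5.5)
Your proof is correct, but it is organized differently from the paper's. The paper never compares constraint sets. It takes the minimizers $x_n^*$ of $T_n^{(1)}$ and $x^*$ of $T^{(1)}$ and, case by case ($H$ definite with $h'Hh=0$, $H$ indefinite with $h'Hh=0$, $h'Hh\neq0$), builds explicit $o_p(1)$ perturbations that make each minimizer feasible for the other problem. These are a radial rescaling $(1+\xi_n)x$ when $h'Hh\neq0$, a correction $x_n^*+\xi_n y_n$ with $y_n'Hx_n^*=0$ in the cone case, and an intermediate-value argument along a reflected direction $y^*$, with a separate probabilistic lemma keeping $x^*$ away from the origin.

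You instead prove a deterministic statement: Hausdorff convergence of $\{F_n=0\}\cap\bar B(0,K)$ to $\{x'Hx=h'Hh\}\cap\bar B(0,K)$. Randomness enters only through tightness of $\mathbb Z$ and the Lipschitz objective. This buys a uniform treatment of all cases:
\begin{itemize}
\item the bound $T^{(1)}\le T_n^{(1)}+o_p(1)$ becomes a soft compactness argument with no construction;
\item the reverse bound moves along $Hx$, where full rank makes the gradient of $F$ nonvanishing uniformly away from $0$;
\item the origin is handled by the cone structure rather than by a probability bound.
\end{itemize}

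Your route also sidesteps a weak spot in the paper's Step 1, Case 2. There the correction $\xi_n y_n$ with $|y_n'Hy_n|=1$ need not be $o_p(1)$. Take $d=2$, $H=\mathrm{diag}(1,-1)$ and $x=(a,b)$ with $a^2-b^2=\epsilon>0$. The only admissible $y$ is $\pm(b,a)/\sqrt{\epsilon}$, so $\xi y=\pm(b,a)$, which has norm $\|x\|$ and does not vanish unless $x\to0$. Your compactness argument has no such issue.

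The extra bookkeeping your route needs is minor. First, let the approximating zeros of $F_n$ lie in a slightly larger ball such as $\bar B(0,K+1)$, where the Taylor bound still holds for large $n$. Second, in the definite case note that $\mathcal X=\{0\}$ forces $h=0$, so $h_n\in\mathcal X_n$ supplies the approximating point.
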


\begin{proof}
Let $x_{n}^{*},x^{*}\in\mathbb{R}^{k}$ be minimizers of $T_{n}^{(1)}$
and $T^{(1)}$, respectively. Standard quadratic arguments imply $x_{n}^{*},x^{*}=O_{p}(1)$. 

Step 1. Prove $T^{(1)}\leq T_{n}^{(1)}+o_{p}(1)$. By feasibility
of $x_{n}^{*}$, 
\begin{align*}
\tilde{g}_n(\vartheta_{\star,n}+r_{n}^{-1}x_{n}^{*})
&=\scalebox{1}{$
\tilde{g}_n(\vartheta_{n})
=\tilde{g}_n(\vartheta_{\star,n}+r_{n}^{-1}h_{n})
$} \\
\scalebox{1}{$
\tilde{g}_n(\vartheta_{\star,n})
+r_{n}^{-1}\frac{\partial \tilde{g}_n(\vartheta_{\star,n})}
{\partial\vartheta^{\prime}}x_{n}^{*}
+r_{n}^{-2}x_{n}^{*\prime}
\frac{\partial^{2}\tilde{g}_n(\bar{\vartheta})}
{\partial\vartheta\partial\vartheta^{\prime}}x_{n}^{*}
$}
&=\scalebox{1}{$
\tilde{g}_n(\vartheta_{\star,n})
+r_{n}^{-1}\frac{\partial \tilde{g}_n(\vartheta_{\star,n})}
{\partial\vartheta^{\prime}}h_{n}
+r_{n}^{-2}h_{n}^{\prime}
\frac{\partial^{2}\tilde{g}_n(\bar{\bar{\vartheta}})}
{\partial\vartheta\partial\vartheta^{\prime}}h_{n}
$} \\
\Rightarrow\;
\scalebox{1}{$
x_{n}^{*\prime}
\frac{\partial^{2}\tilde{g}_n(\bar{\vartheta})}
{\partial\vartheta\partial\vartheta^{\prime}}x_{n}^{*}
=h_{n}^{\prime}
\frac{\partial^{2}\tilde{g}_n(\bar{\bar{\vartheta}})}
{\partial\vartheta\partial\vartheta^{\prime}}h_{n}
$}
\end{align*}
where $\bar{\vartheta}$ is between $\vartheta_{\star,n}$ and $\vartheta_{\star,n}+r_{n}^{-1}x_{n}^{*}$,
and $\bar{\bar{\vartheta}}$ is between $\vartheta_{\star,n}$ and $\vartheta_{\star,n}+r_{n}^{-1}h_{n}$.
By Assumption \ref{assu:uniform=000020normal}.\ref{enu:-is-twice=000020diff},
\begin{align}
x_{n}^{*\prime}\left[H+o_{p}(1)\right]x_{n}^{*} & =\left(h+o(1)\right)^{\prime}\left[H+o_{p}(1)\right]\left(h+o(1)\right)\nonumber \\
\Rightarrow x_{n}^{*\prime}Hx_{n}^{*} & =h^{\prime}Hh+o_{p}(1).\label{eq:x*Hx*=00003DhHh}
\end{align}
Case 1. $h^{\prime}Hh=0$ and $H$ is positive/ negative definite.
(\ref{eq:x*Hx*=00003DhHh}) implies $x_{n}^{*}=o_{p}(1)$. Hence
\[
T^{(1)}\leq\left\Vert \mathbb{Z}+h\right\Vert ^{2}=T_{n}^{(1)}+o_{p}(1).
\]
For Case 2 and Case 3 below, we construct $\tilde{x}_{n}^{*}=x_{n}^{*}+o_{p}(1)$
such that $\tilde{x}_{n}^{*\prime}H\tilde{x}_{n}^{*}=h^{\prime}Hh$.
It then follows that 
\[
T^{(1)}\leq\left\Vert \mathbb{Z}-(\tilde{x}_{n}^{*}-h)\right\Vert ^{2}=\left\Vert \mathbb{Z}_{n}-(x_{n}^{*}-h)\right\Vert ^{2}+o_{p}(1)=T_{n}^{(1)}+o_p(1).
\]
Case 2. $h^{\prime}Hh=0$ and $H$ is indefinite. By Lemma \ref{lem:y=000020H=000020x},
there is $y_{n}$ such that 
\[
x_{n}^{*\prime}Hy_{n}=0,\quad y_{n}^{\prime}Hy_{n}=-\text{sign}(x_{n}^{*\prime}Hx_{n}^{*}).
\]
Let $\tilde{x}_{n}^{*}=x_{n}^{*}+\xi_{n}y_{n}$, where $\xi_{n}=\sqrt{|x_{n}^{*\prime}Hx_{n}^{*}|}$. By (\ref{eq:x*Hx*=00003DhHh}), $\xi_n=o_{p}(1).$
In addition,
\[
\tilde{x}_{n}^{*\prime}H\tilde{x}_{n}^{*}=\left(x_{n}^{*}+\xi_{n}y_{n}\right)^{\prime}H\left(x_{n}^{*}+\xi_{n}y_{n}\right)=x_{n}^{*\prime}Hx_{n}^{*}-\text{sign}(x_{n}^{*\prime}Hx_{n}^{*})\xi_{n}^{2}=0.
\]
Case 3. $h^{\prime}Hh\neq0$. Let $(1+\xi_{n})^{2}=\frac{h^{\prime}Hh}{x_{n}^{*\prime}Hx_{n}^{*}}.$
By (\ref{eq:x*Hx*=00003DhHh}), $\xi_{n}=o_{p}(1)$. Let $\tilde{x}_{n}^{*}=x_{n}^{*}+\xi_{n}x_{n}^{*}$.
By construction, $\tilde{x}_{n}^{*\prime}H\tilde{x}_{n}^{*}=h^{\prime}Hh$. 

Step 2. Prove $T_{n}^{(1)}\leq T^{(1)}+o_{p}(1)$. Case 1. $h^{\prime}Hh=0$ and
$H$ is positive or negative definite. Here $x^{*}=0$ and $T=\mathbb{Z}^{\prime}\mathbb{Z}$.
Thus,
\[
T_{n}^{(1)}\leq\left\Vert \mathbb{Z}_{n}\right\Vert ^{2}+o_p(1)=T^{(1)}+o_{p}(1).
\]
For Case 2 and Case 3 below, we show that there exists $\eta_{n}=o_{p}(1)$
such that 
\[
\tilde{g}_n(\vartheta_{\star,n}+r_{n}^{-1}(x^{*}+\eta_{n})=\tilde{g}_n(\vartheta_{n}).
\]
Then the conclusion follows from 
\[
T_{n}^{(1)}\leq\left\Vert \mathbb{Z}_{n}-(x^{*}+\eta_{n}-h_{n})\right\Vert ^{2}=\left\Vert \mathbb{Z}-(x^{*}-h)\right\Vert ^{2}+o_{p}(1)=T^{(1)}+o_{p}(1).
\]
Case 2. $h^{\prime}Hh=0$ and $H$ is indefinite. Assume $H=\text{diag}\left\{ \lambda_{1},...,\lambda_{m},-\lambda_{m+1},...,-\lambda_{d}\right\} $
with $\lambda_{1},...,\lambda_{d}>0$. If $H$ is not diagonal, write
$H=P^{\prime}\Lambda P$ with diagonal $\Lambda$ and transform $x^{*}$
by $Px^{*}$. Define 
\[
y^{*}=-\text{sign}\left(\tilde{g}_n(\vartheta_{\star,n}+r_{n}^{-1}x^{*})-\tilde{g}_n(\vartheta_{n})\right)(x_{1}^{*},...,x_{m}^{*},-x_{m+1}^{*},...,-x_{d}^{*}).
\]
Then
\[
y^{*\prime}Hy^{*}=x^{*\prime}Hx^{*}=0,\;y^{*\prime}Hx^{*}=-\text{sign}\left(\tilde{g}_n(\vartheta_{\star,n}+r_{n}^{-1}x^{*})-\tilde{g}_n(\vartheta_{n})\right)\sum_{i=1}^{d}\lambda_{i}x_{i}^{*2}.
\]
Let 
\[
u_{n}(\xi)=r_{n}^{2}\left(\tilde{g}_n\left(\vartheta_{\star,n}+r_{n}^{-1}(x^{*}+\xi y^{*})\right)-\tilde{g}_n(\vartheta_{n})\right).
\]
Define 
\[
\xi_{n}=\underset{\xi}{\arg\min}|\xi|\text{ s.t. }u_{n}(\xi)=0.
\]
We show $\xi_{n}=o_{p}(1)$. Note that 
\[
u_{n}(0)=r_{n}^{2}\left(\tilde{g}_n\left(\vartheta_{\star,n}+r_{n}^{-1}x^{*}\right)-\tilde{g}_n(\vartheta_{n})\right),
\]
and for all $\epsilon>0$,
\begin{align*}
u_{n}(\epsilon) & =2\epsilon y^{*\prime}Hx^{*}+o_{p}(1)=-2\epsilon\text{sign}\left(u_{n}(0)\right)\sum_{i=1}^{d}\lambda_{i}x_{i}^{*2}+o_{p}(1).
\end{align*}
By Lemma \ref{lem:lambda=000020x=000020epsilin}, there is $N$ such
that for all $n\geq N$, 
\[
P(|\xi_{n}|\leq\epsilon)\geq P\left(u_{n}(\epsilon)u_{n}(0)\leq0\right)\geq1-\epsilon,
\]
which implies that $\xi_{n}=o_{p}(1)$. Let $\eta_{n}=\xi_{n}y^{*}$,
the conclusion follows. 

Case 3. $h^{\prime}Hh\neq0$. Define 
\[
\xi_{n}=\underset{\xi}{\arg\min|\xi|}\text{ s.t. }r_{n}^{2}\left(\tilde{g}_n(\vartheta_{\star,n}+(1+\xi)r_{n}^{-1}x^{*})-\tilde{g}_n(\vartheta_{n})\right)=0,
\]
and $\xi_{n}=\infty$ if  no solution exists. We show $\xi_{n}=o_{p}(1)$,
i.e. for all $\epsilon>0$, there is $N$ such that for all $n>N$
$P\left(|\xi_{n}|>\epsilon\right)<\epsilon.$
To see this, let 
\[
u_{n}\left(\xi\right)=r_{n}^{2}\left(\tilde{g}_n(\vartheta_{\star,n}+r_{n}^{-1}(1+\xi)x^{*})-\tilde{g}_n(\vartheta_{n})\right).
\]
Then 
\[
u_{n}(\epsilon)=\left((1+\epsilon)^{2}-1\right)h^{\prime}Hh+o_{p}(1),\;u_{n}(-\epsilon)=\left((1-\epsilon)^{2}-1\right)h^{\prime}Hh+o_{p}(1).
\]
Therefore, there is $N$ such that for all $n>N$, 
\[
P\left(u_{n}(\epsilon)u_{n}(-\epsilon)<0\right)\geq1-\epsilon.
\]
By the continuity of $u_{n}$, $\left\{ u_{n}(\epsilon)>0\text{ and }u_{n}(-\epsilon)<0\right\} $
implies $|\xi_{n}|\leq\epsilon$. Therefore, for all $n>N$,
\[
P(|\xi_{n}|\leq\epsilon)\geq P\left(u_{n}(\epsilon)>0\text{ and }u_{n}(-\epsilon)<0\right)\geq1-\epsilon.
\]
The conclusion follows from $\eta_{n}= \xi_{n}x^{*}.$
\end{proof}
\begin{lem}
\label{lem:Tn=000020=00003D=000020T+o1=0000202}Suppose Assumption
\ref{assm:differentiability} holds. Let $\mathbb{Z}\sim N(0,I_{d})$ and
$\mathbb{Z}_{n}=\mathbb{Z}+o_{p}(1)$. Suppose $\Sigma_{n}=\Sigma+o_{p}(1)$
with $\Sigma\in\mathcal{S}$, where $\mathcal{S}$ is defined in Assumption
\ref{assu:uniform=000020normal}.\ref{enu:compact=000020Sigma}. If
$\lim s_{n}(\theta_{n}-\theta_{\star})=h\in\mathbb{R}^{d}$ with $h^{\prime}H\neq0$,
for some sequence $s_{n}\rightarrow\infty$ with $s_{n}/r_{n}\rightarrow0$,
then 
\[
T_{n}^{(2)}=T^{(2)}+o_{p}(1)
\]
where
\begin{align}
T_{n}^{(2)} & =\inf_{g(\theta_{n}+r_{n}^{-1}x)=g(\theta_{n})}\left\Vert \mathbb{Z}_{n}-\Sigma_{n}^{-1/2}x\right\Vert ^{2},\quad T^{(2)}=\inf_{h^{\prime}Hx=0}\left\Vert \mathbb{Z}-\Sigma^{-1/2}x\right\Vert ^{2}.\label{eq:T^(2)_n}
\end{align}
\end{lem}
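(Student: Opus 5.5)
Write $d_n=\theta_n-\theta_\star$, so $\|d_n\|=s_n^{-1}\to0$, $s_nd_n\to h$ and $r_n\|d_n\|=r_n/s_n\to\infty$. The first step linearizes the constraint at the scale $r_n^{-1}$ around $\theta_n$. By a second-order Taylor expansion of $g$ at $\theta_n$, with $\nabla^2 g$ continuous (\Cref{assm:differentiability}), for $x$ in a compact set $K$ we have
\begin{align*}
r_n\bigl(g(\theta_n+r_n^{-1}x)-g(\theta_n)\bigr)=\nabla g(\theta_n)'x+\tfrac12 r_n^{-1}x'\nabla^2g(\bar\theta)x .
\end{align*}
Since $\nabla g(\theta_\star)=0$, we also have $\nabla g(\theta_n)=H d_n+o(\|d_n\|)$. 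Multiplying the constraint by $r_ns_n$ gives
\begin{align*}
u_n(x):=r_ns_n\bigl(g(\theta_n+r_n^{-1}x)-g(\theta_n)\bigr)=h'Hx+o(1)+O(s_n/r_n)\|x\|^2 .
\end{align*}
The remainder is $o(1)$ uniformly on $K$ because $s_n/r_n\to0$. So on compacts the constraint $u_n(x)=0$ converges uniformly to the hyperplane $\{x:h'Hx=0\}$. The hyperplane is genuine because $v:=Hh\neq0$.

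The second step is tightness of minimizers. Taking $x=0$ is feasible, so $T_n^{(2)}\le\|\mathbb Z_n\|^2=O_p(1)$. The eigenvalue bounds on $\Sigma_n$ (from $\mathcal S$) then give $\|x_n^*\|=O_p(1)$ for any near-minimizer $x_n^*$, and the same holds for a minimizer $x^*$ of $T^{(2)}$. Work on events where these lie in a large ball $K$.

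The third step is the two inequalities, following Steps 1–2 of \Cref{lem:Tn=000020=00003D=000020T+o(1)}. For $T^{(2)}\le T_n^{(2)}+o_p(1)$: since $x_n^*$ is feasible, $v'x_n^*=o_p(1)$. Project onto the hyperplane by setting $\tilde x_n=x_n^*-\frac{v'x_n^*}{\|v\|^2}v=x_n^*+o_p(1)$. Then use continuity of $x\mapsto\|\mathbb Z-\Sigma^{-1/2}x\|^2$ on $K$, together with $\mathbb Z_n-\mathbb Z=o_p(1)$ and $\Sigma_n-\Sigma=o_p(1)$. For $T_n^{(2)}\le T^{(2)}+o_p(1)$: we need a feasible point near $x^*$. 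Consider the one-dimensional map $\xi\mapsto u_n(x^*+\xi v)=\xi\|v\|^2+o_p(1)$, uniformly for $|\xi|\le1$. For any $\epsilon>0$, with probability approaching one we get $u_n(x^*+\epsilon v)>0>u_n(x^*-\epsilon v)$. By the intermediate value theorem (continuity of $g$) there is then a root $\xi_n$ with $|\xi_n|\le\epsilon$, so $\xi_n=o_p(1)$. Hence $x^*+\xi_nv$ is feasible for $T_n^{(2)}$, and its objective value is $T^{(2)}+o_p(1)$. This is the same sign-change device used in Case 3 of \Cref{lem:Tn=000020=00003D=000020T+o(1)}.

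The main obstacle is the uniformity of the Taylor remainder in the first step. The expansion of $\nabla g(\theta_n)$ around $\theta_\star$ has remainder $o(\|d_n\|)$, and after scaling by $s_n$ it becomes $o(1)$. The quadratic term carries the factor $s_n/r_n$, which is exactly why the condition $s_n/r_n\to0$ appears. Both bounds must hold uniformly over $x\in K$ with $K$ random but tight. I would handle this by fixing a deterministic radius $M$ that yields probability at least $1-\epsilon$, proving the uniform approximation on $\{\|x\|\le M\}$, and then letting $\epsilon\downarrow0$.
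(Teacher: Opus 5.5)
Your proposal is correct and follows the paper's proof essentially step by step. Both use the same rescaled linearization $r_ns_n\bigl(g(\theta_n+r_n^{-1}x)-g(\theta_n)\bigr)=h'Hx+o_p(1)$ on tight sets, a correction onto $\{h'Hx=0\}$ for $T^{(2)}\le T_n^{(2)}+o_p(1)$, and the sign-change/intermediate-value argument along a direction $\iota$ with $h'H\iota\neq0$ for the reverse inequality (your $v=Hh$ corresponds to $\iota=v/\|v\|^2$). One minor point: the lemma only gives $s_nd_n\to h\neq0$, not $\|d_n\|=s_n^{-1}$, but your argument only uses boundedness of $s_n\|d_n\|$, so nothing changes.
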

\begin{proof}
Let $x_{n}^{*}$ and $x^{*}$ denote the optimizers of $T_{n}^{(2)}$
and $T^{(2)}$, respectively. It is easy to verify that $x_{n}^{*},x^{*}=O_{p}(1)$. 

Step 1. Prove $T^{(2)}\leq T_{n}^{(2)}+o_{p}(1)$. Let $\theta_{n}^{*}=\theta_{n}+r_{n}^{-1}x_{n}^{*}$.
By the feasibility constraint,
\begin{align*}
0= & s_{n}r_{n}\left(g(\theta_{n}^{*})-g(\theta_{n})\right).
\end{align*}
Expanding $g(\theta_{n}^{*})$ around $\theta_{n}$ using a second-order
Taylor expansion, and linearizing $\nabla g(\theta_{n})$ around $\theta_{\star}$,
we obtain 
\begin{align}
0= & s_{n}r_{n}\left(\frac{\partial g(\theta_{n})}{\partial\theta^{\prime}}(\theta_{n}^{*}-\theta_{n})+(\theta_{n}^{*}-\theta_{n})^{\prime}\frac{\partial g(\bar{\theta})}{\partial\theta\partial\theta^{\prime}}(\theta_{n}^{*}-\theta_{n})\right)\nonumber \\
= & s_{n}(\theta_{n}-\theta_{\star})\frac{\partial g(\bar{\bar{\theta}})}{\partial\theta\partial\theta^{\prime}}r_{n}(\theta_{n}^{*}-\theta_{n})+\frac{s_{n}}{r_{n}}r_{n}(\theta_{n}^{*}-\theta_{n})^{\prime}\frac{\partial g(\bar{\theta})}{\partial\theta\partial\theta^{\prime}}r_{n}(\theta_{n}^{*}-\theta_{n})\nonumber \\
= & h_{n}^{\prime}\frac{\partial g(\bar{\bar{\theta}})}{\partial\theta\partial\theta^{\prime}}x_{n}^{*}+\frac{s_{n}}{r_{n}}x_{n}^{*\prime}\frac{\partial g(\bar{\theta})}{\partial\theta\partial\theta^{\prime}}x_{n}^{*}\nonumber \\
= & h^{\prime}Hx_{n}^{*}+o_{p}(1).\label{eq:h'Hx}
\end{align}
where $\bar{\theta}$ is between $\theta_{n}$ and $\theta_{n}^{*}$
and $\bar{\bar{\theta}}$ is between $\theta_{n}$ and $\theta_{\star}$.
Since $h^{\prime}H\neq0$, there is $\iota$ such that $h^\prime H\iota=1$, and $a_{n}=o_{p}(1)$ such that 
\[
h^{\prime}H(x_{n}^{*}+a_{n}\iota)=0.
\]
Thus $x_{n}^{*}+a_{n}\iota$ is feasible for the problem in $T^{(2)}$,
which implies that
\[
T^{(2)}\leq\left\Vert \mathbb{Z}-\Sigma_{n}^{-1/2}(x_{n}^{*}+a_{n}\iota)\right\Vert ^{2}=T_{n}^{(2)}+o_{p}(1).
\]

Step 2, we show that $T_{n}^{(2)}\leq T^{(2)}+o_{p}(1)$. By the same
algebra in (\ref{eq:h'Hx}), for $b=O_{p}(1)$, 
\begin{align*}
s_{n}r_{n}\left(g(\theta_{n}+r_{n}^{-1}(x^{*}+b\iota))-g(\theta_{n})\right)=h^{\prime}H(x^{*}+b\iota)+o_{p}(1)=b+o_{p}(1),
\end{align*}
where the last equality follows from $h^{\prime}Hx^{*}=0$. Define
$b_{n}$ as the solution for 
\[
b_{n}=\underset{b}{\arg\min|b|}\text{ s.t. }g(\theta_{n}+r_{n}^{-1}(x^{*}+b\iota))-g(\theta_{n})=0,
\]
and $b_{n}=\infty$ if there is no solution for $g(\theta_{n}+r_{n}^{-1}(x^{*}+b\iota))-g(\theta_{n})=0$.
Next, we show that $b_{n}=o_{p}(1)$, i.e. for all $\epsilon>0$,
there is $N$ such that for all $n>N$, 
$P(|b_{n}|>\epsilon)<\epsilon.$
Let 
\[
u_{n}(b)=s_{n}r_{n}\left(g(\theta_{n}+r_{n}^{-1}(x^{*}+b\iota))-g(\theta_{n})\right).
\]
Then 
\[
u_{n}(\epsilon)=\epsilon+o_{p}(1),\;u_{n}(-\epsilon)=-\epsilon+o_{p}(1).
\]
Therefore, there is $N$ such that for all $n>N$, 
\[
P\left(u_{n}(\epsilon)u_{n}(-\epsilon)<0\right)\geq1-\epsilon.
\]
This implies that
\[
T_{n}^{(2)}\leq\left\Vert \mathbb{Z}_{n}-\Sigma_{n}^{-1/2}(x^{*}+b_n\iota)\right\Vert ^{2}=\left\Vert \mathbb{Z}-\Sigma^{-1/2}x^{*}\right\Vert ^{2}+o_{p}(1)=T^{(2)}+o_{p}(1).
\]
Step 1 and 2 complete the proof.
\end{proof}
\begin{lem}
\label{lem:T*approximation}Suppose Assumption \ref{assu:uniform=000020normal}.\ref{enu:compact=000020Sigma}
and \ref{assu:uniform=000020normal}\ref{enu:consistentSigma} hold.
Let $\hat{h}_{n}$ and $s_{n}$ be sequences such that $s_{n}\rightarrow\infty$,
$||\hat{h}_{n}||/s_{n}\xrightarrow{p}b\neq0$, and $\frac{\hat{h}_{n}^{\prime}}{||\hat{h}_{n}||}H\xrightarrow{p}h^{\prime}$.
For $\hat{T}_{n}^{*}(h)$ defined in (\ref{eq:T*}) with $\mathbb{Z}\sim N(0,I_{d})$,
it holds that 
\[
\hat{T}_{n}^{*}(\hat{h}_{n})=\inf_{h^{\prime}x=0}\left\Vert \mathbb{Z}-\Sigma^{-1/2}x\right\Vert ^{2}+o_{p}(1).
\]
\end{lem}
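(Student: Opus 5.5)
The plan is to use the same two-sided sandwich argument as in the proof of Lemma~\ref{lem:Tn=000020=00003D=000020T+o1=0000202}, after rewriting the constraint in (\ref{eq:T*}) in centered form. Substitute $x=t-\hat h_n$ in (\ref{eq:T*}), so that
\[
\hat T_n^*(\hat h_n)=\inf_{x:\,2\hat h_n'Hx+x'Hx=0}\bigl\Vert \mathbb{Z}-\hat\Sigma^{-1/2}x\bigr\Vert^2 .
\]
Dividing the constraint by $2\Vert\hat h_n\Vert$ gives the equivalent form
\[
u_n(x):=\frac{\hat h_n'H}{\Vert\hat h_n\Vert}x+\frac{x'Hx}{2\Vert\hat h_n\Vert}=0 .
\]
Since $\Vert\hat h_n\Vert/s_n\xrightarrow{p}b\neq0$ and $s_n\to\infty$, we have $\Vert\hat h_n\Vert\xrightarrow{p}\infty$. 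Together with $\hat h_n'H/\Vert\hat h_n\Vert\xrightarrow{p}h'$, this gives $u_n(x)=h'x+o_p(1)$ uniformly over $x$ in any bounded set. So the constraint is asymptotically the linear restriction $h'x=0$ of the limit problem. Two further facts will be used throughout. First, $\hat\Sigma\xrightarrow{p}\Sigma$ by Assumption~\ref{assu:uniform=000020normal}.\ref{enu:consistentSigma}, with eigenvalues bounded by Assumption~\ref{assu:uniform=000020normal}.\ref{enu:compact=000020Sigma}. Second, $h\neq0$, because it is the limit of $\hat h_n'H/\Vert\hat h_n\Vert$ and $H$ has full rank. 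Hence we can fix $\iota$ with $h'\iota=1$.

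\textbf{Step 0 (tightness).} The point $x=0$ is feasible, so $\hat T_n^*(\hat h_n)\le\Vert\mathbb{Z}\Vert^2$. Because the eigenvalues of $\hat\Sigma$ are bounded with probability approaching one, any near-minimizer $x_n^*$ satisfies $\Vert x_n^*\Vert\le C(\Vert\mathbb{Z}\Vert+1)$, so $x_n^*=O_p(1)$. The limit minimizer $x^*$ is likewise $O_p(1)$.

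\textbf{Step 1 ($\inf_{h'x=0}\le \hat T_n^*+o_p(1)$).} Feasibility of $x_n^*$ and the uniform approximation give $h'x_n^*=o_p(1)$. Set $a_n=-h'x_n^*=o_p(1)$. Then $x_n^*+a_n\iota$ satisfies $h'(x_n^*+a_n\iota)=0$, so it is feasible for the limit problem. Using $\hat\Sigma^{-1/2}=\Sigma^{-1/2}+o_p(1)$ and boundedness, its objective equals $\hat T_n^*(\hat h_n)+o_p(1)$.

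\textbf{Step 2 ($\hat T_n^*\le\inf_{h'x=0}+o_p(1)$).} Take the limit minimizer $x^*$ and consider the scalar function $v_n(b)=u_n(x^*+b\iota)$. For bounded $b$, $v_n(b)=h'x^*+b+o_p(1)=b+o_p(1)$ uniformly. Fix $\epsilon>0$. With probability approaching one, $v_n(\epsilon)>0>v_n(-\epsilon)$, so by the intermediate value theorem there is a root $b_n$ with $|b_n|\le\epsilon$. (Alternatively, $v_n(b)=0$ is a quadratic in $b$ whose small root can be written explicitly.) Thus $b_n=o_p(1)$ and $x^*+b_n\iota$ is feasible for the finite-$n$ problem. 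Its objective is $\Vert\mathbb{Z}-\Sigma^{-1/2}x^*\Vert^2+o_p(1)$.

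Combining Steps 1 and 2 gives the claim. The main obstacle is the uniformity in Step 0 and Step 1. We must make sure the minimizer of the finite problem stays bounded even though the constraint set is a hyperboloid that may be unbounded and curved. This is handled by the feasibility of $x=0$ together with the lower bound on the eigenvalues of $\hat\Sigma^{-1}$, which confines near-minimizers to a ball of radius $O_p(1)$. On that ball the quadratic term $x'Hx/(2\Vert\hat h_n\Vert)$ is uniformly $o_p(1)$.
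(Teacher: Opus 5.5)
Your proposal is correct and follows the paper's proof: the same substitution $x=t-\hat h_n$ giving the normalized constraint $\frac{\hat h_n'H}{\Vert\hat h_n\Vert}x=-\frac{x'Hx}{2\Vert\hat h_n\Vert}$, then tightness of the minimizer to get $h'x_n^*=o_p(1)$, then the two-sided perturbation argument of the lemma for $T_n^{(2)}$ (shifting along $\iota$ with $h'\iota=1$, and finding an intermediate-value root $b_n=o_p(1)$). You simply spell out what the paper compresses into ``the same continuity and perturbation arguments,'' including $h\neq0$ from the full rank of $H$ and the $O_p(1)$ bound from the feasibility of $x=0$.
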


\begin{proof}
Let $x=t-h_{n}$, we can rewrite $\hat{T}_{n}^{*}(\hat{h}_{n})$ as
\begin{equation}
\hat{T}_{n}^{*}(\hat{h}_{n})=\inf_{x:\frac{\hat{h}_{n}^{\prime}H}{||\hat{h}_{n}||}x=-\frac{x^{\prime}Hx}{2||\hat{h}_{n}||}}\left\Vert \mathbb{Z}-(\hat{\Sigma})^{-1/2}x\right\Vert ^{2}.\label{eq:T*-1}
\end{equation}
Note that the optimizer $x_{n}^{*}=O_{p}(1)$, thus 
\[
h^{\prime}x_{n}^{*}=\left(\frac{\hat{h}_{n}^{\prime}H}{||\hat{h}_{n}||}+o(1)\right)x_{n}^{*}=-\frac{x_{n}^{*\prime}Hx_{n}^{*}}{2||\hat{h}_{n}||}+o_{p}(1)=o_{p}(1)
\]
where the last equality follows from $\hat{h}_{n}/s_{n}\xrightarrow{p}b\neq0$.
The remainder follows by the same continuity and perturbation arguments
as in Lemma \ref{lem:Tn=000020=00003D=000020T+o1=0000202}.
\end{proof}
\begin{lem}
\label{lem:lambda=000020x=000020epsilin}Let $\varepsilon_{n}=o_{p}(1)$,
$\mathbb{X}\sim N(h,I_{d})$ with $h\in\mathbb{R}^{d}$, and let $x^{*}$
be the solution to 
\[
\inf_{x^{\prime}Hx=0}T(x),\text{ where }T(x)=\left\Vert \mathbb{X}-x\right\Vert ^{2},
\]
where $H=\text{diag}\{\lambda_{1},...,\lambda_{m},-\lambda_{m+1},...,-\lambda_{d}\}$
with $\lambda_{i}>0$. Then for all $\epsilon>0$, there is $N$ such
that for all $n>N$, 
\[
P\left(\epsilon\sum_{i=1}^{d}\lambda_{i}x_{i}^{*2}+\varepsilon_{n}>0\right)\geq1-\epsilon.
\]
\end{lem}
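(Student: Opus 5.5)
The plan is to show that the random variable $W:=\sum_{i=1}^{d}\lambda_i x_i^{*2}$ is strictly positive almost surely. The conclusion then follows from a two-piece union bound, because $\varepsilon_n=o_p(1)$ is eventually dominated by $\epsilon W$ with high probability. Throughout, I take $1\le m<d$, so that $H$ is indefinite; this is the case in which the lemma is invoked (Case 2 of \Cref{lem:Tn=000020=00003D=000020T+o(1)}). If $H$ were definite the cone would be $\{0\}$ and the claim would fail trivially.

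\emph{Step 1: the minimizer exists and any selection is nonzero when $\mathbb{X}\neq 0$.} The constraint set $\mathcal{K}=\{x:x'Hx=0\}$ is a closed cone and $T(x)$ is coercive, so the infimum is attained. I fix any measurable selection $x^*$; the argument below applies to every minimizer. Suppose $x^*=0$. Every $y\in\mathcal{K}$ has $ty\in\mathcal{K}$ for all $t\in\mathbb{R}$, so $t\mapsto\|\mathbb{X}-ty\|^2$ must be minimized at $t=0$. This forces $\mathbb{X}'y=0$ for all $y\in\mathcal{K}$. By the argument in the proof of \Cref{thm:impossibility-flat}, the isotropic cone of an indefinite symmetric matrix spans $\mathbb{R}^d$. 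Hence $\mathbb{X}=0$, which is an event of probability zero since $\mathbb{X}\sim N(h,I_d)$. So $x^*\neq 0$ a.s., and because all $\lambda_i>0$ we get $W>0$ a.s.

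\emph{Step 2: quantitative positivity.} Since $P(W\le t)\downarrow P(W\le 0)=0$ as $t\downarrow 0$, for the given $\epsilon>0$ I choose $\delta>0$ with $P(\epsilon W\le 2\delta)\le\epsilon/2$. Because $\varepsilon_n=o_p(1)$, there is $N$ such that $P(|\varepsilon_n|>\delta)\le\epsilon/2$ for all $n>N$. On the complement of both events,
\[
\epsilon W+\varepsilon_n>2\delta-\delta>0 .
\]
Therefore $P(\epsilon W+\varepsilon_n>0)\ge 1-\epsilon$ for all $n>N$.

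The only substantive step is Step 1, that the projection of a Gaussian vector onto the indefinite null cone is almost surely nonzero. The spanning property of the isotropic cone already proved earlier handles this directly. The rest is routine. A minor technicality is that $x^*$ may not be unique. Every minimizer is nonzero on $\{\mathbb{X}\neq0\}$, so measurability of $W$ only requires a measurable selection, which exists because $\mathcal{K}$ is closed and $T$ is continuous.
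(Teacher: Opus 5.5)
Your proof is correct, but it takes a different route from the paper.

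\textbf{The paper's route.} The paper argues quantitatively. It restricts the null cone to the single line $\sqrt{\lambda_1}x_1=\sqrt{\lambda_d}x_d$ (all other coordinates zero). This gives
\[
T(0)-\inf_{x'Hx=0}T(x)\ \ge\ (\ell_2\mathbb{X}_1+\ell_1\mathbb{X}_d)^2,
\]
a noncentral $\chi^2_1$ variable. The paper then argues that $\sum_i\lambda_i x_i^{*2}<C_\epsilon$ forces this variable below $(\Phi^{-1}(1/2+\epsilon/4))^2$, an event of probability at most $\epsilon/2$. The final union bound with $\varepsilon_n$ is the same as yours.

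\textbf{Your route.} You prove only that $W>0$ almost surely. You use the first-order condition along rays of the cone together with the spanning property of the isotropic cone from the proof of Proposition~\ref{thm:impossibility-flat}. You then get the threshold $\delta$ from continuity of measure.

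\textbf{What each approach buys.}
\begin{itemize}
\item Your argument avoids choosing coordinates and computing quantiles. It makes the indefiniteness requirement and the selection/measurability issue explicit; the paper's line construction also silently needs $1\le m<d$.
\item The paper's argument yields an explicit threshold. Because $P(|N(\mu,1)|<q)\le P(|N(0,1)|<q)$, that threshold does not depend on $h$. Your $\delta$ depends on the law of $W$, and hence on $h$.
\end{itemize}

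\textbf{A link between the two.} The paper's step ``by the continuity of $T$'' is most cleanly justified by exactly your ray argument. Since $tx^*$ is feasible for every $t$, we have $\mathbb{X}'x^*=\|x^*\|^2$. Hence
\[
T(0)-T(x^*)=\|x^*\|^2\le W/\min_i\lambda_i,
\]
so one can take $C_\epsilon=\min_i\lambda_i\,(\Phi^{-1}(1/2+\epsilon/4))^2$. Your almost-sure positivity is the qualitative shadow of the identity $\|x^*\|^2=\max_{y'Hy=0,\ \|y\|=1}(\mathbb{X}'y)^2$.
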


\begin{proof}
Step 1. Let $\ell_{1}=\frac{\sqrt{\lambda_{1}}}{\sqrt{\lambda_{1}+\lambda_{d}}},\;\ell_2=\sqrt{1-\ell_1^2}$.
Then
\[
\inf_{x^{\prime}Hx=0}T(x)\leq\inf_{\ell_{1}x_{1}-\ell_{2}x_{d}=0,\;x_{2:d-1}=0}T(x)=T(0)-\left(\ell_2\mathbb{X}_{1}+\ell_{1}\mathbb{X}_{d}\right)^{2}.
\]
By the continuity
of $T(x)$, for all $\epsilon>0$, there is $C_{\epsilon}>0$ such that 
\begin{align*}
P\left(\sum_{i=1}^{d}\lambda_{i}x_{i}^{*2}<C_{\epsilon}\right) & \leq P\left(T(0)-\inf_{x^{\prime}Hx=0}T(x)<\left(\Phi^{-1}(\frac{1}{2}+\frac{\epsilon}{4})\right)^{2}\right)\\
 & \leq P\left(\left(\ell_2\mathbb{X}_{1}+\ell_{1}\mathbb{X}_{d}\right)^{2}<\left(\Phi^{-1}(\frac{1}{2}+\frac{\epsilon}{4})\right)^{2}\right).
\end{align*}
Since $\ell_2\mathbb{X}_{1}+\ell_{1}\mathbb{X}_{d}\sim N\left(\ell_2h_{1}+\ell_{1}h_{d},1\right)$,
the squared term follows a noncentral $\chi_{1}^{2}$, so the probability
is bounded by $\epsilon/2$.

Step 2. Since $\varepsilon_{n}=o_{p}(1)$, there is $N$ such that
for all $n\geq N$, 
\[
P\left(\varepsilon_{n}\geq-\frac{C_{\epsilon}\epsilon}{2}\right)\geq1-\frac{\epsilon}{2}.
\]
Combining with Step 1,
\begin{align*}
P\left(\epsilon\sum_{i=1}^{d}\lambda_{i}x_{i}^{*2}+\varepsilon_{n}>0\right) & \geq P\left(\sum_{i=1}^{d}\lambda_{i}x_{i}^{*2}\geq C_{\epsilon},\varepsilon_{n}>-\frac{C_{\epsilon}\epsilon}{2}\right)\\
 & \geq P\left(\sum_{i=1}^{d}\lambda_{i}x_{i}^{*2}\geq C_{\epsilon}\right)+P\left(\varepsilon_{n}>-\frac{C_{\epsilon}\epsilon}{2}\right)-1\\
 & \geq1-\frac{\epsilon}{2}+1-\frac{\epsilon}{2}-1=1-\epsilon.
\end{align*}
\end{proof}
\begin{lem}
\label{lem:continus=000020T}Let $D(Y)=\inf_{x:x^{\prime}Hx=c}\left\Vert Y-x\right\Vert $
where $Y$ is continuously distributed and $c\geq0$. Then $D(Y)$
is continuously distributed. 
\end{lem}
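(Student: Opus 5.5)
The plan is to show that $D(Y)$ has no atoms, that is, $P(D(Y)=t)=0$ for every $t\ge 0$. I read ``continuously distributed'' for $Y$ as absolutely continuous with respect to Lebesgue measure on $\mathbb{R}^d$. This is the case in every application of the lemma, where $Y$ is a Gaussian vector $\mathbb{Z}+h$ or a nonsingular linear transformation of one. Under this reading it suffices to prove that each level set $L_t=\{y: D(y)=t\}$ has Lebesgue measure zero. Write $S=\{x: x'Hx=c\}$. It is closed and nonempty, since $H$ is nonzero symmetric and $c\ge 0$ (if $S$ were empty, $D\equiv\infty$ and there is nothing to prove). So $D(y)=\operatorname{dist}(y,S)$ is a $1$-Lipschitz function.

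\emph{Case $t=0$.} Here $L_0=S$, the zero set of the polynomial $p(x)=x'Hx-c$. This polynomial is not identically zero because $H\neq 0$. The zero set of a nonzero polynomial on $\mathbb{R}^d$ has Lebesgue measure zero, for instance by induction on $d$ with Fubini. Hence $P(D(Y)=0)=0$.

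\emph{Case $t>0$.} I will use the coarea formula for the Lipschitz function $D$:
\[
\int_{\mathbb{R}^d}\mathbf{1}\{D(y)=t\}\,|\nabla D(y)|\,dy=\int_{\mathbb{R}}\mathcal{H}^{d-1}\big(L_t\cap\{D=s\}\big)\mathbf{1}\{s=t\}\,ds=0 .
\]
The right-hand side vanishes because the integrand is nonzero only at the single point $s=t$. It then remains to check that $|\nabla D|=1$ almost everywhere on $\{D>0\}$. This is the standard property of distance functions. By Rademacher, $D$ is differentiable almost everywhere. At a point $y\notin S$ of differentiability, pick a nearest point $x^\ast\in S$, which exists because $S$ is closed. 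Along the segment from $y$ toward $x^\ast$, $D$ decreases at unit rate, so the directional derivative in direction $(x^\ast-y)/\|x^\ast-y\|$ equals $-1$. Combined with the Lipschitz bound $|\nabla D|\le 1$, this forces $|\nabla D(y)|=1$. Plugging this in gives $\mathrm{Leb}(L_t)=\int \mathbf{1}\{D=t\}\,dy=0$.

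The main point needing care is this unit-gradient property, which is what lets coarea turn ``$\mathcal{H}^{d-1}$-finite slices'' into measure zero. If one prefers to avoid coarea, there is an elementary fallback. The set $\{D\le t\}$ is the union of closed balls of radius $t$ centered on $S$, so every $y\in L_t$ has an exterior ball $B(y',\epsilon)\subset\{D>t\}$ near it and an interior ball $B(x^\ast,t)\subset\{D<t\}$ touching it. Consequently $L_t$ has Lebesgue density at most $1/2$ at each of its points, so $L_t$ is null by the Lebesgue density theorem. Either route completes the proof. The statement for $T$ in \eqref{eq:T} follows by taking $Y=\mathbb{Z}+h$, or $Y$ a nonsingular linear map of it after absorbing $\Sigma^{-1/2}$ into the constraint.
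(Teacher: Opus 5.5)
Your argument is correct. For $t>0$ it takes a different and more careful route than the paper.

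The case $t=0$ is the same in both: the quadric is the zero set of a nonzero polynomial, hence Lebesgue-null. You correctly note that $H\neq 0$ suffices, whereas the paper invokes indefiniteness.

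For $a>0$ the paper argues by contradiction. A level set $\{D=a\}$ of positive measure is closed, and the paper concludes that it ``hence'' contains a ball $B(o,r)$. The point of that ball on the segment from $o$ toward its projection $o'$ onto $S$ is then at distance $a-r<a$ from $S$. The inference from ``closed with positive measure'' to ``contains a ball'' is not valid as stated; fat Cantor sets are counterexamples. It can be rescued here because $\{D=a\}$ is semialgebraic, but closedness alone does not give it.

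Your coarea route needs none of this. It uses only that $D$ is $1$-Lipschitz with $|\nabla D|=1$ a.e.\ off $S$. It therefore shows that the level sets $\{D=t\}$, $t>0$, of the distance to \emph{any} closed set are null, at the cost of invoking Rademacher and the coarea formula.

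Your fallback is exactly the rigorous version of the paper's geometric idea. Replace ``a ball inside the level set'' by ``a Lebesgue density point of it.'' The open ball $B(x^\ast,t)$ about a nearest point lies in $\{D<t\}$ and has $y$ on its boundary, so $L_t$ has upper density at most $1/2$ at each of its points. Only that interior ball is needed. Drop the claim that every $y\in L_t$ has points of $\{D>t\}$ nearby: it is false in general. For $H=I$, $c=R^2$, the origin lies in $\{D=R\}$ and is a strict local maximum of $D$.

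Likewise, $S$ can be empty for nonzero $H$ and $c\geq 0$, for example $H=-I$, $c>0$. This never occurs in the paper's uses, where the constraint set always contains $h$.
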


\begin{proof}
It suffices to show that for all $a\geq0$, $P\left(D(Y)=a\right)=0$.
Let $S=\left\{ x:x^{\prime}Hx=c\right\} $.

Case 1. $a=0$. Here $P(D(Y)=0)=P\left(Y\in S\right)$. Since $H$
is indefinite, $S$ is a variety of dimension at most $d-1$, and
hence $S$ has Lebesgue measure zero. Because $Y$ has a continuous
distribution, so $P(Y\in S)=0$.

Case 2. $a>0$. Suppose, by contradiction, that $P\left(D(Y)=a\right)>0$.
Then the set $S_{a}=\left\{ y:D(y)=a\right\} $ must have positive
Lebesgue measure. Since $D(y)$ is continuous in $y$, $S_{a}$ is
a closed set. Hence, there exists a ball $B(o,r)$ with $a>r>0$ such
that $B(o,r)\subseteq S_{a}$. Let $o^{\prime}$ be the projection
of $o$ onto $S$, i.e. $o^{\prime}\in S$ and $||o-o^{\prime}||=a$.
Choose a point $k\in\partial B(o,r)\cap\overline{oo^{\prime}}$. Such
$k$ exists since $o$ lies inside $\partial B(o,r)$ while $o^{\prime}$
lies outside. By construction,
\[
||k-o^{\prime}||=||o-o^{\prime}||-||o-k||=a-r<a
\]
which contradicts the fact that $k\in B(o,r)\subseteq S_{a}$. Therefore,
no such $a>0$ can exist.
\end{proof}
\begin{lem}
\label{lem:con=000020quantile}Let $\mathbb{Z}\sim N(0,I_{d})$, and
 $\mathcal{H}_{z}$ be a set satisfying $P\left(\mathbb{Z}\in\mathcal{H}_{z}\right)\geq1-\eta$.
Then 
\[
Q\left(\left.\inf_{h^{\prime}x=0}\left\Vert \mathbb{Z}-x\right\Vert ^{2}\right|\mathbb{Z}\in\mathcal{H}_{z};\frac{1-\alpha}{1-\eta}\right)\in\left[Q(\chi_{1}^{2},1-\alpha),\;Q(\chi_{1}^{2},1-\alpha+\eta)\right).
\]
\end{lem}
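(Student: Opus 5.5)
The plan is to reduce the statement to two facts about the scalar statistic $W=\inf_{h'x=0}\|\mathbb{Z}-x\|^2$. First, $W\sim\chi^2_1$. Second, $W$ has a continuous distribution function $F$. The conditional quantile is then pinned down by elementary probability bounds that use $P(\mathbb{Z}\in\mathcal{H}_z)\geq 1-\eta$; in the paper's construction this holds with equality.

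\textbf{Distribution of $W$.} Assume $h\neq 0$; this is the relevant case, since in the applications $h'$ is a nonzero limit of $\hat h_n'H/\|\hat h_n\|$. Then $\{x:h'x=0\}$ is a hyperplane, and the infimum is attained at the orthogonal projection of $\mathbb{Z}$ onto it. Hence
\[
W=\left(\frac{h'\mathbb{Z}}{\|h\|}\right)^2\sim\chi^2_1 .
\]

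\textbf{Lower bound.} Let $p=P(\mathbb{Z}\in\mathcal{H}_z)$. Write $q=Q\bigl(W\mid \mathbb{Z}\in\mathcal{H}_z;\tfrac{1-\alpha}{1-\eta}\bigr)$ for the conditional quantile, and recall that $F$ is the continuous $\chi^2_1$ CDF. By definition of $q$,
\[
P(W\le q,\ \mathbb{Z}\in\mathcal{H}_z)\ \ge\ \frac{1-\alpha}{1-\eta}\,p .
\]
The left-hand side is at most $F(q)$. When $p=1-\eta$ this gives $F(q)\ge 1-\alpha$, and therefore $q\ge Q(\chi^2_1,1-\alpha)$. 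If $p$ exceeds $1-\eta$, the right-hand side is even larger, so the same conclusion holds.

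\textbf{Upper bound.} For any $t$,
\[
P(W\le t,\ \mathbb{Z}\in\mathcal{H}_z)\ \ge\ F(t)-(1-p).
\]
Take $t=Q(\chi^2_1,1-\alpha+\eta)$, so that $F(t)=1-\alpha+\eta$. With $p=1-\eta$, the right-hand side equals $1-\alpha$. The conditional CDF at $t$ is therefore at least $(1-\alpha)/(1-\eta)$, which yields $q\le t$.

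\textbf{Main obstacle: strictness.} The upper bound in the statement is strict, so I must rule out $q=t$. Equality in the bound above would require that the event $\{\mathbb{Z}\notin\mathcal{H}_z\}$ be almost surely contained in $\{W\le t\}$. I would then show that at any level slightly below $t$, for instance $t-\delta$ with $\delta$ small, the conditional CDF is still at least $(1-\alpha)/(1-\eta)$. That would contradict the definition of $q$ as the smallest such point, but only under a mild nondegeneracy condition on $\mathcal{H}_z$.

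\textbf{Verifying nondegeneracy for the footnote's set.} Take $\mathcal{H}_z=\{z'z\le Q(\chi^2_d,1-\eta)\}$. Its complement, the exterior of a ball, has positive probability of $\{W>t\}$, because $W$ is a projection onto one coordinate direction and can be large outside the ball. So the complement cannot lie almost surely inside $\{W\le t\}$. Hence $P(W\le t-\delta\mid\mathcal{H}_z)$ stays above the threshold for small $\delta$, and by continuity of $F$ the inequality is strict.

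\textbf{Fallback.} If strictness cannot be established for a general $\mathcal{H}_z$, I would state the upper bound as non-strict. I would note that strictness holds for the ball $\mathcal{H}_z$ used in the paper, which is what Theorem \ref{thm:d>2} needs.
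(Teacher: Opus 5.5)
Your proof is correct and follows the paper's proof. Both show $W=\inf_{h'x=0}\|\mathbb{Z}-x\|^2=(h'\mathbb{Z})^2/\|h\|^2\sim\chi^2_1$. Both get the lower bound by dropping the event $\{\mathbb{Z}\in\mathcal{H}_z\}$ from the numerator of the conditional CDF. Both get the upper bound from the Bonferroni inequality at $t=Q(\chi^2_1,1-\alpha+\eta)$. Evaluating the lower bound at $q$ itself, via right-continuity, is a slightly cleaner way to turn a CDF bound into a quantile bound than the paper's evaluation at $Q(\chi^2_1,1-\alpha)$.

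\textbf{Strictness.} The paper writes the Bonferroni step as a strict inequality without justification. You correctly note that it is strict exactly when $P(W>t,\ \mathbb{Z}\notin\mathcal{H}_z)>0$. That condition can fail under the lemma's stated hypotheses. Take $\mathcal{H}_z=\{W>Q(\chi^2_1,\eta)\}$. Then $P(\mathbb{Z}\in\mathcal{H}_z)=1-\eta$, and the conditional CDF equals $(F(s)-\eta)/(1-\eta)$ for $s\ge Q(\chi^2_1,\eta)$. So the conditional $\frac{1-\alpha}{1-\eta}$-quantile is exactly $t$, and the half-open interval is false for general $\mathcal{H}_z$.

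Your fallback is therefore the right statement. The upper bound is closed in general and strict for the footnote's ball. There $z'z\ge W$ gives $\{W>\max(t,Q(\chi^2_d,1-\eta))\}\subseteq\{W>t,\ \mathbb{Z}\notin\mathcal{H}_z\}$, an event of positive probability.

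\textbf{Small omission.} You carry out the upper bound only for $p=1-\eta$, but it extends to $p\ge 1-\eta$. The conditional CDF at $t$ is at least $1-(\alpha-\eta)/p\ge(1-\alpha)/(1-\eta)$ when $\eta\le\alpha$. The inequality is strict when $p>1-\eta$ and $\eta<\alpha$, which gives strictness of the upper bound for free in that case.
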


\begin{proof}
To show the upper bound,
\begin{align*}
 & P\left(\left.\inf_{h^{\prime}x=0}\left\Vert \mathbb{Z}-x\right\Vert ^{2}\leq Q(\chi_{1}^{2},1-\alpha+\eta)\right|\mathbb{Z}\in\mathcal{H}_{z}\right)\\
= & \frac{P\left(\inf_{h^{\prime}x=0}\left\Vert \mathbb{Z}-x\right\Vert ^{2}\leq Q(\chi_{1}^{2},1-\alpha+\eta),\mathbb{Z}\in\mathcal{H}_{z}\right)}{P\left(\mathbb{Z}\in\mathcal{H}_{z}\right)}\\
> & \frac{P\left(\inf_{h^{\prime}x=0}\left\Vert \mathbb{Z}-x\right\Vert ^{2}\leq Q(\chi_{1}^{2},1-\alpha+\eta)\right)+P\left(\mathbb{Z}\in\mathcal{H}_{z}\right)-1}{P\left(\mathbb{Z}\in\mathcal{H}_{z}\right)}\\
= & \frac{1-\alpha+\eta+1-\eta-1}{1-\eta}=\frac{1-\alpha}{1-\eta}.
\end{align*}
To show the lower bound, 
\begin{align*}
 & P\left(\left.\inf_{h^{\prime}x=0}\left\Vert \mathbb{Z}-x\right\Vert ^{2}\leq Q(\chi_{1}^{2},1-\alpha)\right|\mathbb{Z}\in\mathcal{H}_{z}\right)\\
= & \frac{P\left(\inf_{h^{\prime}x=0}\left\Vert \mathbb{Z}-x\right\Vert ^{2}\leq Q(\chi_{1}^{2},1-\alpha),\mathbb{Z}\in\mathcal{H}_{z}\right)}{P\left(\mathbb{Z}\in\mathcal{H}_{z}\right)}\\
\leq & \frac{P\left(\inf_{h^{\prime}x=0}\left\Vert \mathbb{Z}-x\right\Vert ^{2}\leq Q(\chi_{1}^{2},1-\alpha)\right)}{P\left(\mathbb{Z}\in\mathcal{H}_{z}\right)}=\frac{1-\alpha}{1-\eta}.
\end{align*} 
\end{proof}
\begin{lem}
\label{lem:y=000020H=000020x}Let $H$ be a $d\times d$ full rank indefinite matrix.
For all $d\times 1$ vector $x$, there exists $d\times1$ vector $y$ such that $y^{\prime}Hx=0$ and $y^{\prime}Hy=-\text{sign}(x^{\prime}Hx)$.
\end{lem}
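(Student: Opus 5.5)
Since \(H\) is symmetric, I will diagonalize it as \(H = Q\Lambda Q'\) with \(Q\) orthogonal and \(\Lambda = \mathrm{diag}(\lambda_1,\dots,\lambda_d)\). Full rank gives \(\lambda_i \neq 0\) for all \(i\), and indefiniteness gives at least one positive and one negative eigenvalue. The conditions \(y'Hx = 0\) and \(y'Hy = -\text{sign}(x'Hx)\) are invariant under the change of variables \(u = Q'x\), \(v = Q'y\). They therefore become \(v'\Lambda u = 0\) and \(v'\Lambda v = -\text{sign}(u'\Lambda u)\), so it suffices to treat the diagonal case. Let \(P = \{i : \lambda_i > 0\}\) and \(N = \{i : \lambda_i < 0\}\); both are nonempty.

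The plan is to split by the sign of \(s := u'\Lambda u\).

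Case \(s > 0\). I will look for \(v\) in the subspace \(\{v : v'\Lambda u = 0\}\) on which the form \(v'\Lambda v\) takes a negative value. The natural choice is to build \(v\) from a vector supported on \(N\). Write \(u = u_P + u_N\) by coordinate blocks, and set
\[
w = \big(-u_P'\Lambda u_P\big)^{-1}\ldots
\]
More concretely: the \(\Lambda\)-orthogonal complement of \(u\), namely \(u^{\perp_\Lambda}\), is a hyperplane, since \(\Lambda u \neq 0\) by full rank. Because \(u'\Lambda u > 0\), \(u \notin u^{\perp_\Lambda}\), so \(\mathbb{R}^d = \mathrm{span}(u) \oplus u^{\perp_\Lambda}\) is a \(\Lambda\)-orthogonal decomposition. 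By Sylvester's law of inertia, the number of negative eigenvalues of \(\Lambda\) restricted to \(u^{\perp_\Lambda}\) equals \(|N| \geq 1\), because \(\mathrm{span}(u)\) contributes only a positive direction. Hence there exists \(v \in u^{\perp_\Lambda}\) with \(v'\Lambda v < 0\), and rescaling gives \(v'\Lambda v = -1\).

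Case \(s < 0\). This is symmetric: the restriction to \(u^{\perp_\Lambda}\) has \(|P| \geq 1\) positive directions. Picking such a \(v\) and rescaling gives \(v'\Lambda v = +1\).

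Case \(s = 0\). Here \(\text{sign}(0) = 0\), so I need \(v'\Lambda u = 0\) and \(v'\Lambda v = 0\), and the trivial choice \(v = 0\) works. If a nonzero \(y\) were desired, which is what Step 1, Case 2 of Lemma~\ref{lem:Tn=000020=00003D=000020T+o(1)} actually uses (there only \(\xi_n = 0\) is relevant when \(x_n^{*\prime}Hx_n^*=0\)), I could take \(v = u\) when \(u \neq 0\), or any isotropic vector, which exists by indefiniteness.

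Finally I set \(y = Qv\).

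The main obstacle is justifying the inertia claim on the hyperplane cleanly. To avoid invoking Sylvester, I will give an explicit argument. Pick \(e\) with \(e'\Lambda e < 0\), for instance a standard basis vector in \(N\) when \(s > 0\), and project it: \(v = e - \frac{e'\Lambda u}{u'\Lambda u}\, u\). Then \(v'\Lambda u = 0\) and
\[
v'\Lambda v = e'\Lambda e - \frac{(e'\Lambda u)^2}{u'\Lambda u} < 0,
\]
since \(u'\Lambda u > 0\). This is direct and also handles \(s < 0\) with \(e\) chosen in \(P\).
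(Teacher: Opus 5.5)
Your proof is correct, but it uses a different construction from the paper's. The paper works in eigen-coordinates $H=\mathrm{diag}(\lambda_1,\dots,\lambda_m,-\lambda_{m+1},\dots,-\lambda_d)$ and treats $x'Hx>0$, leaving the other sign to symmetry. It then writes $y$ down coordinatewise. If the negative block of $x$ vanishes, it takes $y=e_d$. Otherwise, taking $x_d\neq 0$ after reordering, it sets $y=(x_1,\dots,x_m,0,\dots,0,y_d)$ with $y_d=\sum_{i\le m}\lambda_i x_i^2/(\lambda_d x_d)$, and checks negativity via $\sum_{i\le m}\lambda_i x_i^2-\lambda_d x_d^2\ge x'Hx$.

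Your final paragraph instead does a single $H$-Gram--Schmidt step, $v=e-\frac{e'Hx}{x'Hx}\,x$. The identity $v'Hv=e'He-(e'Hx)^2/(x'Hx)$ then settles both signs of $x'Hx$ with one formula and no case split on the support of $x$. This is coordinate-free and uses only the existence of some $e$ with $e'He$ of sign opposite to $x'Hx$. So you can simply delete the diagonalization, the Sylvester paragraph, and the abandoned definition of $w$; full rank is never actually needed. The paper's route offers only a fully explicit formula in eigen-coordinates, at the cost of the support case split. A side benefit of yours is that, for fixed $e$, $y$ depends continuously on $x$ on $\{x'Hx\neq 0\}$.

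One caveat on your aside about Step~1, Case~2 of Lemma~\ref{lem:Tn=000020=00003D=000020T+o(1)}. Neither construction bounds $\|y\|$, and none can. For $H=\mathrm{diag}(1,-1)$ and $x=(a,b)$ with $s=a^2-b^2>0$, the two conditions force $y=\pm(b,a)/\sqrt{s}$, so $\sqrt{s}\,\|y\|=\|x\|$. Hence this lemma alone does not give $\xi_n y_n=o_p(1)$ there when $x_n^*$ is close to a nonzero isotropic vector. That step needs a different perturbation onto the cone, for example rescaling the positive- or negative-eigenvalue block of $x_n^*$, which moves it by only $O(\xi_n)$.
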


\begin{proof}
If $x^{\prime}Hx=0$, the conclusion holds trivially with $y=0_{d}$.
WLOG, assume that $x^{\prime}Hx>0$. First assume that $H$ is diagonal.
If not, write $H=P^{\prime}\Lambda P$, where $\Lambda$ is diagonal,
$\tilde{x}=Px$ and $\tilde{y}=Py$; the same argument then applies.
Without loss of generality, let 
\[
H=\text{diag}(\lambda_{1},...,\lambda_{m},-\lambda_{m+1},...,-\lambda_{d}),\quad\lambda_{1},...,\lambda_{d}>0.
\]
Case 1. If $x_{m+1},...,x_{d}=0$, let $y=(0_{d-1},1)$. It is easy to verify that $y^\prime Hx=0$ and $y^\prime Hy=-\lambda_d<0.$ Case 2. Suppose
$x_{d}\neq0$, let $y=(x_{1},...,x_{m},0_{d-m-1},y_{d})$, $y_{d}=\frac{\sum_{i=1}^{m}\lambda_{i}x_{i}^{2}}{\lambda_{d}x_{d}}.$
Then
\[
y^{\prime}Hx=\sum_{i=1}^{m}\lambda_{i}x_{i}^{2}-\lambda_{d}x_{d}y_{d}=\sum_{i=1}^{m}\lambda_{i}x_{i}^{2}-\lambda_{d}x_{d}\frac{\sum_{i=1}^{m}\lambda_{i}x_{i}^{2}}{\lambda_{d}x_{d}}=0,
\]
\begin{align*}
y^{\prime}Hy & =\sum_{i=1}^{m}\lambda_{i}x_{i}^{2}-\lambda_{d}y_{d}^{2}=-\frac{\sum_{i=1}^{m}\lambda_{i}x_{i}^{2}}{\lambda_{d}x_{d}^{2}}\left(\sum_{i=1}^{m}\lambda_{i}x_{i}^{2}-\lambda_{d}x_{d}^{2}\right)\\
 & \leq-\frac{\sum_{i=1}^{m}\lambda_{i}x_{i}^{2}}{\lambda_{d}x_{d}^{2}}x^{\prime}Hx<0.
\end{align*}
Then conclusion holds with a simple normalization of $y$. 
\end{proof}

\end{document}